\documentclass[11pt]{article}

\usepackage{fullpage}
\usepackage{amsmath,amsthm,amsfonts,dsfont}
\usepackage{amssymb,latexsym,graphicx}
\usepackage{palatino}
\usepackage{mathpazo}
\usepackage{stmaryrd}
\usepackage{mathtools}
\usepackage{hyperref}
\usepackage[lined,boxed]{algorithm2e}
\usepackage{subfigure}

\newtheorem{theorem}{Theorem}[section]

\newtheorem{proposition}[theorem]{Proposition}
\newtheorem{lemma}[theorem]{Lemma}

\newtheorem{claim}[theorem]{Claim}

\newtheorem{corollary}[theorem]{Corollary}
\newtheorem{definition}[theorem]{Definition}

\newcommand{\R}{\ensuremath{\mathbb{R}}}
\newcommand{\Z}{\ensuremath{\mathbb{Z}}}

 \newcommand{\eps}{\varepsilon} 
\renewcommand{\epsilon}{\varepsilon}

\newcommand{\eqdef}{\mathbin{\stackrel{\rm def}{=}}}

\renewcommand{\vec}[1]{\ensuremath{\mathbf{#1}}}

\newcommand{\basis}{\ensuremath{\mathbf{B}}}
\newcommand{\problem}[1]{\mbox{#1}\xspace}
\newcommand{\alg}[1]{\textup{\textsc{#1}}}
\newcommand{\poly}{\mathrm{poly}}

\DeclareMathOperator*{\expect}{\mathbb{E}}
\newcommand{\grad}{\nabla}
\newcommand{\CVP}[2]{\ensuremath{#1\text{-}\problem{CVP}^{#2}}}
\newcommand{\SVP}[1]{\ensuremath{#1\text{-}\problem{SVP}}}
\newcommand{\CVPP}[2]{\ensuremath{#1\text{-}\problem{CVPP}^{#2}}}
\newcommand{\BDD}[1]{\ensuremath{#1\text{-}\problem{BDDP}}}

\newcommand{\biglat}{\ensuremath{\mathcal{N}}}
\newcommand{\smalllat}{\ensuremath{\mathcal{M}}}
\newcommand{\myparagraph}[1]{\paragraph{#1.}}
\newcommand{\dmax}{\delta_\mathsf{max}}
\newcommand{\dmaxdef}{\ensuremath{\dmax = \frac{1}{2}-\frac{2}{\pi s_\eps^2 }}}

\newcommand{\OO}{\alg{A}}
\newcommand{\Q}{\mathbb{Q}}

\newcommand{\sepsdef}{\ensuremath{s_\eps =  \big(\frac{1}{\pi} \log \frac{2 (1+\eps)}{\eps}\big)^{1/2}}}

\newcommand{\pr}[2]{\langle{#1, #2}\rangle}
\makeatletter
\def\imod#1{\allowbreak\mkern8mu({\operator@font mod}\,\,#1)}
\makeatother

\DeclareMathOperator{\sgn}{sign}

\newcommand{\lat}{\mathcal{L}}
\newcommand{\gs}[1]{\ensuremath{\widetilde{#1}}}
\DeclareMathOperator{\dist}{dist}
\DeclareMathOperator{\spn}{span}
\DeclarePairedDelimiter\inner{\langle}{\rangle}
\DeclarePairedDelimiter\abs{\lvert}{\rvert}
\DeclarePairedDelimiter\set{\{}{\}}
\DeclarePairedDelimiter\round{\lfloor}{\rceil}
\DeclarePairedDelimiter\floor{\lfloor}{\rfloor}
\DeclarePairedDelimiter\ceil{\lceil}{\rceil}
\DeclarePairedDelimiter\length{\lVert}{\rVert}
\newif\ifnotes\notesfalse

\ifnotes
\usepackage{color}
\definecolor{mygrey}{gray}{0.50}
\newcommand{\notename}[2]{{\textcolor{mygrey}{\footnotesize{\bf (#1:} {#2}{\bf ) }}}}
\newcommand{\noteswarning}{{\begin{center} {\Large WARNING: NOTES ON}\end{center}}}

\else

\newcommand{\notename}[2]{{}}
\newcommand{\noteswarning}{{}}

\fi

\begin{document}

\title{On the Closest Vector Problem with a Distance Guarantee}
\author{
Daniel Dadush\thanks{Centrum Wiskunde \& Informatica (CWI), Amsterdam.}\\
\texttt{dadush@cwi.nl}
\and
Oded Regev\thanks{Courant Institute of Mathematical Sciences, New York
 University.}
~\thanks{Supported by the National Science Foundation (NSF) under Grant No.~CCF-1320188. Any opinions, findings, and conclusions or recommendations expressed in this material are those of the authors and do not necessarily reflect the views of the NSF.}\\
\and
Noah Stephens-Davidowitz\footnotemark[2]\\
\texttt{noahsd@cs.nyu.edu}
}
\date{}
\maketitle

\noteswarning

\begin{abstract}
We present a substantially more efficient variant, both in terms of running time and
size of preprocessing advice, of the algorithm by Liu, Lyubashevsky, and Micciancio~\cite{LiuLM06} for solving \problem{CVPP} (the preprocessing version of the Closest Vector Problem, \problem{CVP}) with a distance guarantee. 
For instance, for any $\alpha < 1/2$, our algorithm finds the (unique) closest lattice point for any target
point whose distance from the lattice is at most $\alpha$ times the length of the shortest nonzero lattice vector, requires as preprocessing advice only $N \approx \widetilde{O}(n \exp(\alpha^2 n /(1-2\alpha)^2))$ vectors, and runs in time $\widetilde{O}(nN)$. 

As our second main contribution, we present reductions showing that it suffices to solve 
\problem{CVP}, both in its plain and preprocessing versions, 
when the input target point is within some bounded distance of the lattice. 
The reductions are based on ideas due to Kannan~\cite{Kannan87} and a recent sparsification technique~\cite{DadushK13}.
Combining our reductions with the LLM algorithm gives an approximation factor of $O(n/\sqrt{\log n})$ for search \problem{CVPP}, improving on the previous best of $O(n^{1.5})$ due to Lagarias, Lenstra, and Schnorr~\cite{hkzbabai}. When combined with our improved algorithm we obtain, somewhat surprisingly, that only $O(n)$ vectors of preprocessing advice are sufficient to solve \problem{CVPP} with (the only slightly worse) approximation factor of $O(n)$. 
\end{abstract}

\section{Introduction}

A \emph{lattice} is the set of all integer combinations of $n$ linearly independent vectors $\vec{v}_1,\ldots,\vec{v}_n$ in $\R^n$. These vectors are known as a
\emph{basis} of the lattice.  In the last couple of decades, lattices became a central object of investigation in theoretical computer science due to
their wide range of algorithmic and cryptographic applications.

The two most fundamental lattice problems are the Shortest Vector Problem (SVP) and the Closest Vector Problem (CVP). Given an $n$-dimensional lattice $\lat$
(specified using an arbitrary basis), the \problem{SVP} is to find a shortest non-zero vector in $\lat$, and, given in addition a target point $\vec{t} \in \R^n$,
the \problem{CVP} is to find a closest vector to $\vec{t}$ in $\lat$. For their approximation versions, the goal is to compute solutions whose length or
distance is within some factor of optimal, and in the associated decisional versions, one must estimate the length or distance to within the desired
factor. 

From a computational complexity point of view, lattice problems are quite fascinating. For the nearly exponential approximation factor of $2^{O(n \log
\log n/\log n)}$, efficient algorithms are known~\cite{LLL,Babai86,SchnorrBKZ,AjtaiSieveSVP,MV13}. 
On the other hand, for solving the exact problems (or even for approximating to within $\poly(n)$ factors)
the best known algorithms run in time $2^{O(n)}$~\cite{AjtaiSieveSVP,MV13}.
It is known that for some $c>0$,
approximating \problem{CVP} to within $n^{c/\log \log n}$ is NP-hard (see~\cite{DinurKS98} as well as~\cite{KhotChapter} and references therein). 
Under reasonable complexity assumptions, \problem{SVP} is also known to be hard for the same approximation factor~\cite{Micciancio01svp,Khot,HavivR12}.  Finally, for approximation factor $\sqrt{n}$ both problems are known
to be in NP$\cap$coNP and hence unlikely to be NP-hard~\cite{GoldreichGoldwasser,AharonovR04}. For an introduction to the area see,
e.g.,~\cite{MicciancioBook,RegevLLL07}.

In this paper we also consider a natural variant of \problem{CVP} known as the \emph{Closest Vector Problem with Preprocessing (\problem{CVPP})}. 
The motivation comes from applications in coding theory and cryptography where the lattice is often fixed once
and for all, and the input only consists of the target point $\vec{t}$. In \problem{CVPP}, the algorithm
is allowed to spend an unlimited amount of time \emph{preprocessing} the given lattice and output
at the end a polynomial-size description of the lattice. Then, given that description and a target point 
$\vec{t}$, our goal is to efficiently solve $\problem{CVP}(\lat,\vec{t})$. As usual, one can consider
either the search or the decision versions. 

The computational hardness of \problem{CVPP} was investigated in a sequence of works \cite{MicciancioCVPP,FeigeMicciancio,Regev03B,AlekhnovichKKV11},
culminating in a hardness factor of $2^{\log^{1-\eps} n}$ for any $\eps>0$ by Khot, Popat, and Vishnoi under reasonable complexity assumptions~\cite{KhotPV12}. %
Behind the latest two hardness results is a preprocessing version of the PCP theorem.

The situation in terms of positive results, which is the focus of this work, is even more interesting.  It follows from the early work of Lagarias,
Lenstra, and Schnorr~\cite{hkzbabai} on so-called Korkine-Zolotarev bases that there exists an $n^{3/2}$ approximation algorithm for \problem{CVPP}.
Somewhat surprisingly, prior to this work, their algorithm was still the best known approximation algorithm for \problem{CVPP}. 

Improved algorithms were known only for the \emph{decision} variant of \problem{CVPP} in which the task is to approximate the distance of the target
point to the lattice.  An $O(n)$ approximation algorithm was given in~\cite{Regev03B} and then improved by Aharonov and Regev~\cite{AharonovR04}
to an $O(\sqrt{n/\log n})$ approximation algorithm, a natural approximation factor that seems very difficult to beat. 
We are therefore in the (somewhat absurd!) situation that we know that there is a close
vector but we somehow can't find it! We note that an equivalence between the search and
decision versions of \problem{CVP} holds for the exact case~\cite{MicciancioBook}, but is not known to hold for the approximate case.

Since the latter algorithm is very natural and closely related to our work, we describe it here briefly. 
The main idea is to define for any lattice $\lat\subset \R^n$ the \emph{periodic Gaussian function} $f:\R^n \to \R^+$, given by
\begin{equation}
\label{eq:periodic-gaussian}
f(\vec{t}) = \frac{\rho(\lat + \vec{t})}{\rho(\lat)},
\end{equation}
where $\rho(A)=\sum_{\vec{x}\in A}\exp(-\pi\|\vec{x}\|^2)$.
See Figure~\ref{fig:periodicgauss} for an illustration. The algorithm now follows from two observations.  The first is that for points $\vec{t}$ at
distance greater than $\sqrt{n}$ from the lattice, $f(\vec{t})$ is essentially zero, whereas for $\vec{t}$ at distance less than $\sqrt{\log n}$,
$f(\vec{t})$ is non-negligible, so being able to compute $f$ would suffice to solve the decision problem.  The second crucial idea is that the function $f$, despite being defined in terms of a sum over infinitely many
lattice points, can be approximated to within any $\pm 1/\poly(n)$ by a function
with a polynomial-size circuit.  \emph{Finding} that circuit seems hard,
but since it only depends on the lattice, we can do it in the preprocessing phase. 
To show that such an estimator exists, they first observe that the Poisson summation formula gives the identity
\begin{equation}
\label{eq:poisson}
f(\vec{t}) = \expect_{\vec{w} \sim D_{\lat^*}}[\cos(2\pi \inner{\vec{w}, \vec{t}})]
\; ,
\end{equation}
where $\vec{w}$ is a vector of the dual lattice $\lat^*$ sampled from $D_{\lat^*}$, the so-called \emph{discrete Gaussian distribution} over $\lat^*$.
This naturally leads to the definition of the estimator 
\begin{equation}
\label{eq:ar-estimator}
f_W(\vec{t}) \eqdef \frac{1}{N} \sum_{i=1}^N \cos(2\pi \inner{\vec{w}_i,\vec{t}})  \; ,
\end{equation}
where $W=(\vec{w}_1,\dots,\vec{w}_N) \in \lat^*$ are i.i.d.\ samples from $D_{\lat^*}$, which one 
can show satisfies $f_W \approx f$ with high probability over the choice of $W$ assuming $N$ is a large enough $\poly(n)$. 
Once the vectors in $W$ are given as preprocessing advice, computing $f_W$ is clearly efficient. This completes the description of the decision \problem{CVPP} algorithm from~\cite{AharonovR04}.

\begin{figure}
\begin{centering}
\includegraphics[width=0.4\textwidth]{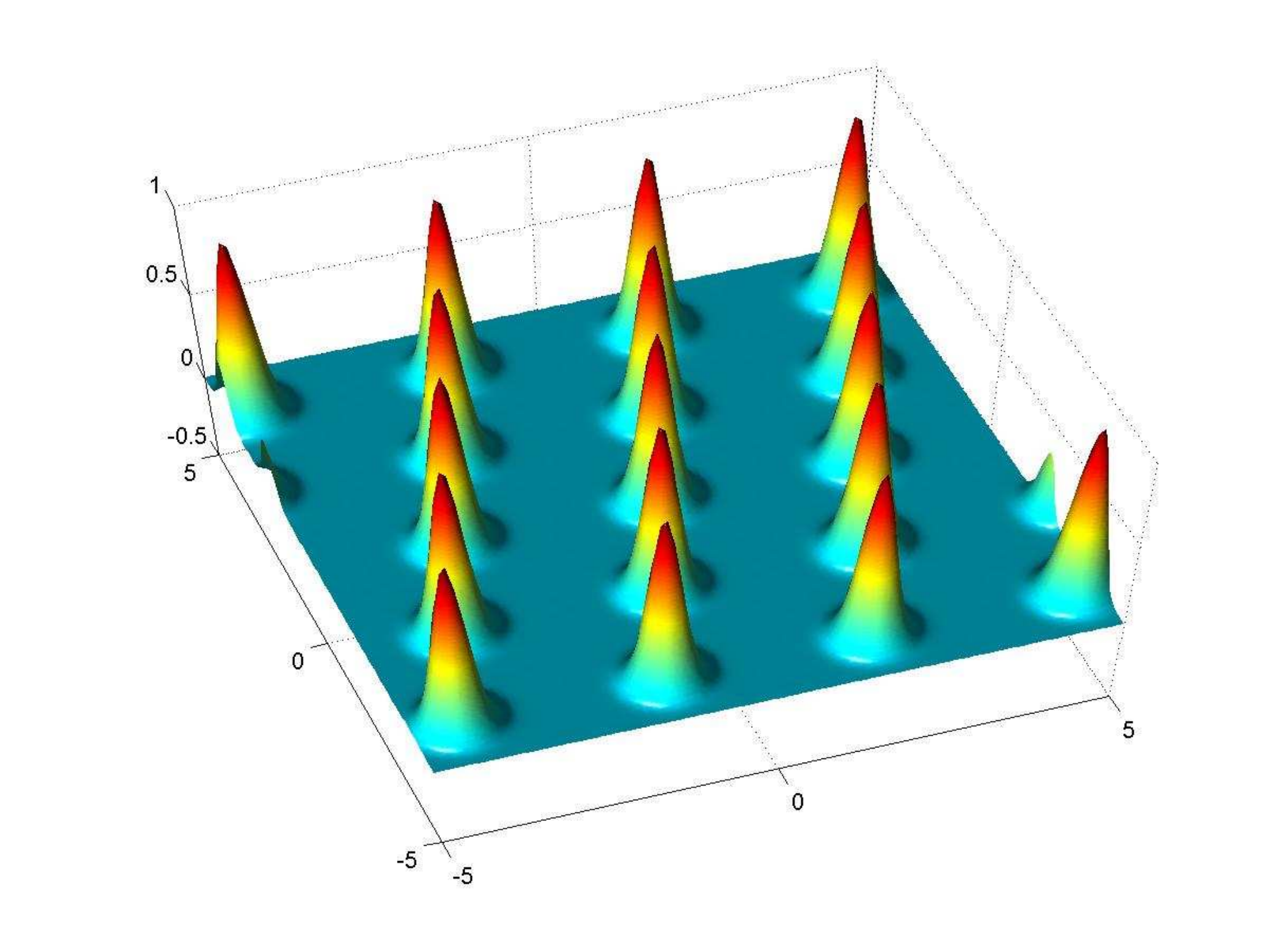}
\caption{The periodic Gaussian function}
\label{fig:periodicgauss}
\par\end{centering}
\end{figure}

Moving on to the search problem, a natural approach 
is to perform some sort of hill-climbing or gradient ascent on the periodic Gaussian function $f$ (using our estimator $f_W$) starting from the target
point. As can be seen in Figure~\ref{fig:periodicgauss}, $f$ attains its maxima in lattice points, and so one would expect this process to converge to the nearest lattice point. 
Indeed, this is the approach
followed by Liu, Lyubashevsky, and Micciancio~\cite{LiuLM06}: they showed (improving on earlier work of
Klein~\cite{Klein00}) how, given the estimator $f_W$, to efficiently 
find the nearest lattice point to any target that is within distance
$O(\sqrt{\log n / n}) \cdot \lambda_1(\lat)$ of the lattice, where
$\lambda_1(\lat)$ denotes the length of the shortest non-zero vector in $\lat$.
Notice, however, that this falls short of solving \problem{CVPP} since the algorithm is only guaranteed to work for target points that are close to the lattice. This problem is known as the \emph{Bounded Distance Decoding problem} (\problem{BDD}), or \problem{BDDP} in its preprocessing version. 

Extending these ideas to the search version of \problem{CVPP}, or even just to \problem{BDDP} for a larger decoding radius, has proved to be elusive. The bound $O(\sqrt{\log n / n}) \cdot \lambda_1(\lat)$ arises as a result of the following tension.
On one hand, we would like to choose the width of the Gaussians in $f$ as wide as possible in order to increase the radius in which $f$ is detectable and in which we can apply gradient ascent. On the other hand, making them too wide causes ``interference'' between the various peaks so we no longer have as clean a picture as in Figure~\ref{fig:periodicgauss}. 
We demonstrate this interference in Section~\ref{sec:localmaxima} by presenting 
a simple example in which $f$ has a local maximum at distance $\lambda_1/\sqrt{2}$ 
from the lattice whose value is exponentially close to
the global maximum of $1$.

\subsection{Our contributions}

\myparagraph{Solving \problem{BDDP} by hill climbing} 
Our main technical contribution, given in Sections~\ref{sec:bddalg} and~\ref{sec:bddproof}, 
is an improvement of the hill-climbing algorithm of LLM. 
While the basic approach is the same, our algorithm uses a more
natural gradient ascent procedure, compared with LLM's ``discrete'' version. 
Namely, at each step we replace the current point $\vec{t}$ with an
approximation of 
\begin{equation}
\label{eq:onestepofgrad}
\vec{t} + \grad f(\vec{t}) / (2 \pi f(\vec{t})) \; .
\end{equation}
Letting $\vec{y}$ be the closest lattice point to $\vec{t}$
and ignoring the interference coming from other peaks, we can think of $f$ as 
$\exp(-\pi\|\vec{t}-\vec{y}\|^2)$, in which case~\eqref{eq:onestepofgrad} is easily
seen to equal $\vec{y}$, our desired output. 
For comparison, LLM uses small axis-aligned steps, replacing $\vec{t}$ with $\vec{t} \pm \delta \vec{e}_i$ for some $\delta>0$ and $i \in [n]$.
Combining our more natural algorithm with a rather detailed analysis of the periodic Gaussian function $f$ (which is of independent interest, see Section~\ref{sec:boundsongaussian}) and of its estimator $f_W$ (Section~\ref{sec:estimator}), we obtain improvements on several fronts. 

Firstly, we are able in some cases to extend the decoding radius. Namely, instead of 
$\sqrt{(\log n)/ n} \cdot \lambda_1(\lat)$, we can handle targets at distance of up to $\Omega(\sqrt{\log (1/\epsilon)}/\eta_\epsilon(\lat^*))$ for $\epsilon
= 1/\poly(n)$, or slightly above the inverse of the smoothing parameter of the dual lattice (see Section~\ref{sec:prelims-dgs-smoothing} for the definition). This is
never worse and is sometimes significantly better than the bound in LLM. For instance, already for $\Z^n$, we get distance
$\Omega(1)$, which is a constant factor of $\lambda_1(\lat)$, compared with $\sqrt{(\log n)/ n}$ in LLM. 
This improvement is a result of our refined analysis of $f$, and highlights the fact that $1/\eta_\epsilon(\lat^*)$ is the right measure of the interference between the peaks of $f$. 

A second improvement is in the size of the advice required from preprocessing, which apart from being inherently interesting, is a good proxy for the
efficiency of the algorithm. In LLM, the advice consisted of an unspecified polynomial number of dual lattice vectors. In our algorithm, we require
only $O(n \log(1/\eps)/\sqrt{\eps})$ dual lattice vectors. 

Third, we show that our gradient ascent
converges in just two steps (after which we apply a simple rounding procedure) compared to $\poly(n)$ steps for LLM. In both algorithms, the
time complexity of each step is $O(n)$ times the number of preprocessing vectors, which is also significantly lower in our algorithm.
This fast convergence is due to the fact that a single step of our algorithm reduces the distance to the nearest lattice point by at least a constant factor, starting from any target within the decoding radius, and it reduces this distance by a polynomial factor when the target is closer by a constant factor. In comparison, the LLM algorithm reduces this distance by a factor of only $1-1/\poly(n)$.

Finally, we note that our hill-climbing algorithm is quite interesting also
in the regime of superpolynomial running time, and provides a smooth tradeoff between running time and
decoding radius. For instance, by an appropriate setting of parameters, we obtain for any $\alpha<1/2$ an algorithm 
that can handle targets at distance up to $\alpha \lambda_1(\lat)$ using 
$N\approx  \widetilde{O}(n \exp(\alpha^2n/(1-2\alpha)^2))$ vectors as advice and runs in time $\widetilde{O}(nN)$. (See Corollary~\ref{cor:bddparams} for the precise statement.)

\myparagraph{Reducing \problem{CVP} to \problem{CVP} on close targets}
In our second main contribution, appearing in Sections~\ref{sec:cvppbdd} and~\ref{sec:sparsification}, we show that in order to solve either \problem{CVP} or \problem{CVPP}, \emph{it suffices to answer queries on target points that are close to the lattice}. 

In Section~\ref{sec:cvppbdd} we focus on the preprocessing setting. We show in Theorem~\ref{thm:cvpptopromise} that for any non-increasing function $\alpha(n) > 0$, in order to solve $\sqrt{n}/(2\alpha(n))$-approximate \problem{CVPP} it suffices to answer queries within distance $\alpha \lambda_1(\lat)$. By combining this reduction with the LLM algorithm (or our improved algorithm), we
immediately obtain an $O(n/\sqrt{\log n})$ approximation algorithm for search \problem{CVPP}, improving on Lagarias et al.'s
algorithm~\cite{hkzbabai}. In terms of techniques, we closely follow Kannan's idea~\cite{Kannan87} of looking for a projection of the lattice in which the target point is relatively close to the lattice. 

By refining the ideas used in Theorem~\ref{thm:cvpptopromise}, we give in 
Theorem~\ref{thm:mastertheorem} 
a similar reduction with the additional property
that it incurs almost no blowup in the amount of preprocessing advice needed. 
Combining this reduction (as it appears in Corollary~\ref{cor:cvpptobdd}) 
with our improved \problem{BDDP} algorithm, we obtain an algorithm for 
$O(n)$-\problem{CVPP} that uses only $O(n)$ vectors
of advice. This is quite remarkable since $O(n)$ vectors are ``not much more'' than the $n$ needed to form a basis; and it is an interesting
open question whether there exists a basis using which one can
obtain even a polynomial approximation for \problem{CVPP} (see below).
Apart from the theoretical interest in minimizing the
advice, this might have applications in cryptography or coding theory.

In Section~\ref{sec:sparsification}, we consider the setting without preprocessing and show for any $\tau =\tau(n) >0$ and $\gamma = \gamma(n) \geq 1$, a reduction from $\sqrt{1+\tau}\cdot\gamma$-approximate
\problem{CVP} to \problem{CVP} with the slightly harder approximation factor $\gamma$ but with a distance bound
of $\sqrt{1+\tau^{-1}}\cdot\lambda_1(\lat)$. We note that this reduction also applies to approximation factors well below $\sqrt{n}$, in contrast to our reduction in the preprocessing setting. Notice that here we also require distance guarantees above $\lambda_1(\lat)$, while the preprocessing setting can work well below the unique decoding radius of $\lambda_1(\lat)/2$. When combined with the hardness result of~\cite{DinurKS98}, our reduction shows that approximate \problem{CVP} is hard even when the target is guaranteed to be close to the lattice. See Corollary~\ref{cor:hardnessofcvp} for the precise hardness result.
Our reduction relies on a lattice sparsification technique introduced by Dadush and Kun~\cite{DadushK13} who used it to develop deterministic single-exponential time
algorithms for approximate \problem{CVP} under general norms.

\subsection{Open Questions and Discussion}

The main open question is whether one can improve our $O(n/\sqrt{\log n})$ approximation factor of search \problem{CVPP}, and possibly match the best known approximation factor $O(\sqrt{n/\log n})$ for the decision version. 

Another open problem is to provide a deeper understanding of the computational complexity of \problem{BDD} both with and
without preprocessing. Liu et al.~\cite{LiuLM06} showed that $\frac{1}{\sqrt{2}}$-\problem{BDD} is NP-hard in the
non-preprocessing version, however nothing is known for smaller distance bounds. This is in contrast to the situation for \problem{CVPP} and
\problem{SVP}, where \emph{any} constant factor approximation is NP-hard. A natural question is therefore: is $\alpha$-\problem{BDD} NP-hard for any
constant $\alpha$? 

An open question already mentioned briefly above is whether 
there exists a \emph{basis} that one can use to obtain a 
polynomial approximation for \problem{CVPP}. A natural approach is to
use Babai's algorithm (see Section~\ref{sec:babai}), whose approximation factor can
be shown to be
\[
\max_{1 \leq i \leq n} \frac{\sqrt{\sum_{j=1}^i
\|\tilde{\vec{b}}_j\|^2}}{\|\tilde{\vec{b}}_i\|},
\]
where the $\tilde{\vec{b}}_i$ are the Gram-Schmidt orthogonalization of the given basis. 
The open question, once specialized to Babai's algorithm, is therefore equivalent to asking whether every lattice has a
basis with $\max_{i \le j} \|\tilde{\vec{b}}_i\|/\|\tilde{\vec{b}}_j\| < \poly(n)$. 
The best known upper bound is $n^{O(\log n)}$~\cite{hkzbabai} using a Korkine-Zolotarev basis. 

Finally, we note that our reductions in Section~\ref{sec:cvppbdd} and
Section~\ref{sec:sparsification} are to the \emph{approximate} bounded-distance
problem. That is, we are guaranteed to be close to the lattice and are required
to output a nearby lattice point, but \emph{not necessarily the closest}. In contrast, the LLM algorithm and our improvement both have the property that they actually output the closest lattice point, an apparently harder problem. So, we are seemingly unable to use the full strength of our reduction. This leads to the following intuitive question: can the presence of one very close lattice point help in finding a \emph{different} relatively close lattice point? Alternatively, is there a reduction from the approximate distance-bounded problem to its exact version? The current gap between the search and decision versions of $\problem{CVPP}$ seems to suggest that being very close to the lattice may provide useful information that is still insufficient to find the nearest vector.

\section{Preliminaries}
\label{sec:prelims}

\subsection{Lattices}

A rank $d$ lattice $\lat\subset \R^n$ is the set of all integer linear combinations of $d$ linearly independent vectors $\basis = (\vec{b}_1, \ldots, \vec{b}_d )$. $\basis$ is called a basis of the lattice and is not unique. We sometimes write $ \lat(\basis)$ to signify the lattice generated by $\basis$. The length of the shortest nonzero vector, also known as the first successive minimum, is denoted by $\lambda_1(\lat) := \min_{\vec{0} \neq \vec{x} \in \lat} \| \vec{x}\|$.

For any point $\vec{x} \in \R^n$, we define $\dist(\vec{x}, \lat)$ as the minimum of $\length{\vec{x}-\vec{y}}$ for all $\vec{y}\in \lat$. The covering radius $\mu(\lat)$ is the supremum of $\dist(\vec{x}, \lat)$ for all $\vec{x} \in \spn (\lat)$.

For any lattice $\lat$, the dual lattice, denoted $\lat^*$, is defined as the set of all points in $\spn (\lat)$ that have integer inner products with all lattice points,
\[ \lat^* = \{ \vec{w} \in \spn(\lat) : \forall \vec{y} \in \lat, \inner{\vec{w},\vec{y}} \in \mathbb{Z} \}\;. \]
Similarly, for a lattice basis $\basis = (\vec{b}_1,\ldots, \vec{b}_d)$, we define the dual basis $\basis^*=(\vec{b}_1^*,\ldots, \vec{b}_d^*)$ to be the unique set of vectors in $\spn(\lat)$ satisfying $\inner{ \vec{b}_i^*, \vec{b}_j} = \delta_{i,j} $. It is easy to show that $\lat^*$ is itself a rank $d$ lattice and $\basis^*$ is a basis of $\lat^*$.

In what follows, we typically consider only lattices $\lat \subset \R^n$ whose rank is $n$ (lattices of full rank). We note that all of our results apply to more general lattices, as we can simply think of the lattice as embedded in $\spn(\lat)$. We sometimes make use of this fact implicitly.

The following technical lemma gives rough bounds on lattice parameters in terms of representation size. 

\begin{lemma}
\label{lem:bitlength}
Let $\lat \subset \Q^n$ be a lattice with basis $\basis$. Let $\ell$ be the bit length of $\basis$ in a standard binary representation. Then, $\mu(\lat) \leq 2^{O(\ell)}$ and $1/\lambda_1(\lat) \leq 2^{O(\ell)}$.
\end{lemma}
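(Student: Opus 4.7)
The plan is to reduce to the integer case by clearing denominators, and then use elementary bounds for integer lattices. Let $d$ be the rank of $\lat$.

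First I would normalize: write $\basis = (1/q) \basis'$, where $q \in \Z_{>0}$ is a common denominator of all entries of $\basis$ and $\basis' \in \Z^{n \times d}$. Standard arithmetic shows that we can take $q$ and the entries of $\basis'$ to have bit length $O(\ell)$, so in particular $q \leq 2^{O(\ell)}$ and $\max_{i,j}|b'_{i,j}| \leq 2^{O(\ell)}$. Consequently each column $\vec{b}_i$ of $\basis$ satisfies
\[
\|\vec{b}_i\| \;\leq\; \sqrt{n}\cdot \max_{j}|b_{i,j}| \;\leq\; \sqrt{n}\cdot 2^{O(\ell)} \;\leq\; 2^{O(\ell)},
\]
using that $n \leq \ell$.

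For the covering radius, I would use the rounding bound: any $\vec{x} \in \spn(\lat)$ can be written as $\sum_{i=1}^d c_i \vec{b}_i$ with $c_i \in \R$, and $\sum_{i=1}^d \lfloor c_i \rceil \vec{b}_i \in \lat$ satisfies
\[
\Bigl\|\vec{x} - \sum_{i=1}^d \lfloor c_i\rceil \vec{b}_i \Bigr\| \;\leq\; \sum_{i=1}^d |c_i - \lfloor c_i\rceil|\cdot \|\vec{b}_i\| \;\leq\; \tfrac{1}{2}\sum_{i=1}^d \|\vec{b}_i\| \;\leq\; 2^{O(\ell)}.
\]
Taking the supremum over $\vec{x}$ gives $\mu(\lat) \leq 2^{O(\ell)}$.

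For the shortest vector, I would observe that $\lat(\basis') \subset \Z^n$, so every nonzero vector of $\lat(\basis')$ has at least one nonzero integer coordinate and hence norm at least $1$, giving $\lambda_1(\lat(\basis')) \geq 1$. Since $\lat = (1/q)\lat(\basis')$, we conclude $\lambda_1(\lat) \geq 1/q \geq 2^{-O(\ell)}$, i.e.\ $1/\lambda_1(\lat) \leq 2^{O(\ell)}$.

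There is no real obstacle here; the only thing to be slightly careful about is making sure the clearing-of-denominators step really does produce an integer matrix and a denominator of bit length $O(\ell)$, which follows from the fact that any rational in the input with bit length $\ell_i$ has denominator bounded by $2^{\ell_i}$ and the sum of bit lengths is $\ell$.
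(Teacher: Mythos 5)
Your proof is correct and takes essentially the same approach as the paper's: bound $\mu(\lat)$ by a sum of basis-vector norms via rounding, and bound $\lambda_1(\lat)$ from below by clearing denominators. The paper does the second step a bit more directly, noting that every nonzero coordinate of a lattice vector is a rational with denominator dividing $q = \prod_{i,j} q_{i,j} \leq 2^{O(\ell)}$, rather than explicitly rescaling to an integer lattice, but the two arguments are interchangeable.
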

\begin{proof}
Let $\basis = (\vec{b}_1,\ldots, \vec{b}_n)$. Then clearly 
\[\mu(\lat) \leq \sum_i \length{\vec{b}_i} \leq 2^{O(\ell)}
\;.
\]
Similarly, if $\vec{b}_i = (p_{i,1}/q_{i,1},\ldots, p_{i,n}/q_{i,n})$, then any integer linear combination of the $\vec{b}_i$ must be expressible as $p/q$ where $q = \prod q_{i,j} \leq 2^{O(\ell)}$. Therefore, $1/\lambda_1(\lat) \leq q \leq 2^{O(\ell)}$.
\end{proof}

Given a basis, $\basis = (\vec{b}_1,\ldots, \vec{b}_n)$, we define its Gram-Schmidt orthogonalization $(\gs{\vec{b}}_1,\ldots, \gs{\vec{b}}_n)$ by
\[  \gs{\vec{b}}_i = \pi_{\{b_1, \ldots, b_{i-1} \}^\perp}(\vec{b}_i) \; , \]
and the Gram-Schmidt coefficients $\mu_{i,j}$ by
\[ \mu_{i,j}= \frac{\inner{\vec{b}_i, \gs{\vec{b}}_j}}{\length{\gs{\vec{b}}_j}^2}\; . \]
Here, $\pi_A$ is the orthogonal projection on the subspace $A$ and $\{b_1, \ldots, b_{i-1} \}^\perp$ denotes the subspace orthogonal to $b_1, \ldots, b_{i-1}$.

\begin{definition}
A basis $\basis= (\vec{b}_1,\ldots, \vec{b}_n )$ of $\lat$ is a Hermite-Korkin-Zolotarev (HKZ) basis if
\begin{enumerate}
\item $\length{\vec{b}_1} = \lambda_1(\lat)$;
\item the Gram-Schmidt coefficients of $\basis$ satisfy $\abs{ \mu_{i,j} } \leq \frac{1}{2}$ for all $j<i$; and
\item $\pi_{\{ \vec{b_1} \}^\perp}(\vec{b}_2), \ldots, \pi_{\{ \vec{b}_1 \}^\perp}(\vec{b}_n)$ is an HKZ basis of $\pi_{\{ \vec{b_1} \}^\perp}(\lat)$.
\end{enumerate}
\end{definition}

\subsection{Lattice Problems}

\begin{definition}
For any approximation parameter $\gamma = \gamma(n)\geq 1$, the search problem $\SVP{\gamma}$ (Shortest Vector Problem) is defined as follows: The input is a basis $\basis$ for a lattice $\lat \subset \R^n$. The goal is to output a vector $\vec{y} \in \lat$ satisfying $\length{\vec{y}} \leq \gamma \cdot \lambda_1(\lat)$.
\end{definition}

\begin{definition}
For any approximation parameter $\gamma = \gamma(n)\geq 1$, the search problem $\CVP{\gamma}{}$ (Closest Vector Problem) is defined as follows: The input is a basis $\basis$ for a lattice $\lat \subset \R^n$ and a vector $\vec{t} \in \R^n$, the target. The goal is to output a vector $\vec{y} \in \lat$ satisfying $\length{\vec{t}-\vec{y}} \leq \gamma \cdot \dist(\vec{t}, \lat)$.
\end{definition}

We often ignore the basis and simply refer to $\lat$ and $\vec{t}$ as the input.

\begin{definition}
The decision problems $\gamma\text{-}\problem{GapCVP}$ is the decision analogue of \CVP{\gamma}{}, defined as follows: The input is a basis $\basis$ of a lattice $\lat\subset \R^n$ and a target vector $\vec{t}\in \R^n$. It is a YES instance if $\dist(\vec{t}, \lat) \leq 1$. It is a NO instance if $\dist(\vec{t}, \lat) > \gamma$.
\end{definition}

Dinur et al. \cite{DinurKS98} showed the current best known hardness result for $\gamma\text{-}\problem{GapCVP}$, which of course immediately implies a hardness result for $\CVP{\gamma}{}$.

\begin{theorem}[{\cite{DinurKS98}}]
\label{thm:cvphard}
There is some constant $c >0$ such that $\gamma\text{-}\problem{GapCVP}$ (and therefore $\CVP{\gamma}{}$) is NP-hard for $\gamma = n^{c/\log \log n}$.
\end{theorem}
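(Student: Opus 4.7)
The plan is to prove NP-hardness by a chain of reductions from a PCP-style problem to \problem{GapCVP}, following the Dinur--Kindler--Safra blueprint. I would start from the hardness of an appropriate Label Cover (or equivalent two-prover game) obtained from the PCP theorem combined with Raz's parallel repetition: after $k = \Theta(\log\log n)$ rounds of repetition one obtains, in polynomial size, an instance with near-perfect completeness and soundness $s \leq 1/\operatorname{polylog}(n)$. This will be the combinatorial source of the eventual almost-polynomial gap.

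Next I would reduce Label Cover to \problem{GapCVP} by an Arora-style encoding: label assignments correspond to blocks of coordinates, consistency constraints become linear equations encoded in the lattice basis, and the target vector is chosen so that completeness produces a lattice point within some distance $d$, while soundness forces every lattice point to lie at distance greater than $d'$ for a modestly larger $d'$. By itself this only yields a constant-factor \problem{CVP} gap. To amplify it, I would compose the reduction with a \emph{locally dense} lattice gadget $\lat_0$: a lattice (constructed from BCH codes, as in Dinur--Kindler--Safra) possessing exponentially many points inside a ball of radius $r$, while $\lambda_1(\lat_0) \geq r/\gamma_0$ for an amplification factor $\gamma_0$. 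Tensoring the Label Cover lattice with $\lat_0$ and adjusting the target multiplies the \problem{GapCVP} ratio by roughly $\gamma_0$: YES instances remain close because a satisfying assignment can exploit any dense gadget codeword, while NO instances remain far because the lattice minimum of $\lat_0$ forces any spurious short combination to pay a factor of $\gamma_0$ in length.

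Iterating this composition $\Theta(\log\log n)$ times, each time replacing a coordinate block by a new copy of an appropriately scaled gadget, raises the gap from $\operatorname{polylog}(n)$ to $n^{c/\log\log n}$. The main obstacle is the parameter bookkeeping: each composition grows the dimension, so one must use a gadget of polylogarithmic dimension per step and verify that the gap factors compound multiplicatively across iterations without the dimension exceeding a fixed polynomial in the original instance size. A secondary subtlety is soundness preservation under composition — one must show that the exponentially large cloud of gadget codewords does not create spurious near-target lattice points in NO instances, which typically requires a moment or Chebyshev-style estimate on the number of gadget codewords at each distance level. Once these pieces are in place, choosing $k = \Theta(\log\log n)$ and $c$ appropriately yields the claimed NP-hardness of $\gamma\text{-}\problem{GapCVP}$ for $\gamma = n^{c/\log\log n}$, and hence of $\CVP{\gamma}{}$ for the same factor.
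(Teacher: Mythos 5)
The paper does not prove this statement; Theorem~\ref{thm:cvphard} is quoted verbatim as a known result of Dinur, Kindler, Raz, and Safra~\cite{DinurKS98} and is used purely as a black box (it feeds into Corollary~\ref{cor:hardnessofcvp}). So there is no ``paper's own proof'' against which to measure your sketch, and a full reconstruction is well beyond the scope of this paper.

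That said, taken on its own terms your outline is not an accurate reconstruction of the DKRS argument and has at least one concrete gap. First, $\Theta(\log\log n)$ rounds of Raz parallel repetition applied to an instance of size $m$ yields an instance of size $m^{\Theta(\log\log m)}$, which is superpolynomial; you cannot both run that many rounds and keep the reduction polynomial time, so the sentence ``in polynomial size'' does not hold as stated. DKRS get the factor $n^{c/\log\log n}$ not by iterated repetition/composition on the lattice side but via a single PCP-based gap problem (their SSAT) with carefully tuned randomness/query parameters, followed by a direct encoding into \problem{GapCVP}. Second, the ``iterate the composition $\Theta(\log\log n)$ times, tensoring with a locally dense BCH gadget each time'' step is closer in spirit to the boosting used in SVP hardness proofs (Micciancio, Khot, Haviv--Regev) than to DKRS; DKRS do not re-tensor the lattice $\log\log n$ times, and you would in any case need to verify that the dimension stays polynomial across iterations, which your sketch acknowledges but does not resolve. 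If you want to invoke this theorem in your own work, the right move is to cite~\cite{DinurKS98} as the paper does rather than re-derive it.
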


\begin{definition}
Let $\phi$ be a positive-valued function on lattices and $\gamma(n) \geq 1$. Then, $\CVP{\gamma}{\phi}$ is the problem of solving $\CVP{\gamma}{}$ when the input lattice $\lat$ and target point $\vec{t}$ satisfy $\dist(\vec{t}, \lat) < \phi(\lat)$. If the target point is outside of this range, any output is acceptable.
\end{definition}

We note that the standard reduction from $\SVP{\gamma}$ to $\CVP{\gamma}{}$ (see, for example, \cite{MicciancioBook}) is actually a reduction from $\SVP{\gamma}$ to $\CVP{\gamma}{\phi}$ where $\phi(\lat) = \lambda_1(\lat)$.

\begin{theorem}
\label{thm:svptocvp}
There is a polynomial-time reduction from $\SVP{\gamma}$ to $\CVP{\gamma}{\phi}$ where $\phi(\lat) = \lambda_1(\lat)$ for any lattice $\lat$ and $\gamma = \gamma(n) \geq 1$.
\end{theorem}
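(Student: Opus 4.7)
The plan is to use the classical ``doubling'' reduction of Goldreich, Micciancio, Safra, and Seifert, observing that the target points it produces always lie within distance $\lambda_1(\lat)$ of the queried sublattices. Given a basis $\basis = (\vec{b}_1,\ldots,\vec{b}_n)$ of the input lattice $\lat$, for each $i \in [n]$ I would form the index-two sublattice $\lat_i := \lat(\vec{b}_1,\ldots,2\vec{b}_i,\ldots,\vec{b}_n)$, i.e., the set of integer combinations of $\basis$ whose $i$-th coefficient is even. Then I would query the $\CVP{\gamma}{\phi}$ oracle on $(\lat_i, \vec{b}_i)$ to obtain some $\vec{y}_i \in \lat_i$, set $\vec{v}_i := \vec{b}_i - \vec{y}_i$, and output the shortest nonzero $\vec{v}_i$.

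Correctness relies on a parity argument. Let $\vec{s}^* = \sum_j c_j \vec{b}_j$ be a shortest nonzero vector of $\lat$. Not all coefficients $c_j$ can be even, since otherwise $\vec{s}^*/2$ would be a nonzero lattice vector of strictly smaller norm, contradicting minimality. Fix any $i$ with $c_i$ odd; then $\vec{b}_i - \vec{s}^* = (1-c_i)\vec{b}_i - \sum_{j\neq i} c_j \vec{b}_j$ has an even $i$-th coefficient, so it lies in $\lat_i$. This yields $\dist(\vec{b}_i, \lat_i) \leq \|\vec{s}^*\| = \lambda_1(\lat)$. The key observation that supplies the distance guarantee for the reduction is that $\lat_i \subseteq \lat$, hence $\lambda_1(\lat_i) \geq \lambda_1(\lat)$, so the target $\vec{b}_i$ lies within the promise radius $\phi(\lat_i) = \lambda_1(\lat_i)$ of $\lat_i$. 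Applying the $\gamma$-approximation guarantee of the oracle gives $\|\vec{v}_i\| \leq \gamma \cdot \dist(\vec{b}_i, \lat_i) \leq \gamma \lambda_1(\lat)$; and since $\vec{b}_i \notin \lat_i$, we have $\vec{v}_i \neq \vec{0}$, so $\vec{v}_i$ is a valid $\gamma$-approximate shortest nonzero vector of $\lat$.

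The only subtlety, which is also the main technical point to address, is that the promise in the definition of $\CVP{\gamma}{\phi}$ is stated as a strict inequality $\dist(\vec{t},\lat) < \phi(\lat)$, whereas the chain above only delivers the non-strict bound $\dist(\vec{b}_i, \lat_i) \leq \lambda_1(\lat_i)$ in the corner case when both $\lat_i$ and $\lat$ happen to have the same first minimum and $\vec{b}_i$ sits exactly on the boundary. I would handle this standard edge case by a small rational perturbation of the target (or equivalently of the sublattice), of magnitude well below $2^{-O(\ell)}$, where $\ell$ is the bit length of $\basis$; by Lemma~\ref{lem:bitlength}, lattice parameters on rational input cannot be separated more finely than this, so such a perturbation forces strict inequality while distorting the output norm by only a $1+o(1)$ factor that is absorbed into the approximation factor $\gamma$. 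Since there are only $n$ oracle queries, each on an instance of rank $n$ and polynomial bit length, the reduction runs in polynomial time.
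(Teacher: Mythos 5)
Your reduction is the standard Goldreich--Micciancio--Safra--Seifert doubling argument, which is exactly what the paper intends: the paper gives no detailed proof of Theorem~\ref{thm:svptocvp}, but simply remarks that the textbook reduction from $\SVP{\gamma}$ to $\CVP{\gamma}{}$ (citing \cite{MicciancioBook}) already produces targets within $\lambda_1$ of the queried sublattices. Your parity argument showing $\vec{b}_i - \vec{s}^* \in \lat_i$, and the observation that $\lat_i \subseteq \lat$ forces $\lambda_1(\lat_i) \geq \lambda_1(\lat)$, are correct and are precisely what makes the paper's remark true; you also correctly note that every $\vec{v}_i$ is automatically a nonzero vector of $\lat$, so the ``garbage'' outputs from queries outside the promise region cannot spoil the minimum. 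The core of your proof therefore matches the paper's intended one.

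You are right that the strict inequality in the definition of $\CVP{\gamma}{\phi}$ is a real subtlety: for $\Z^n$ with the standard basis, every good index $i$ gives $\dist(\vec{b}_i, \lat_i) = \lambda_1(\lat_i) = \lambda_1(\lat)$ exactly, so the boundary case is not vacuous. However, the perturbation you sketch does not actually close it. An unspecified small perturbation of $\vec{b}_i$ can just as easily \emph{increase} $\dist(\cdot, \lat_i)$ as decrease it, moving the target \emph{outside} the promise region rather than inside; ``forces strict inequality'' is unjustified without fixing the direction, and the appeal to Lemma~\ref{lem:bitlength} is misplaced since the issue is not that the two quantities are merely close but that they are exactly equal. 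The claim that the resulting $(1+o(1))$ distortion is ``absorbed into $\gamma$'' is also not automatic: the requirement is $\|\vec{v}_i\| \le \gamma\lambda_1(\lat)$ exactly, and $\gamma\lambda_1(\lat)$ is in general not a lattice-vector norm, so discreteness does not by itself round away a positive additive error (one needs some bit-length control on $\gamma$ as well as on $\lat$). A direction-controlled version would work: e.g.\ query on $\vec{b}_i \pm \delta\vec{b}_j$ for each $j$, since $\|\vec{s}^*\|^2 = \sum_k c_k \inner{\vec{s}^*, \vec{b}_k} > 0$ guarantees some $\inner{\vec{s}^*, \vec{b}_j} \neq 0$, and for the right sign this moves the target strictly toward $\vec{b}_i - \vec{s}^* \in \lat_i$. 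These missing details are worth noting even though the paper itself says nothing about the corner case.
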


\begin{definition}
An algorithm with preprocessing consists of two phases. The first phase, called the preprocessing algorithm, takes input $P$ and outputs an advice string $A$. The second phase, called the query algorithm, takes input $A$ and $Q$, the query, and outputs a solution $S$. We say that such an algorithm runs in polynomial time if the advice $A$ is polynomial in the length of $P$ and the query algorithm runs in time polynomial in the lengths of $P$ and $Q$. The preprocessing algorithm may take arbitrary time.
\end{definition}

\begin{definition}
The search problems $\CVPP{\gamma}{}$ and $\CVPP{\gamma}{\phi}$ (Closest Vector Problem with Preprocessing) are the preprocessing analogues of $\CVP{\gamma}{}$ and $\CVP{\gamma}{\phi}$ respectively, defined as follows: The input to preprocessing is a basis $\basis$ of a lattice $\lat\subset \R^n$. The input to the query phase is a vector $\vec{t}\in \R^n$. The goal is to return a valid solution to $\CVP{\gamma}{}$ or $\CVP{\gamma}{\phi}$ respectively.
\end{definition}

\begin{definition}
For any approximation parameter $\alpha = \alpha(n)$, the search problem with preprocessing $\BDD{\alpha}$ (Bounded Distance Decoding) is simply $\CVPP{1}{\phi}$ where $\phi(\lat) = \alpha \cdot \lambda_1(\lat)$ for any lattice $\lat$.
\end{definition}

\subsection{The Discrete Gaussian and the Smoothing Parameter}
\label{sec:prelims-dgs-smoothing}

For any $s>0$, we define the function $\rho_s : \R^n \rightarrow\R$ as $\rho_s(\vec{t}) = \exp(-\pi \length{\vec{t}}^2/s^2)$. When $s=1$, we simply write $\rho(\vec{t})$. For a set $A$ we define $\rho_s(A)=\sum_{\vec{x}\in A} \rho_s(\vec{x})$. 
\begin{definition} 
For a lattice $\lat \subset \R^n$ and a vector $\vec{t} \in \R^n$, let $D_{\lat + \vec{t},s}$ be the probability distribution over $\lat + \vec{t}$ such that the probability of drawing $\vec{x} \in \lat + \vec{t}$ is proportional to $\rho_{s}(\vec{x})$. We call this the discrete Gaussian distribution over $\lat + \vec{t}$ with parameter $s$.
\end{definition}

For any lattice $\lat\subset \R^n$ and $\vec{t}\in\R^n$, let 
\begin{equation}\label{eq:funcf}
f(\vec{t}) = f_{\lat}(\vec{t}) = \rho(\lat + \vec{t})/\rho(\lat)
\; .
\end{equation}

Banaszczyk proved the following two lemmas in \cite{banaszczyk}. We include
proofs for completeness. 

\begin{lemma} For an $n$-dimensional lattice $\lat$, shift $\vec{c} \in \R^n$, and any $t \geq 1$,
\[
\Pr_{\vec{y} \sim D_{\lat+\vec{c}}}\Big[\|\vec{y}\| \geq t \sqrt{\frac{n}{2\pi}}\Big]
\leq \frac{\rho(\lat)}{\rho(\lat+\vec{c})} e^{-\frac{n}{2}(t^2 - 2 \log t - 1)}
\leq
\frac{\rho(\lat)}{\rho(\lat+\vec{c})} e^{-\frac{n}{2}(t-1)^2} \; .
\]
\label{lem:strong-tailbound}
\end{lemma}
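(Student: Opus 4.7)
The strategy is the classical Chernoff-style argument originally due to Banaszczyk. Writing the probability as
\[
\Pr_{\vec{y} \sim D_{\lat+\vec{c}}}\bigl[\|\vec{y}\| \geq r\bigr]
= \frac{\rho\bigl((\lat+\vec{c}) \setminus r\ball\bigr)}{\rho(\lat+\vec{c})},
\]
where $r = t\sqrt{n/(2\pi)}$, it suffices to bound the numerator by $e^{-\frac{n}{2}(t^2-2\log t-1)}\rho(\lat)$. The plan is to decouple the exponent using a parameter $s \geq 1$: for any $\vec{y}$ with $\|\vec{y}\| \geq r$,
\[
\rho(\vec{y}) = e^{-\pi\|\vec{y}\|^2}
\leq e^{-\pi r^2 (1-1/s^2)}\, e^{-\pi\|\vec{y}\|^2/s^2}
= e^{-\pi r^2 (1-1/s^2)} \rho_s(\vec{y}),
\]
since $\|\vec{y}\|^2(1-1/s^2) \geq r^2(1-1/s^2)$ for $s \geq 1$. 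Summing over $\vec{y} \in (\lat + \vec{c}) \setminus r\ball$ gives
\[
\rho\bigl((\lat+\vec{c}) \setminus r\ball\bigr) \leq e^{-\pi r^2(1-1/s^2)} \rho_s(\lat+\vec{c}).
\]

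The next two steps are the standard Poisson-summation facts for the Gaussian measure. First, $\rho_s(\lat + \vec{c}) \leq \rho_s(\lat)$: applying the Poisson summation formula gives $\rho_s(\lat+\vec{c}) = s^n/\det(\lat) \sum_{\vec{w}\in\lat^*} \rho_{1/s}(\vec{w})\, e^{2\pi i\inner{\vec{w},\vec{c}}}$, which is real and dominated termwise by the $\vec{c}=0$ case. Second, for $s \geq 1$, Poisson summation together with the monotonicity $\rho_{1/s}(\lat^*) \leq \rho(\lat^*)$ gives $\rho_s(\lat) \leq s^n \rho(\lat)$. Combining these with the previous display,
\[
\rho\bigl((\lat+\vec{c}) \setminus r\ball\bigr)
\leq s^n e^{-\pi r^2(1-1/s^2)} \rho(\lat).
\]

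The rest is optimization. Substituting $\pi r^2 = t^2 n/2$ and taking $\log$, we minimize $n\log s - \tfrac{t^2 n}{2}(1-1/s^2)$ over $s \geq 1$; setting the derivative $n/s - t^2 n/s^3$ to zero gives the optimal choice $s = t$. Plugging back yields the bound $t^n e^{-(t^2-1)n/2} \rho(\lat) = e^{-\frac{n}{2}(t^2 - 2\log t - 1)}\rho(\lat)$, which is exactly the first claimed inequality after dividing by $\rho(\lat+\vec{c})$. For the second inequality, it remains to verify $t^2 - 2\log t - 1 \geq (t-1)^2$ for $t \geq 1$, which follows by noting equality at $t=1$ and comparing derivatives: $2t - 2/t \geq 2t - 2$ for $t \geq 1$.

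The only slightly delicate step is the pair of Poisson-summation inequalities $\rho_s(\lat+\vec{c}) \leq \rho_s(\lat) \leq s^n \rho(\lat)$ for $s \geq 1$; everything else is bookkeeping and a one-variable optimization. These two facts can be quoted from Banaszczyk directly, so the proof is essentially a clean assembly of the scaling-and-optimization argument above.
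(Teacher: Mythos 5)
Your proof is correct and is essentially the same argument as the paper's, just phrased as a direct bound on $\rho((\lat+\vec{c})\setminus r\ball)$ via the termwise scaling inequality rather than as a moment generating function followed by Markov; with the substitution $\alpha = 1 - 1/s^2$ the two become line-for-line identical, including the Poisson-summation steps $\rho_s(\lat+\vec{c})\leq\rho_s(\lat)\leq s^n\rho(\lat)$, the optimal choice $s=t$ (equivalently $\alpha = 1-1/t^2$), and the final $\log t\leq t-1$ comparison.
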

\begin{proof}
For any $0 < \alpha < 1$, we have that
\begin{align*}
\expect_{\vec{y} \sim D_{\lat+\vec{c}}}[e^{\pi \alpha \|\vec{y}\|^2}] 
&= \frac{\rho(\lat)}{\rho(\lat+\vec{c})}
\frac{\rho_{1/(\sqrt{1-\alpha})}(\lat+\vec{c})}{\rho(\lat)} \\
 &= \frac{\rho(\lat)}{\rho(\lat+\vec{c})}\Big(\frac{1}{\sqrt{1-\alpha}}\Big)^n 
    \frac{\sum_{\vec{y} \in \lat^*} e^{2\pi
i\pr{\vec{y}}{\vec{c}}}\rho_{\sqrt{1-\alpha}}(\vec{y})}{\rho(\lat^*)} 
&\text{(Poisson summation formula)} \\
 &\leq \frac{\rho(\lat)}{\rho(\lat+\vec{c})} \Big(\frac{1}{\sqrt{1-\alpha}}\Big)^n 
    \frac{\rho_{\sqrt{1-\alpha}}(\lat^*)}{\rho(\lat^*)} \\
&\leq \frac{\rho(\lat)}{\rho(\lat+\vec{c})}\Big(\frac{1}{\sqrt{1-\alpha}}\Big)^n 
\; .
\end{align*}
Using the above and Markov's inequality, we have that
\begin{align*}
\Pr_{\vec{y} \sim D_{\lat+\vec{c}}}\Big[\|\vec{y}\| \geq t
\sqrt{\frac{n}{2\pi}}\Big] 
         &= 
            \Pr\big[e^{\pi \alpha \|\vec{y}\|^2} \geq e^{\alpha n t^2/2}\big] \\
         &\leq \frac{\rho(\lat)}{\rho(\lat+\vec{c})} 
       \frac{(1/\sqrt{1-\alpha})^n}{e^{\alpha n t^2/2}} \\
        &= \frac{\rho(\lat)}{\rho(\lat+\vec{c})} 
       e^{-\frac{n}{2}(\alpha t^2 + \log(1-\alpha))} 
       \; .
\end{align*}
The first bound now follows by setting $\alpha = 1-1/t^2$.

For the simplified bound, using the fact that $0 \leq \log t \leq t-1$, for $t \geq 1$, we get that
\[
e^{-\frac{n}{2}(t^2 - 2 \log t - 1)} \leq e^{-\frac{n}{2}(t^2 - 2 (t-1) - 1)} = e^{-\frac{n}{2}(t-1)^2},
\] 
as needed.
\end{proof}

\begin{lemma}
\label{lem:betterrhoLtbound}
Let $\lat\subset\R^n$ be a lattice of rank $n$. Then, for all $\vec{t} \in \R^n$, 
$
f(\vec{t}) \geq \rho(\vec{t})
$.
\end{lemma}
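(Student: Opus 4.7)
The plan is to expand $\rho(\lat+\vec{t})$ directly, pull out the factor $\rho(\vec{t})$, and reduce the claim to showing that a certain expectation over the discrete Gaussian on $\lat$ is at least $1$.

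First I would write, for each $\vec{y}\in\lat$,
\[
\|\vec{y}+\vec{t}\|^2 \;=\; \|\vec{y}\|^2 + 2\langle\vec{y},\vec{t}\rangle + \|\vec{t}\|^2,
\]
so that
\[
\rho(\lat+\vec{t}) \;=\; \sum_{\vec{y}\in\lat} e^{-\pi\|\vec{y}+\vec{t}\|^2}
\;=\; \rho(\vec{t})\cdot\sum_{\vec{y}\in\lat}\rho(\vec{y})\,e^{-2\pi\langle\vec{y},\vec{t}\rangle}.
\]
Dividing by $\rho(\lat)$ gives the clean identity
\[
f(\vec{t}) \;=\; \rho(\vec{t})\cdot \expect_{\vec{y}\sim D_\lat}\!\left[e^{-2\pi\langle\vec{y},\vec{t}\rangle}\right].
\]
So the lemma reduces to showing that this expectation is at least $1$.

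For the remaining step I would exploit the central symmetry of $\lat$ and of $\rho$: the map $\vec{y}\mapsto -\vec{y}$ is a measure-preserving bijection of $D_\lat$. Averaging the integrand with its reflected counterpart,
\[
\expect_{\vec{y}\sim D_\lat}\!\left[e^{-2\pi\langle\vec{y},\vec{t}\rangle}\right]
\;=\; \expect_{\vec{y}\sim D_\lat}\!\left[\tfrac12\bigl(e^{-2\pi\langle\vec{y},\vec{t}\rangle}+e^{2\pi\langle\vec{y},\vec{t}\rangle}\bigr)\right]
\;=\; \expect_{\vec{y}\sim D_\lat}\!\bigl[\cosh(2\pi\langle\vec{y},\vec{t}\rangle)\bigr] \;\geq\; 1,
\]
since $\cosh\geq 1$ pointwise. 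Combining with the previous identity yields $f(\vec{t})\geq \rho(\vec{t})$.

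There is really no obstacle here: the only thing to be careful about is the symmetrization step, which needs the fact that $\rho(\vec{y})=\rho(-\vec{y})$ so that $\vec{y}$ and $-\vec{y}$ receive equal probability under $D_\lat$. (Equivalently, one can pair each $\vec{y}\ne \vec0$ with $-\vec{y}$ in the sum $\sum_{\vec{y}\in\lat}\rho(\vec{y})e^{-2\pi\langle\vec{y},\vec{t}\rangle}$, getting $\rho(0)+\sum_{\{\vec{y},-\vec{y}\}}2\rho(\vec{y})\cosh(2\pi\langle\vec{y},\vec{t}\rangle)\geq \rho(\lat)$.) No Poisson summation or bound on $\rho(\lat^*)$ is required; the inequality is purely a consequence of the elementary identity for $\rho(\lat+\vec{t})$ and the convexity fact $\cosh\geq 1$.
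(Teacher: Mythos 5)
Your proof is correct and follows exactly the paper's argument: expand $\rho(\lat+\vec{t})$, pull out $\rho(\vec{t})$, symmetrize the sum over $\lat$ to replace $e^{-2\pi\langle\vec{y},\vec{t}\rangle}$ by $\cosh(2\pi\langle\vec{y},\vec{t}\rangle)$, and use $\cosh\geq 1$. The paper presents this as a one-line computation; you have simply spelled out the intermediate steps.
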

\begin{proof}
\[ \rho(\lat + \vec{t}) =  \rho(\vec{t})\sum_{\vec{y} \in \lat} \cosh(2\pi \inner{\vec{y}, \vec{t}})\rho(\vec{y}) \geq \rho(\vec{t}) \rho(\lat)
\; . \qedhere
\] 
\end{proof}

\begin{definition}
For $\epsilon > 0$ and $\lat\subset\R^n$ a lattice, we define the smoothing parameter $\eta_\epsilon(\lat)$ as the unique value satisfying $\rho_{1/\eta_\epsilon(\lat)}(\lat^* \setminus \{ \vec0 \}) = \epsilon $.
\end{definition}

The name smoothing parameter comes from the fact that, for $s \geq \eta_{\eps}(\lat)$, $\rho_s(\lat + \vec{t})$ varies by at most a multiplicative factor of $(1\pm \eps)$ \cite{oded05}.

\subsection{Behavior of \texorpdfstring{$\sqrt{\log(1/\eps)}/\eta_\eps(\lat^*)$}{sqrt{log(1/eps)}/eta eps(lat*)}}
\label{sec:monotonesmoothing}

The function $g(\eps) =  \sqrt{\log(1/\eps)}/\eta_\eps(\lat^*)$ is quite important for our \problem{BDDP} algorithm, so we analyze its behavior here. Our first lemma shows that $g(\eps)$ is strictly monotonically decreasing as $\eps$ increases. (This is not obvious since both the numerator and the denominator are monotonically decreasing.) It is a simple modification of \cite[Lemma 2.4]{CDLP12}.

\begin{lemma}
\label{lem:gmonodecrease}
Let $\lat \subset \R^n$ be a lattice of rank at least one. Let $g(\eps) = \sqrt{\log(1/\eps)}/\eta_\eps(\lat^*)$ for any $\eps \in (0,1)$. Then, $g(\eps)$ is strictly monotonically decreasing.
\end{lemma}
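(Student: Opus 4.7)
My plan is to reparametrize by the Gaussian width. Using the definition of the smoothing parameter, let $s = 1/\eta_\eps(\lat^*)$, so that $\rho_s(\lat \setminus \{\vec0\}) = \eps$. The map $\eps \mapsto s$ is a continuous, strictly increasing bijection from $(0,1)$ onto some interval, since $\rho_s(\lat \setminus \{\vec0\})$ is continuous and strictly increasing in $s$ (every summand is). Set $u = 1/s^2$ and $F(u) = \sum_{\vec0 \ne \vec{x} \in \lat} e^{-\pi u \length{\vec{x}}^2}$, so that $\eps = F(u)$ and
\[
g(\eps)^2 \;=\; s^2 \log(1/\eps) \;=\; \frac{-\log F(u)}{u} \;=:\; \psi(u).
\]
Since $u \mapsto \eps$ is a strictly decreasing bijection, it suffices to show that $\psi$ is strictly \emph{increasing} in $u$.

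Next I would differentiate. Writing $\phi(u) := -\log F(u)$ and computing termwise gives
\[
\phi'(u) \;=\; -\frac{F'(u)}{F(u)} \;=\; \pi \, \expect_{\vec{x} \sim p}\bigl[\length{\vec{x}}^2\bigr],
\]
where $p$ is the probability distribution on $\lat \setminus \{\vec0\}$ defined by $p(\vec{x}) \propto e^{-\pi u \length{\vec{x}}^2}$. Then $\psi'(u) = (u \phi'(u) - \phi(u))/u^2$, so the claim reduces to showing $u\phi'(u) - \phi(u) > 0$.

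The key observation is that this quantity is precisely the Shannon entropy of $p$. Indeed, $-\log p(\vec{x}) = \pi u \length{\vec{x}}^2 + \log F(u) = \pi u \length{\vec{x}}^2 - \phi(u)$, so
\[
H(p) \;=\; \expect_{\vec{x} \sim p}[-\log p(\vec{x})] \;=\; \pi u \, \expect_p[\length{\vec{x}}^2] - \phi(u) \;=\; u\phi'(u) - \phi(u).
\]
Entropy is nonnegative, and strictly positive unless $p$ is concentrated on a single atom. Since $\lat$ has rank at least one, it contains some $\vec{x}\ne\vec0$ and also $-\vec{x}\ne\vec{x}$, both of which receive the same positive weight under $p$. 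Hence $p$ has at least two atoms of equal mass, so $H(p) > 0$, giving $\psi'(u) > 0$ and the strict monotonicity.

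There is no real obstacle here; the only conceptual step is spotting the entropy interpretation of $u\phi'(u)-\phi(u)$, after which everything else is a routine derivative calculation and a trivial symmetry argument to rule out degenerate $p$.
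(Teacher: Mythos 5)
Your proof is correct, and it takes a genuinely different route from the paper's. The paper works directly with two values $\eps$ and $\eps/r$ and reduces the claim to the elementary power-sum inequality $\sum_i a_i^{t^2} < \bigl(\sum_i a_i\bigr)^{t^2}$ for $t>1$ (strict because $\pm\vec{y}$ contribute equally, so the mass is never on a single term), applied with the exponent $t^2 = \log(r/\eps)/\log(1/\eps)$ chosen so that $\eps^{t^2}=\eps/r$; this shows $\eta_{\eps/r}(\lat^*)<t\cdot\eta_\eps(\lat^*)$ directly, with no calculus. You instead reparametrize by the inverse squared width $u$, reduce the claim to monotonicity of $\psi(u)=-\log F(u)/u$, differentiate, and recognize $u\phi'(u)-\phi(u)$ as the Shannon entropy of the Gibbs distribution on $\lat\setminus\{\vec0\}$ at inverse temperature $u$, which is strictly positive by the same $\pm\vec{y}$ symmetry. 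Both strictness arguments hinge on the same fact; where the paper's is shorter and entirely elementary, yours isolates a clean structural reason ($\log F$ is ``super-linear'' in $u$ exactly by the amount of entropy) and is the standard statistical-mechanics viewpoint. Your version does tacitly use that $F$ is differentiable with finite $\expect_p[\|\vec{x}\|^2]$ and that $s\mapsto\rho_s(\lat\setminus\{\vec0\})$ is a continuous strictly increasing bijection onto $(0,\infty)$; these are routine (dominated convergence and the limits $\rho_s\to 0,\infty$ as $s\to 0,\infty$) but worth a sentence if you write it out in full.
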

\begin{proof}
Our goal is to prove that for any $\eps \in (0,1)$, $r> 1$, $g(\eps/r) > g(\eps)$, or equivalently,
that $\eta_{\eps/r}(\lat^*) < \eta_\eps(\lat^*) \cdot t$ where
\[t = \frac{\sqrt{\log(r/\eps)}}{\sqrt{\log(1/\eps)}} > 1 \; .
\] 
This follows from
\begin{align*}
\sum_{\vec{y} \in \lat \setminus \{ \vec0 \} } (e^{-\pi \eta_\eps(\lat^*)^2 \length{\vec{y}}^2})^{t^2}
< \Big( \sum_{\vec{y} \in \lat \setminus \{ \vec0 \} } e^{-\pi \eta_\eps(\lat^*)^2\length{\vec{y}}^2}\Big)^{t^2}
= \eps^{t^2} = \eps/r \; .
\end{align*}
\end{proof}

The next lemma and its corollary show the relationship between $g(\eps)$ and $\lambda_1(\lat)$. Similar analysis appears in \cite{MR04}.

\begin{lemma}
Let $\lat \subseteq \R^n$ be an $n$-dimensional lattice. Then, for $\eps \in (0,1)$,
\[
\frac{\sqrt{\log(2/\eps)/\pi}}{\lambda_1(\lat)} \leq \eta_{\eps}(\lat^*) \leq \frac{\sqrt{\log((1+\eps)/\eps)/\pi}+\sqrt{n/(2\pi)}}{\lambda_1(\lat)} \; .
\]
\label{lem:tighter-smoothing-bound}
\end{lemma}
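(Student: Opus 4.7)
The plan is to translate both bounds into statements about the primal Gaussian sum $\rho_s(\lat \setminus \{\vec{0}\})$. Setting $s^\star := 1/\eta_\eps(\lat^*)$, the defining equation of the smoothing parameter says $\rho_{s^\star}(\lat \setminus \{\vec{0}\}) = \eps$. Since $s \mapsto \rho_s(\lat \setminus \{\vec{0}\})$ is strictly increasing and continuous in $s$, showing $\eta_\eps(\lat^*) \geq L$ is equivalent to showing $\rho_{1/L}(\lat \setminus \{\vec{0}\}) \geq \eps$, and showing $\eta_\eps(\lat^*) \leq U$ is equivalent to $\rho_{1/U}(\lat \setminus \{\vec{0}\}) \leq \eps$.

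For the lower bound, I would simply keep only the two terms coming from a pair of shortest vectors $\pm\vec{v}_1$ with $\|\vec{v}_1\|=\lambda_1(\lat)$. With $L := \sqrt{\log(2/\eps)/\pi}/\lambda_1(\lat)$ this gives
\[
\rho_{1/L}(\lat \setminus \{\vec{0}\}) \;\geq\; 2 e^{-\pi L^2 \lambda_1(\lat)^2} \;=\; 2\cdot\frac{\eps}{2} \;=\; \eps,
\]
which is the desired lower bound (and is tight for the $1$-dimensional lattice $\Z\vec{v}_1$).

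The heart of the argument is the upper bound. Let $a := \sqrt{\log((1+\eps)/\eps)/\pi}$, $b := \sqrt{n/(2\pi)}$, and set $s := \lambda_1(\lat)/(a+b)$, so that $1/s$ is the claimed upper bound on $\eta_\eps(\lat^*)$. The plan is to rescale to the lattice $\lat/s$, whose shortest nonzero vector has length $\lambda_1(\lat)/s = a+b$, and to exploit the identity
\[
\frac{\rho_s(\lat \setminus \{\vec{0}\})}{\rho_s(\lat)}
\;=\; \Pr_{\vec{y} \sim D_{\lat/s}}[\vec{y} \neq \vec{0}]
\;=\; \Pr_{\vec{y} \sim D_{\lat/s}}\bigl[\|\vec{y}\| \geq \lambda_1(\lat)/s\bigr].
\]
I would then invoke Banaszczyk's tail bound (Lemma~\ref{lem:strong-tailbound}) for $D_{\lat/s}$ with $t := \lambda_1(\lat)/(s b) = 1 + a/b \geq 1$. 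The exponent simplifies cleanly: $n(t-1)^2/2 = n a^2/(2 b^2) = \log((1+\eps)/\eps)$, yielding
\[
\rho_s(\lat \setminus \{\vec{0}\}) \;\leq\; \rho_s(\lat) \cdot \frac{\eps}{1+\eps} \;=\; \bigl(1 + \rho_s(\lat \setminus \{\vec{0}\})\bigr)\cdot \frac{\eps}{1+\eps}.
\]
Solving this linear inequality for $\rho_s(\lat \setminus \{\vec{0}\})$ produces the desired $\rho_s(\lat \setminus \{\vec{0}\}) \leq \eps$.

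The only step needing any thought is the reduction to Banaszczyk's tail bound; once one views $\{\vec{y}\neq\vec{0}\}$ as $\{\|\vec{y}\|\geq\lambda_1(\lat/s)\}$, the particular choice of constants $a$ and $b$ is essentially forced by the simultaneous requirements $e^{-n(t-1)^2/2} = \eps/(1+\eps)$ and $t \geq 1$ uniformly in $\eps$. The monotonicity/continuity of $s \mapsto \rho_s(\lat \setminus \{\vec{0}\})$ is routine.
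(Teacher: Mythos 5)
Your proposal is correct and follows essentially the same route as the paper: the lower bound keeps only the $\pm\vec{v}_1$ terms, and the upper bound rescales so that the event $\{\vec{y}\neq\vec{0}\}$ becomes a norm tail event for a unit-parameter discrete Gaussian and then applies Lemma~\ref{lem:strong-tailbound} with the same choice of $t$. The only cosmetic difference is that the paper states the equivalence $\eta_\eps(\lat^*)\leq s \iff \Pr_{\vec{y}\sim D_{\lat,1/s}}[\vec{y}\neq\vec{0}]\leq\eps/(1+\eps)$ outright, whereas you rederive it by solving the linear inequality $x\leq(1+x)\eps/(1+\eps)$.
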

\begin{proof}
For the lower bound, we note that for $s \leq \frac{\sqrt{\log(2/\eps)/\pi}}{\lambda_1(\lat)}$, we have that
\[
\rho_{1/s}(\lat \setminus \set{\vec{0}}) > 2e^{-\pi(s\lambda_1(\lat))^2} \geq \eps ,
\]
as needed. For the upper bound, we note that $\eta_\eps(\lat^*) \leq s$ if and only if $\Pr_{\vec{y} \sim D_{\lat,1/s}}[\vec{y} \neq \vec{0}] \leq \frac{\eps}{1+\eps}$.
By Lemma~\ref{lem:strong-tailbound}, letting $s = t {\sqrt{n/(2\pi)}}/{\lambda_1(\lat)}$, for $t \geq 1$, we have that 
\[
\Pr_{\vec{x} \sim D_{\lat,1/s}}[\vec{y} \neq \vec{0}] 
                              = \Pr_{\vec{y} \sim D_{\lat,1/s}}[\|\vec{x}\| \geq \lambda_1(\lat)] = \Pr_{\vec{y} \sim D_{\lat}}[\| \vec{y} \| \geq t\cdot \sqrt{n/2\pi}]
                              \leq e^{-\frac{n}{2}(t-1)^2} \; .
\]
Setting $t = \sqrt{2\log((1+\eps)/\eps)/n}+1$, we get that $\Pr_{\vec{y} \sim D_{\lat,1/s}}[\vec{y} \neq \vec{0}] \leq \frac{\eps}{1+\eps}$.
Therefore
\[
\eta_{\eps}(\lat^*) \leq t \frac{\sqrt{n/(2\pi)}}{\lambda_1(\lat)} = \frac{\sqrt{\log((1+\eps)/\eps)/\pi}+\sqrt{n/(2\pi)}}{\lambda_1(\lat)},
\]
as needed.
\end{proof}

\begin{corollary} 
\label{cor:smoothing-to-lambda}
Let $\lat \subseteq \R^n$ be an $n$-dimensional lattice. Then, for $\eps \in (0,1)$,
\[
\frac{\sqrt{\log(2/\eps)/\pi}}{\eta_\eps(\lat^*)} \leq \lambda_1(\lat) \leq \frac{\sqrt{\log(2/\eps)/\pi}}{\eta_\eps(\lat^*)}\Big(1+\frac{\sqrt{n/2}}{\sqrt{\log(2/\eps)}}\Big)
\; .
\]

\end{corollary}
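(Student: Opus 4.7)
The plan is to obtain the corollary as an essentially immediate algebraic rearrangement of Lemma~\ref{lem:tighter-smoothing-bound}, which already relates $\eta_\eps(\lat^*)$ and $\lambda_1(\lat)$ in both directions. The only real work is to repackage the bounds so that $\lambda_1(\lat)$ appears on the left-hand side and the right-hand side has the clean factored form in terms of $\sqrt{\log(2/\eps)/\pi}/\eta_\eps(\lat^*)$.

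For the lower bound on $\lambda_1(\lat)$, I would take the left inequality of Lemma~\ref{lem:tighter-smoothing-bound}, namely $\sqrt{\log(2/\eps)/\pi}/\lambda_1(\lat) \leq \eta_\eps(\lat^*)$, and cross-multiply (all quantities are positive) to get $\sqrt{\log(2/\eps)/\pi}/\eta_\eps(\lat^*) \leq \lambda_1(\lat)$, which is exactly the left-hand inequality of the corollary.

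For the upper bound, I would start from the right inequality $\eta_\eps(\lat^*) \leq (\sqrt{\log((1+\eps)/\eps)/\pi}+\sqrt{n/(2\pi)})/\lambda_1(\lat)$, which rearranges to $\lambda_1(\lat) \leq (\sqrt{\log((1+\eps)/\eps)/\pi}+\sqrt{n/(2\pi)})/\eta_\eps(\lat^*)$. Since $\eps \in (0,1)$, we have $(1+\eps)/\eps \leq 2/\eps$, hence $\sqrt{\log((1+\eps)/\eps)/\pi} \leq \sqrt{\log(2/\eps)/\pi}$; substituting this weaker bound in the numerator and then factoring $\sqrt{\log(2/\eps)/\pi}/\eta_\eps(\lat^*)$ out front, the residual term becomes $1 + \sqrt{n/(2\pi)}/\sqrt{\log(2/\eps)/\pi} = 1 + \sqrt{n/2}/\sqrt{\log(2/\eps)}$, which matches the right-hand side of the corollary.

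There is no real obstacle here — the proof is essentially a one-line algebraic manipulation plus the elementary observation that $\log((1+\eps)/\eps) \leq \log(2/\eps)$ for $\eps \leq 1$. The only care needed is to keep track of which direction of the lemma is being used for which inequality of the corollary, and to verify that the factoring $\sqrt{\log((1+\eps)/\eps)/\pi}+\sqrt{n/(2\pi)}$ actually matches the required $\sqrt{\log(2/\eps)/\pi}\,(1+\sqrt{n/2}/\sqrt{\log(2/\eps)})$ form once the $\eps < 1$ loosening is made.
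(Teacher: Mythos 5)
Your proof is correct and is exactly the intended derivation: the corollary is stated in the paper without a separate proof precisely because it is the direct algebraic rearrangement of Lemma~\ref{lem:tighter-smoothing-bound} that you carry out, including the loosening $\log((1+\eps)/\eps) \le \log(2/\eps)$ for $\eps < 1$ and the subsequent factoring.
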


\subsection{Tail Bounds}

We next introduce subgaussian and subexponential random variables, and in particular, the subgaussianity of $D_{\lat,s}$.

\begin{definition}
We say that a random variable $\vec{X} $ (or its distribution) over $\R^n$ is subgaussian with parameter $s>0$ if $\expect [\vec{X}] = \vec0$, and for all $t \in \R$ and all unit vectors $\vec{v} \in \R^n$,
\[ 
\Pr[\abs{\inner{\vec{X}, \vec{v}}} \geq t] \leq 2\cdot e^{-\pi t^2/s^2}
 \; .\]
\end{definition}

\begin{lemma}[{\cite[Lemma 2.8]{subgaussian}}]\label{lem:subgaussian} Let
$\lat\subset\R^n$ be a lattice of rank $n$. Then for any $s> 0$, $D_{\lat,s}$ is
subgaussian with parameter $s $. 
\end{lemma}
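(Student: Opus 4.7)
The plan is to prove the two conditions of subgaussianity separately. For the mean-zero condition, I would observe that $\rho_s$ is an even function and $\lat = -\lat$, so the distribution $D_{\lat,s}$ is symmetric about the origin, which immediately gives $\expect[\vec{X}] = \vec{0}$.

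For the tail bound, I would follow the standard moment-generating-function strategy, essentially a Chernoff-style argument. By rescaling, I may assume $s=1$ (the general case then follows by noting that if $\vec{X} \sim D_{\lat,s}$ then $\vec{X}/s \sim D_{\lat/s,1}$). Fix a unit vector $\vec{v} \in \R^n$ and a parameter $\alpha > 0$. I would compute
\begin{align*}
\expect_{\vec{X} \sim D_\lat}\bigl[e^{2\pi \alpha \inner{\vec{X},\vec{v}}}\bigr]
&= \frac{1}{\rho(\lat)} \sum_{\vec{y} \in \lat} e^{-\pi \|\vec{y}\|^2 + 2\pi \alpha \inner{\vec{y},\vec{v}}} \\
&= \frac{e^{\pi \alpha^2}}{\rho(\lat)} \sum_{\vec{y} \in \lat} e^{-\pi \|\vec{y} - \alpha \vec{v}\|^2}
 = e^{\pi \alpha^2} \cdot f(-\alpha \vec{v}),
\end{align*}
where I completed the square using that $\vec{v}$ is a unit vector. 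The key step is now to bound $f(\vec{t}) \leq 1$ for all $\vec{t}$. This is an easy consequence of the Poisson summation formula, which yields
\[
f(\vec{t}) = \frac{1}{\rho(\lat^*)}\sum_{\vec{w} \in \lat^*} \cos(2\pi\inner{\vec{w},\vec{t}}) \rho(\vec{w}) \leq \frac{\rho(\lat^*)}{\rho(\lat^*)} = 1,
\]
exactly as already used in the proof of Lemma~\ref{lem:strong-tailbound}. So the MGF is at most $e^{\pi \alpha^2}$.

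I would then apply Markov's inequality: for any $t > 0$,
\[
\Pr[\inner{\vec{X},\vec{v}} \geq t] \leq e^{-2\pi \alpha t} \expect[e^{2\pi \alpha \inner{\vec{X},\vec{v}}}] \leq e^{\pi \alpha^2 - 2\pi \alpha t}.
\]
Optimizing by setting $\alpha = t$ yields the one-sided bound $\Pr[\inner{\vec{X},\vec{v}} \geq t] \leq e^{-\pi t^2}$. Applying the same argument with $-\vec{v}$ in place of $\vec{v}$ (or again invoking symmetry of $D_\lat$) and taking a union bound gives $\Pr[|\inner{\vec{X},\vec{v}}| \geq t] \leq 2 e^{-\pi t^2}$. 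Rescaling by $s$ then recovers the claimed bound $2 e^{-\pi t^2/s^2}$ for $D_{\lat,s}$.

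There is no real obstacle here: the only nontrivial ingredient is the bound $f \leq 1$, which is standard from Poisson summation and is implicitly already in hand. The main thing to be careful about is the completion-of-square step (where unit norm of $\vec{v}$ is used to get a clean $e^{\pi\alpha^2}$ factor) and the rescaling from the $s=1$ case to arbitrary $s$.
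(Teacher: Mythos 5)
Your proof is correct. The paper does not include its own proof of this lemma---it is simply cited from~\cite{subgaussian}---but your argument (complete the square in the moment generating function, bound $\rho(\lat - \alpha\vec{v}) \le \rho(\lat)$ via Poisson summation, apply Markov's inequality and optimize over $\alpha$, then rescale from $s=1$ to general $s$) is exactly the standard derivation used there.
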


\begin{definition}
We say that a random variable $X$ (or its distribution) over $\R$ is subexponential with parameter $s$ if, for any $t>0$
\[ 
\Pr[\abs{X} \geq t] \leq e^{1-t/s}
\; .
\]
\end{definition}

Vershynin proved a basic relationship between subgaussian and subexponential random variables, from which we derive a simple corollary.

\begin{lemma}[{{\cite[Lemma 5.14]{Vershynin_2012}}}]
\label{lem:subgausssquared}
If $\vec{X}$ is a subgaussian random variable over $\R^n$ with parameter $s$, then for any unit vector $\vec{v} \in \R^n$, $\inner{\vec{X}, \vec{v}}^2$ is subexponential with parameter $O(s)$.
\end{lemma}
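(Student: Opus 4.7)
The plan is to convert the subgaussian tail bound on $Z := \inner{\vec{X}, \vec{v}}$ directly into a subexponential tail bound on $Y := Z^2$ by the change of variable $t = \sqrt{u}$. By the definition of subgaussianity with parameter $s$, for every $u \geq 0$,
\[
\Pr[Y \geq u] \;=\; \Pr\bigl[|Z| \geq \sqrt{u}\bigr] \;\leq\; 2\, e^{-\pi u/s^2}.
\]
To match the form $e^{1 - u/s'}$ demanded by the definition of subexponentiality, I would then absorb the prefactor $2$ into the additive constant in the exponent. Since $\log 2 < 1$, we have $2\, e^{-\pi u/s^2} \leq e^{1 - \pi u/s^2}$ for all $u \geq 0$, and so taking $s' = s^2/\pi$ completes the argument.

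The entire proof is a two-line manipulation of the two definitions, so there is no real obstacle. The only subtlety is that the constant $\pi$ in the subgaussian definition gets inverted and the multiplicative factor $2$ gets absorbed into the additive $+1$ in the subexponential definition; both work out because $\log 2 < 1$. One bookkeeping point worth noting: with the normalizations fixed in this section, the natural parameter that comes out is of order $s^2$ rather than $s$ (squaring a subgaussian variable should square its characteristic scale, compare $e^{-t^2/s^2}$ with $e^{-t/s^2}$), so I would read the stated $O(s)$ in the lemma as shorthand for $O(s^2)$, which is what both the cited reference and the computation above yield.
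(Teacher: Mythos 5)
Your proof is correct. The paper itself gives no proof of this lemma, citing Vershynin's Lemma~5.14 instead, so there is no internal argument to compare against; your two-line tail-bound substitution is the natural elementary verification under the specific tail definitions adopted in Section~2.4 of the paper (Vershynin's own proof proceeds via moment/Orlicz-norm characterizations, but for these concrete definitions the direct substitution is simpler and entirely adequate). Your bookkeeping remark is also right: with the paper's definitions, squaring a subgaussian of parameter $s$ yields a subexponential of parameter $\Theta(s^2)$, not $\Theta(s)$; this matches the cited reference, and the paper's ``$O(s)$'' should be read as a harmless slip that propagates into Corollary~\ref{cor:subgaussianproduct} but has no effect downstream, since every invocation (Lemmas~\ref{lem:Hess} and~\ref{lem:gradestimator}) takes $s=O(1)$.
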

\begin{corollary}
\label{cor:subgaussianproduct}
If $\vec{X}$ and $\vec{Y}$ are subgaussian random variables over $\R^n$ with parameter $s$, then for any two unit vectors $\vec{u}, \vec{v} \in \R^n$, $\inner{\vec{X}, \vec{u}}\inner{\vec{Y}, \vec{v}}$ is subexponential with parameter $O(s)$
\end{corollary}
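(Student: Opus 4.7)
The plan is to deduce the corollary directly from Lemma~\ref{lem:subgausssquared} via a union bound, using the elementary inequality $|ab| \leq \max(a^2, b^2)$. Crucially, no independence between $\vec{X}$ and $\vec{Y}$ is needed, since the union bound step is purely set-theoretic.

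First, I would apply Lemma~\ref{lem:subgausssquared} separately to $\vec{X}$ and to $\vec{Y}$ (which are subgaussian with parameter $s$, hence so are the scalar projections $\inner{\vec{X}, \vec{u}}$ and $\inner{\vec{Y}, \vec{v}}$) to conclude that both $\inner{\vec{X}, \vec{u}}^2$ and $\inner{\vec{Y}, \vec{v}}^2$ are subexponential random variables with parameter $c s$ for some absolute constant $c > 0$.

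Next, I would observe that the event $\{|\inner{\vec{X},\vec{u}} \inner{\vec{Y}, \vec{v}}| \geq t\}$ is contained in the union $\{\inner{\vec{X}, \vec{u}}^2 \geq t\} \cup \{\inner{\vec{Y}, \vec{v}}^2 \geq t\}$, since if both $\inner{\vec{X}, \vec{u}}^2 < t$ and $\inner{\vec{Y}, \vec{v}}^2 < t$, then $|\inner{\vec{X}, \vec{u}} \inner{\vec{Y}, \vec{v}}| < t$. A union bound combined with the subexponential estimates from the previous step then yields
\[
\Pr[|\inner{\vec{X}, \vec{u}} \inner{\vec{Y}, \vec{v}}| \geq t] \leq \Pr[\inner{\vec{X}, \vec{u}}^2 \geq t] + \Pr[\inner{\vec{Y}, \vec{v}}^2 \geq t] \leq 2 e^{1 - t/(cs)}
\]
for all $t > 0$.

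Finally, I would absorb the leading factor of $2$ into the parameter. For $t \geq 2cs \log 2$, one has $2 e^{1 - t/(cs)} = e^{1 + \log 2 - t/(cs)} \leq e^{1 - t/(2cs)}$, while for $t < 2cs \log 2$ the bound $e^{1 - t/(2cs)} > e^{1 - \log 2} > 1$ is trivial since probabilities are bounded by $1$. Thus $\inner{\vec{X}, \vec{u}} \inner{\vec{Y}, \vec{v}}$ is subexponential with parameter $2cs = O(s)$, as claimed. There is no genuine obstacle here; the only mildly subtle point is the constant-absorption trick at the end, and the observation that the union bound does not require joint distributional assumptions on $(\vec{X}, \vec{Y})$.
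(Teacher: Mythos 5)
Your proof is correct, but it takes a genuinely different route from the paper. The paper's argument is based on the polarization identity $\inner{\vec{X}, \vec{u}}\inner{\vec{Y}, \vec{v}} = \tfrac{1}{2}(\inner{\vec{X}, \vec{u}}+\inner{\vec{Y}, \vec{v}})^2 - \tfrac{1}{2}\inner{\vec{X}, \vec{u}}^2 - \tfrac{1}{2}\inner{\vec{Y}, \vec{v}}^2$, together with the observations that subgaussian random variables with parameter $O(s)$ are closed under addition and that subexponential random variables with parameter $O(s)$ are closed under linear combinations; Lemma~\ref{lem:subgausssquared} is then applied to each of the three squared terms. Your argument instead avoids any closure-under-sums claims by exploiting the pointwise inequality $|ab| \leq \max(a^2,b^2)$, which turns the problem into a two-term union bound on the events $\{\inner{\vec{X},\vec{u}}^2 \geq t\}$ and $\{\inner{\vec{Y},\vec{v}}^2 \geq t\}$, each controlled directly by Lemma~\ref{lem:subgausssquared}. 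Both approaches correctly work without any independence or joint-distribution assumption on $(\vec{X},\vec{Y})$, and both require a small constant-absorption step (you do it explicitly at the end; the paper hides it inside the ``$O(s)$'' closure claims). Your route is arguably slightly leaner, since it invokes Lemma~\ref{lem:subgausssquared} only twice and needs no auxiliary closure facts, while the paper's polarization decomposition is the more standard textbook device and generalizes more cleanly if one wanted a signed (two-sided) decomposition into exact subexponential pieces rather than a tail inclusion.
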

\begin{proof}
It follows immediately from the definitions that subgaussian random variables with parameter $O(s)$ are closed under addition and multiplication by constants, as are subexponential random variables with parameter $O(s)$. Therefore,
\[ 
\inner{\vec{X}, \vec{u}}\inner{\vec{Y}, \vec{v}} = \frac{1}{2} (\inner{\vec{X}, \vec{u}}+\inner{\vec{Y}, \vec{v}})^2 - \frac{1}{2}\inner{\vec{X}, \vec{u}}^2 - \frac{1}{2}\inner{\vec{Y}, \vec{v}}^2
\; .
\]
is subexponential with parameter $O(s)$ as claimed.
\end{proof}

Vershynin showed the next useful property of subexponential random variables.

\begin{lemma}[{\cite[Proposition 5.16]{Vershynin_2012}}]
\label{lem:subexp}
Let $X_1, \ldots, X_N$ be independent subexponential random variables over $\R$ with parameter s, and suppose $\expect[X_i] = 0$ for all $i$. Then, for any $t \geq 0$,
\[
\Pr\Big[ \frac{1}{N} \big| \sum_i X_i \big| \geq t \Big]  \leq 2^{1-\Omega(N\min(t/s,t^2/s^2))}
\; .
\]
\end{lemma}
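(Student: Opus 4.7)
The plan is to follow the standard Chernoff/Bernstein approach: bound the moment generating function (MGF) of each $X_i$, multiply MGFs by independence, apply Markov's inequality to $\exp(\lambda \sum_i X_i)$, and optimize $\lambda$ in two regimes corresponding to the $t/s$ and $t^2/s^2$ terms in the conclusion.

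First, from the subexponential tail bound $\Pr[|X_i| \geq u] \leq e^{1 - u/s}$, I would derive moment bounds via the layer-cake identity
\[
\expect[|X_i|^k] \;=\; \int_0^\infty k u^{k-1}\Pr[|X_i| \geq u]\,du \;\leq\; e \cdot k! \cdot s^k .
\]
Expanding the MGF as a power series and invoking the mean-zero hypothesis to kill the first-order term,
\[
\expect[e^{\lambda X_i}] \;=\; 1 + \sum_{k \geq 2} \frac{\lambda^k \expect[X_i^k]}{k!} \;\leq\; 1 + e \sum_{k \geq 2} (|\lambda|s)^k ,
\]
and geometric summation yields $\expect[e^{\lambda X_i}] \leq \exp(c \lambda^2 s^2)$ for some absolute constant $c$, valid for all $|\lambda| \leq 1/(2s)$.

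Multiplying MGFs by independence gives $\expect[\exp(\lambda \sum_i X_i)] \leq \exp(cN\lambda^2 s^2)$ over the same range of $\lambda$, so by Markov
\[
\Pr\Big[\sum_i X_i \geq Nt\Big] \;\leq\; \exp\bigl(cN\lambda^2 s^2 - \lambda N t\bigr) .
\]
Optimizing $\lambda \in (0, 1/(2s)]$ naturally splits into two cases. When $t/s$ is small, the unconstrained minimizer $\lambda^\star = t/(2cs^2)$ lies in the valid range and yields a bound of the form $\exp(-\Omega(Nt^2/s^2))$. When $t/s$ is large, the constraint binds and the minimizer is on the boundary at $\lambda = 1/(2s)$, yielding $\exp(-\Omega(Nt/s))$. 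Taking the weaker of these bounds in each regime gives $\exp(-\Omega(N \min(t/s, t^2/s^2)))$. The lower tail is handled identically by applying the argument to $-X_i$, and a union bound over the two tails together with a conversion from base $e$ to base $2$ yields the advertised $2^{1 - \Omega(N \min(t/s, t^2/s^2))}$.

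The main obstacle is the conversion from the tail hypothesis to a usable MGF bound, since the subexponential condition constrains probabilities rather than exponential moments directly. Care is required to verify that the series expansion converges (which forces the restriction $|\lambda| \leq c'/s$) and to carry this constraint through the Chernoff optimization; it is precisely this constraint that produces the split between the $t/s$ and $t^2/s^2$ regimes in the final bound.
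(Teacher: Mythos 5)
Your proof is correct and is essentially the standard Bernstein-type argument given in the cited reference (Vershynin, Prop.\ 5.16); the paper itself does not reprove this lemma but merely invokes it. The chain you outline---moments from the tail via the layer-cake formula, MGF bound for $|\lambda| \leq 1/(2s)$ using the centering to kill the linear term, independence, Markov, and two-regime optimization over $\lambda$ producing the $\min(t/s, t^2/s^2)$ split---is exactly the route taken in the source, and all the steps check out.
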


We will also need the Chernoff-Hoeffding bound~\cite{hoeffding}.

\begin{lemma}[Chernoff-Hoeffding bound]
\label{lem:chernoff}
Let $X_1, \ldots, X_N $ be independent and identically distributed random variables with $-a \leq X_i \leq a$. Then, for $s > 0$
\[
\Pr\Big[\Big|\expect[X_i] - \frac{1}{N}\cdot\sum X_i \Big| \geq s \Big] \leq 2^{1-\Omega(Ns^2/a^2)}
\; .
\]
\end{lemma}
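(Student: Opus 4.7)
The plan is the standard Chernoff bounding method: reduce to mean-zero bounded summands, bound the moment generating function of each summand via Hoeffding's lemma, and then optimize the free parameter coming from Markov's inequality.

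First I would recenter the variables, setting $Y_i = X_i - \expect[X_i]$. Each $Y_i$ is independent of the others, has mean zero, and satisfies $|Y_i| \leq 2a$. The claim then reduces to showing $\Pr[|\tfrac{1}{N}\sum_i Y_i| \geq s] \leq 2^{1 - \Omega(Ns^2/a^2)}$, and by symmetry it suffices to bound the upper tail and then double the bound with a union bound.

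Next I would establish Hoeffding's lemma: for any mean-zero random variable $Y$ supported in $[-b,b]$ and any $\lambda \in \R$, one has $\expect[e^{\lambda Y}] \leq e^{\lambda^2 b^2 / 2}$. This follows from convexity of the exponential on $[-b,b]$: writing $e^{\lambda y} \leq \tfrac{b+y}{2b} e^{\lambda b} + \tfrac{b-y}{2b} e^{-\lambda b}$ and taking expectation yields $\expect[e^{\lambda Y}] \leq \cosh(\lambda b)$, and the bound is completed by the power-series comparison $\cosh(x) \leq e^{x^2/2}$.

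Finally I would run the Chernoff bound. For any $\lambda > 0$, Markov's inequality applied to $e^{\lambda \sum_i Y_i}$ combined with independence and Hoeffding's lemma (with $b = 2a$) gives
\[
\Pr\Big[\tfrac{1}{N}\sum_i Y_i \geq s\Big] \leq e^{-\lambda N s} \prod_{i=1}^N \expect[e^{\lambda Y_i}] \leq \exp(-\lambda N s + 2 N \lambda^2 a^2).
\]
Optimizing in $\lambda$ (take $\lambda = s/(4a^2)$) gives $\exp(-N s^2 / (8 a^2))$. Applying the same argument to the variables $-Y_i$ handles the lower tail, and a union bound yields a total probability at most $2 \exp(-N s^2 / (8a^2)) = 2^{1 - \Omega(N s^2 / a^2)}$, exactly the claimed bound. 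The proof is entirely classical and presents no real obstacle beyond recalling Hoeffding's lemma; the estimate is uniform in $s > 0$ because for $s$ comparable to or larger than $a$ the Gaussian exponent automatically dominates.
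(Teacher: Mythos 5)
The paper does not prove Lemma~\ref{lem:chernoff}; it simply cites Hoeffding's original 1963 paper, so there is no internal proof to compare against. Your argument is correct and is the standard Chernoff/MGF proof: recenter, invoke Hoeffding's lemma for the moment generating function, apply Markov's inequality, optimize $\lambda$, and double for the two-sided bound. Two small remarks. First, by recentering to $Y_i = X_i - \expect[X_i]$ with range $[-2a,2a]$ you give away a constant: the sharp form of Hoeffding's lemma applied directly to $X_i \in [-a,a]$ (whose range has length $2a$, giving MGF bound $e^{\lambda^2(2a)^2/8} = e^{\lambda^2 a^2/2}$ for the centered variable) yields the standard two-sided tail $2\exp(-Ns^2/(2a^2))$ rather than your $2\exp(-Ns^2/(8a^2))$; since the lemma only asserts $2^{1-\Omega(Ns^2/a^2)}$ this is immaterial here. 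Second, the final line about the bound being ``uniform in $s>0$'' is unnecessary: a genuine sub-Gaussian tail $\exp(-c N s^2/a^2)$ is automatically of the stated form for all $s>0$, so no case distinction on the size of $s$ is required. Otherwise the proof is complete and correct.
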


\subsection{Babai's Nearest Plane Algorithm}
\label{sec:babai}

Babai's nearest plane algorithm (denoted \alg{Babai}) is an algorithm introduced by Babai \cite{Babai86} for rounding a target vector to a nearby lattice point one coordinate at a time. The input is a basis $\basis=(\vec{b}_1,\ldots,\vec{b}_n)$ for a lattice $\lat$ and a target $\vec{t} \in \R^n $. 

We first project $\vec{t}$ onto $\spn(\lat)$. We then choose the last coordinate $c_n\in\Z$ of our nearby lattice point by simple rounding, setting 
\[c_n = \round{\inner{  \vec{t}, \vec{b}_n^*  }}  \; .\]
Next we call \alg{Babai} recursively on $(\vec{b}_1, \ldots, \vec{b}_{n-1})$ and $\vec{t} - c_n\vec{b}_n$ and receive the result $\vec{y}$. We then return $\vec{y} + c_n\vec{b}_n$.

Stated more intuitively, \alg{Babai} chooses the lattice hyperplane 
\[
c_n \vec{b}_n + \spn( \vec{b}_1,\ldots, \vec{b}_{n-1}) = \set{\vec{x} \in \spn(\lat): \inner{\vec{x}, \vec{b}_n^*} = c_n}
\]
with $c_n\in\Z$ that is nearest to the target and recurses on this hyperplane.

Babai proved the following standard fact about his algorithm.

\begin{lemma}[{\cite{Babai86}}] %
\label{lem:babai}
Let $\lat\subset\R^n$ be a lattice of rank $n$. For any basis, $\basis = (\vec{b}_1,\ldots, \vec{b}_n)$ of $\lat$ with Gram-Schmidt orthogonalization
$(\gs{\vec{b}}_1, \ldots, \gs{\vec{b}}_n)$ and any target vector $\vec{t} \in \R^n$, $\alg{Babai}(\vec{t}, \basis)$ outputs $\vec{y} \in \lat$ satisfying
\[ \length{\vec{y} - \vec{t}}^2 \leq \frac{1}{4} \sum_{i=1}^n \length{\gs{\vec{b}}_i}^2 \leq \frac{n}{4} \cdot \max_i \length{\gs{\vec{b}}_i}^2
\; . \]
\end{lemma}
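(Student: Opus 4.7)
The plan is to prove the first inequality by induction on the rank $n$ of the lattice, with the second inequality following trivially from the first. The base case $n=1$ is immediate: the algorithm picks $c_1 = \round{\inner{\vec{t}, \vec{b}_1^*}}$, and since $\vec{b}_1^* = \gs{\vec{b}}_1/\length{\gs{\vec{b}}_1}^2$ and $\gs{\vec{b}}_1 = \vec{b}_1$, the rounding error in the coefficient is at most $1/2$, so $\length{c_1 \vec{b}_1 - \vec{t}}^2 \le \tfrac{1}{4} \length{\gs{\vec{b}}_1}^2$ (after projection onto $\spn(\lat)$).

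For the inductive step, first I would set up coordinates in the Gram--Schmidt basis: writing $\vec{t} = \sum_i \beta_i \gs{\vec{b}}_i$, the key identity is that $\inner{\vec{t}, \vec{b}_n^*} = \beta_n$, since $\vec{b}_n^* = \gs{\vec{b}}_n/\length{\gs{\vec{b}}_n}^2$ and $\inner{\gs{\vec{b}}_i, \gs{\vec{b}}_n} = 0$ for $i < n$. Thus the algorithm chooses $c_n = \round{\beta_n}$, with $|\beta_n - c_n| \le 1/2$. Moreover, since $\vec{b}_n = \gs{\vec{b}}_n + \sum_{j<n}\mu_{n,j}\gs{\vec{b}}_j$, the vector $\vec{t} - c_n \vec{b}_n$ has its $\gs{\vec{b}}_n$-component equal to $(\beta_n - c_n)\gs{\vec{b}}_n$, while its projection $\pi(\vec{t} - c_n\vec{b}_n)$ onto $\spn(\vec{b}_1,\ldots,\vec{b}_{n-1})$ has no $\gs{\vec{b}}_n$-component.

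Next I would apply the inductive hypothesis to the recursive call on the rank-$(n-1)$ lattice $\lat' = \lat(\vec{b}_1,\ldots,\vec{b}_{n-1})$ with target $\pi(\vec{t} - c_n\vec{b}_n)$, whose Gram--Schmidt vectors are exactly $\gs{\vec{b}}_1,\ldots,\gs{\vec{b}}_{n-1}$. This yields a lattice vector $\vec{y}' \in \lat'$ with
\[
\length{\vec{y}' - \pi(\vec{t} - c_n\vec{b}_n)}^2 \le \tfrac{1}{4}\sum_{i=1}^{n-1}\length{\gs{\vec{b}}_i}^2.
\]
The algorithm's output is $\vec{y} = \vec{y}' + c_n\vec{b}_n$. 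Since $\vec{y}' - \pi(\vec{t} - c_n\vec{b}_n)$ lies in $\spn(\vec{b}_1,\ldots,\vec{b}_{n-1})$, which is orthogonal to $\gs{\vec{b}}_n$, Pythagoras gives
\[
\length{\vec{y} - \vec{t}}^2 = \length{\vec{y}' - \pi(\vec{t} - c_n\vec{b}_n)}^2 + (\beta_n - c_n)^2\length{\gs{\vec{b}}_n}^2 \le \tfrac{1}{4}\sum_{i=1}^{n}\length{\gs{\vec{b}}_i}^2,
\]
completing the induction. The only subtlety (not really an obstacle) is being careful that the recursive call's projection behaves as expected so that Pythagoras applies cleanly; the dual-basis identity $\vec{b}_n^* = \gs{\vec{b}}_n/\length{\gs{\vec{b}}_n}^2$ is exactly what makes the $\gs{\vec{b}}_n$-component of the residual decouple from the recursion.
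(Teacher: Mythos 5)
Your proof is correct, and it is the standard inductive argument for Babai's nearest plane algorithm. The paper simply cites this lemma from \cite{Babai86} without giving a proof of its own, so there is nothing internal to compare against; your use of the dual-basis identity $\vec{b}_n^* = \gs{\vec{b}}_n/\length{\gs{\vec{b}}_n}^2$ to isolate the last Gram--Schmidt coordinate, followed by Pythagoras and the inductive hypothesis on the rank-$(n-1)$ sublattice, is exactly the argument one finds in the cited source.
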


\subsection{\texorpdfstring{$\delta$}{delta}-Nets and the Spectral Norm}

\begin{definition}
For any $\delta > 0$, $A \subset \R^n$ is a $\delta$-net of $S$ if $A\subseteq S$, and for each $\vec{v} \in S$, there is some $\vec{u} \in A$ such that $\length{\vec{u} - \vec{v}} \leq \delta$.
\end{definition}

We'll be interested in the case when $S$ is a ball, a sphere, or a shell. The next lemma shows that we can do this without many points. The proof is by a standard packing argument. (See Lemma 5.2 of \cite{Vershynin_2012}, for example.)

\begin{lemma}
\label{lem:smallnet}
For any $\delta > 0$, there exists a $\delta$-net of the unit ball in $\R^n$ with $(1+2/\delta)^n$ points. Nets of the same cardinality exist for spherical shells of outer radius one, and for the unit sphere.
\end{lemma}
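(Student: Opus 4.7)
The plan is to use the standard greedy (maximal) packing argument. First, I would let $A \subseteq S$ be a maximal $\delta$-separated subset, that is, a subset maximal under inclusion with the property that $\|\vec{u}-\vec{u}'\| > \delta$ for all distinct $\vec{u},\vec{u}' \in A$. (Existence follows from Zorn's lemma, or more concretely by a greedy procedure which terminates once we establish the cardinality bound below.)

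Next I would observe that maximality immediately implies $A$ is a $\delta$-net of $S$: if some $\vec{v} \in S$ had $\|\vec{v}-\vec{u}\| > \delta$ for all $\vec{u} \in A$, then $A \cup \{\vec{v}\}$ would still be $\delta$-separated, contradicting maximality. So the only real content is to bound $|A|$.

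For the volume bound, the key observation is that the open balls $B(\vec{u}, \delta/2)$ for $\vec{u} \in A$ are pairwise disjoint (by the triangle inequality and $\delta$-separation), and all of them are contained in the enlarged ball $B(\vec{0}, 1+\delta/2)$ when $S$ is either the unit ball, the unit sphere, or any spherical shell of outer radius one (in each case every $\vec{u} \in A$ has $\|\vec{u}\| \le 1$, and every point of $B(\vec{u},\delta/2)$ is then within $1+\delta/2$ of the origin). Comparing volumes in $\R^n$ and using homogeneity of Lebesgue measure,
\[
|A| \cdot (\delta/2)^n \,\vol(B(\vec{0},1)) \;\le\; (1+\delta/2)^n \,\vol(B(\vec{0},1)),
\]
which rearranges to $|A| \le (1 + 2/\delta)^n$, as claimed.

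There is essentially no obstacle here, since this is the textbook packing-net argument (see, e.g., Lemma 5.2 of \cite{Vershynin_2012} cited in the statement). The only minor point to be careful about is that for the sphere and the shell the argument must not rely on $S$ itself being convex or containing the small balls; what we actually use is only that every $\vec{u} \in A$ satisfies $\|\vec{u}\| \le 1$, so the half-radius balls are contained in a common enlarged ball whose volume we can compare against. This uniformly handles all three cases with the same bound.
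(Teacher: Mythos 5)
Your argument is exactly the standard maximal $\delta$-separated set plus volume comparison proof that the paper refers to by citing Lemma~5.2 of~\cite{Vershynin_2012} (the paper itself gives no proof, only this pointer). It is correct, including the observation that the sphere and shell cases only require $\|\vec{u}\| \le 1$ for $\vec{u} \in A$, not convexity of $S$.
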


A $\delta$-net of the unit sphere can be used to accurately approximate the length of any vector.

\begin{lemma}
\label{lem:deltanorm}
Let $\delta \in (0,1)$, and let $A$ be a $\delta$-net of the unit sphere in $\R^n$. Then, for any $\vec{x} \in \R^n$,
\[ 
\max_{\vec{v} \in A} \abs{\inner{\vec{v}, \vec{x}}} \leq \length{\vec{x}} \leq \frac{1}{1-\delta}\cdot\max_{\vec{v} \in A} \abs{\inner{\vec{v}, \vec{x}}} .
\]
\end{lemma}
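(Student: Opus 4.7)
The plan is to handle the two inequalities separately; both are standard consequences of Cauchy--Schwarz combined with the approximation property of a $\delta$-net. The lower bound is immediate: for any unit vector $\vec{v} \in A$, Cauchy--Schwarz gives $\abs{\inner{\vec{v},\vec{x}}} \leq \length{\vec{v}} \cdot \length{\vec{x}} = \length{\vec{x}}$, and taking a maximum over $\vec{v} \in A$ preserves the inequality.

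For the upper bound I would first dispose of the trivial case $\vec{x} = \vec{0}$ and then set $\vec{u} = \vec{x}/\length{\vec{x}}$, which is a unit vector satisfying $\inner{\vec{u},\vec{x}} = \length{\vec{x}}$. Since $A$ is a $\delta$-net of the unit sphere, there exists $\vec{v} \in A$ with $\length{\vec{u}-\vec{v}} \leq \delta$. Then I would write
\[
\length{\vec{x}} = \inner{\vec{u}, \vec{x}} = \inner{\vec{v},\vec{x}} + \inner{\vec{u}-\vec{v},\vec{x}} \leq \abs{\inner{\vec{v},\vec{x}}} + \length{\vec{u}-\vec{v}} \cdot \length{\vec{x}} \leq \max_{\vec{v}' \in A}\abs{\inner{\vec{v}',\vec{x}}} + \delta \length{\vec{x}},
\]
where the second inequality again uses Cauchy--Schwarz. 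Rearranging yields $(1-\delta)\length{\vec{x}} \leq \max_{\vec{v}' \in A} \abs{\inner{\vec{v}',\vec{x}}}$, which, since $\delta < 1$, gives the claimed bound.

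There is no real obstacle here; the only thing to watch is that the net must live on the unit sphere rather than inside the ball so that the approximating vector $\vec{v}$ is itself a unit vector and the Cauchy--Schwarz step produces exactly $\length{\vec{x}}$ on the right-hand side. Given the statement of Lemma~\ref{lem:smallnet}, such a net is guaranteed to exist.
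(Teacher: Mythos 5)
Your proof is correct and takes essentially the same approach as the paper: the paper normalizes $\vec{x}$ to be a unit vector WLOG and applies the net property together with Cauchy--Schwarz to the difference term, which is exactly your argument written with the normalization made explicit.
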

\begin{proof}
Without loss of generality, assume $\length{\vec{x}}  =1$. The first inequality is trivial.
By hypothesis, there is some $\vec{v} \in A$ such that $\length{\vec{v} - \vec{x}} \leq \delta$. Then,
\[
\inner{\vec{v}, \vec{x}} = \inner{\vec{x}, \vec{x}} - \inner{\vec{v} - \vec{x}, \vec{x}}\\ \geq 1 - \delta 
\; .
\]
The result follows.
\end{proof}

Similarly, a $\delta$-net can be used to approximate the spectral norm of a matrix, as defined below.
\begin{definition}
For a matrix $M \in \R^{n\times n}$, the spectral norm of $M$ is defined as 
\[ \length{M} := \sup_{\length{\vec{x}} = 1}\length{M\vec{x}}
\; .
\]
\end{definition}
For a symmetric matrix $M$, $\length{M}$ is equivalently the largest absolute value of an eigenvalue of $M$.

\begin{lemma}[{\cite[Lemma 5.4]{Vershynin_2012}}]
\label{lem:deltaeigenvalues}
For a symmetric matrix $M \in \R^{n\times n}$ and a $\delta$-net of the unit sphere $A$ with $0 < \delta < 1/2$,
\[
 \length{M} \leq \frac{1}{1-2\delta}\cdot \max_{\vec{x} \in A} \abs{\inner{M\vec{x}, \vec{x}}}\; .
\]
\end{lemma}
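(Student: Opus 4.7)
The plan is to exploit the identity $\length{M} = \sup_{\length{\vec{x}}=1}|\inner{M\vec{x},\vec{x}}|$, which holds precisely because $M$ is symmetric (a quick min-max / Rayleigh quotient argument: diagonalize $M$ in an orthonormal eigenbasis, and observe that $|\inner{M\vec{x},\vec{x}}|$ attains its maximum over the unit sphere at an eigenvector corresponding to the eigenvalue of largest absolute value, whose magnitude is exactly $\length{M}$). I would state and use this identity upfront, since without symmetry the net argument gives only the weaker bound $\length{M} \leq \frac{1}{1-\delta}\max_{\vec{x}\in A}\length{M\vec{x}}$ that follows from Lemma~\ref{lem:deltanorm}.

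Next, let $\vec{x}^* \in \R^n$ be a unit vector achieving $|\inner{M\vec{x}^*,\vec{x}^*}| = \length{M}$, and use the net property to choose $\vec{y} \in A$ with $\length{\vec{x}^* - \vec{y}} \leq \delta$. The key calculation is to expand bilinearly:
\[
\inner{M\vec{x}^*,\vec{x}^*} - \inner{M\vec{y},\vec{y}}
= \inner{M(\vec{x}^*-\vec{y}),\vec{x}^*} + \inner{M\vec{y},\vec{x}^*-\vec{y}}.
\]
Applying Cauchy--Schwarz together with the definition of the spectral norm to each term bounds each summand by $\length{M}\cdot\length{\vec{x}^*-\vec{y}} \leq \delta\length{M}$ (using $\length{\vec{x}^*}=\length{\vec{y}}=1$), so the total error is at most $2\delta\length{M}$.

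Combining these steps, we obtain
\[
|\inner{M\vec{y},\vec{y}}| \geq |\inner{M\vec{x}^*,\vec{x}^*}| - 2\delta\length{M} = (1-2\delta)\length{M},
\]
and since $\vec{y} \in A$, this gives $\max_{\vec{x}\in A}|\inner{M\vec{x},\vec{x}}| \geq (1-2\delta)\length{M}$. Dividing by $1-2\delta > 0$ (which is where the hypothesis $\delta < 1/2$ is used) yields the claimed inequality.

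The only real subtlety is the symmetric Rayleigh-quotient identity at the start; the rest is a standard net perturbation calculation. If I were writing the proof for a non-expert reader I would spell out the Rayleigh-quotient step explicitly, but in the context of this paper it is probably fine to cite it as a well-known fact about symmetric matrices.
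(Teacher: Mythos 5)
Your proof is correct and matches the argument in the cited reference (Vershynin's Lemma 5.4), which the paper invokes without reproducing: the Rayleigh-quotient identity $\length{M} = \sup_{\length{\vec{x}}=1}|\inner{M\vec{x},\vec{x}}|$ for symmetric $M$, the bilinear decomposition $\inner{M\vec{x}^*,\vec{x}^*} - \inner{M\vec{y},\vec{y}} = \inner{M(\vec{x}^*-\vec{y}),\vec{x}^*} + \inner{M\vec{y},\vec{x}^*-\vec{y}}$, and the two Cauchy--Schwarz bounds giving the $2\delta\length{M}$ error. There is no gap, and your remark that this is where symmetry is essential (as opposed to the weaker $\frac{1}{1-\delta}$ bound via Lemma~\ref{lem:deltanorm}) is exactly the right thing to flag.
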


\section{Exact \problem{CVPP} with a Promise}
\label{sec:bddalg}

In this section we prove the following theorem, which gives an efficient solution to \problem{CVPP} for points within distance essentially
$\sqrt{\log(2/\eps)/\pi}/(2 \eta_\eps(\lat^*))$. By Corollary~\ref{cor:smoothing-to-lambda}, for $\eps = 1/\poly(n)$ this radius is at least as large as the radius $\sqrt{(\log n)/n} \cdot \lambda_1(\lat)$ achieved by~\cite{LiuLM06}, and moreover, as $\eps$ goes to zero, it converges to the unique decoding radius $\lambda_1(\lat)/2$. 
Also, by Lemma~\ref{lem:gmonodecrease}, this radius is (essentially) increasing as $\eps$ decreases, and thus our algorithm solves a harder problem for smaller $\eps$.

\begin{theorem}
\label{thm:polyeps}
Let $\eps \in (0,1/200)$ and $\phi(\lat) = \dmax s_\eps/\eta_\eps(\lat^*)$ where
\sepsdef and \dmaxdef. Then, there exists an algorithm that solves
\CVPP{1}{\phi} using $O(n N(1+\log n/\log(1/\eps))+n^\omega)$  
arithmetic operations, where $N = O(n \log (1/\eps)/\sqrt{\eps})$ and $n^\omega$ is the number of arithmetic operations
needed to compute the inverse of an $n \times n$ matrix. Moreover, the
preprocessing consists of $N$ vectors sampled from $D_{\lat^*,
\eta_\eps(\lat^*)}$.
\end{theorem}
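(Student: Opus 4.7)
The plan is to implement the natural gradient ascent from the introduction: starting from a target $\vec{t}_0$ within the promised distance $\phi(\lat) = \dmax s_\eps/\eta_\eps(\lat^*)$ of the lattice, iteratively replace $\vec{t}_i$ with an approximation of $\vec{t}_i + \nabla f(\vec{t}_i)/(2\pi s^2 f(\vec{t}_i))$, and finally round to the closest lattice point using Babai's nearest-plane algorithm. Set $s = \eta_\eps(\lat^*)$ and work with the periodic Gaussian of width $1/s$, namely $f(\vec{t}) = \rho_{1/s}(\lat+\vec{t})/\rho_{1/s}(\lat)$. The Poisson summation formula yields the exact identities
\[
f(\vec{t}) = \expect_{\vec{w}\sim D_{\lat^*,s}}[\cos(2\pi\langle \vec{w},\vec{t}\rangle)] \quad\text{and}\quad \nabla f(\vec{t}) = -2\pi\expect_{\vec{w}\sim D_{\lat^*,s}}[\vec{w}\sin(2\pi\langle \vec{w},\vec{t}\rangle)],
\]
so the $N$ preprocessing samples $\vec{w}_1,\ldots,\vec{w}_N$ give the natural unbiased estimators
\[
\tilde f(\vec{t}) = \frac{1}{N}\sum_{i=1}^N \cos(2\pi\langle \vec{w}_i,\vec{t}\rangle),\qquad \tilde{\vec g}(\vec{t}) = -\frac{2\pi}{N}\sum_{i=1}^N \vec{w}_i\sin(2\pi\langle \vec{w}_i,\vec{t}\rangle),
\]
each computable in $O(nN)$ arithmetic operations.

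The first step of the analysis is uniform concentration of $(\tilde f,\tilde{\vec g})$ around $(f,\nabla f)$ with high probability over the sampling. For $\tilde f$ the summands are bounded, so Chernoff--Hoeffding (Lemma~\ref{lem:chernoff}) gives pointwise concentration; for $\tilde{\vec g}$, Lemma~\ref{lem:subgaussian} tells us $\vec{w}_i$ is subgaussian with parameter $s$, so each scalar $\langle \vec{w}_i,\vec{u}\rangle\sin(2\pi\langle \vec{w}_i,\vec{t}\rangle)$ is subexponential (via Corollary~\ref{cor:subgaussianproduct}), and its concentration is controlled by Lemma~\ref{lem:subexp}. Lipschitz continuity of $f,\nabla f$ plus a union bound over a $\delta$-net of the fundamental parallelepiped (Lemma~\ref{lem:smallnet}) upgrades pointwise to uniform estimates. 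The sample count $N = O(n\log(1/\eps)/\sqrt{\eps})$ is dictated by balancing the per-sample additive error $O(1/\sqrt{N})$ against the $\sqrt{\eps}$-scale signal at the boundary of the decoding region while paying the $2^{O(n)}$-size union bound. This is the content of Section~\ref{sec:estimator}.

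The technical core, and the main obstacle, is the idealized gradient analysis. For $\vec{t}$ near a lattice point $\vec{y}$, writing $\vec{u} = \vec{t} - \vec{y}$ and splitting the periodic sum as
\[
\rho_{1/s}(\lat)\cdot f(\vec{t}) = \rho_{1/s}(\vec{u}) + \rho_{1/s}\bigl((\lat\setminus\{\vec 0\}) + \vec{u}\bigr),
\]
the first term is the signal whose contribution to $\nabla f(\vec{t})/(2\pi s^2 f(\vec{t}))$ points exactly toward $\vec{y}$, while the second is the interference from the other peaks. Using the definition of $\eta_\eps$ together with the Banaszczyk-type bounds of Section~\ref{sec:boundsongaussian}, the interference is bounded by essentially $\eps$ times a universal normalizing factor, while the signal is at least $\rho_{1/s}(\vec{u}) \geq \exp(-\pi\dmax^2 s_\eps^2)$. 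The choice $\dmaxdef$ is engineered so that $\exp(-\pi\dmax^2 s_\eps^2)$ strictly dominates $\eps$ (and the $O(\sqrt\eps)$-scale estimation error), which guarantees that an approximate gradient step contracts $\|\vec{u}\|$ by at least a constant factor anywhere inside the decoding radius, and by a polynomial factor once $\|\vec{u}\|$ is a constant factor below $\dmax s_\eps/s$.

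Putting the pieces together, starting from any $\vec{t}_0$ with $\dist(\vec{t}_0,\lat) \leq \phi(\lat)$, the distance to the nearest lattice point $\vec{y}$ shrinks geometrically under the iteration and, after $O(1+\log n/\log(1/\eps))$ gradient steps, falls below the radius at which Babai's nearest-plane algorithm (Lemma~\ref{lem:babai}) run against an HKZ basis precomputed during the preprocessing recovers $\vec{y}$ exactly. Each iteration costs $O(nN)$ operations; the HKZ basis construction and the matrix inversions needed to implement Babai contribute the $n^\omega$ additive term, matching the stated complexity. The hardest part of the proof is the signal-versus-interference accounting above, where the constants are tight and the gradient direction must remain reliable at the very boundary $\phi(\lat)$ of the decoding radius; once that is done, the remainder is a concentration calculation and a clean iteration bound.
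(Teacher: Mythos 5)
Your overall architecture — iterative gradient ascent on the estimator of the periodic Gaussian, followed by a rounding step, with concentration controlled by subgaussianity, subexponential sums, and a net — is indeed the paper's approach, and the signal-versus-interference accounting with $\dmax$, $s_\eps$ is the same. But there is one genuine discrepancy in the final rounding step, and it matters for the theorem as stated.

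You propose to finish with Babai's nearest-plane algorithm against an HKZ basis precomputed during preprocessing. The theorem, however, asserts that ``the preprocessing consists of $N$ vectors sampled from $D_{\lat^*, \eta_\eps(\lat^*)}$'' and nothing else. The paper's proof respects this: after the gradient iterations have brought the target to within distance $1/(2\sqrt{n})$ of the lattice (in the normalization $\eta_\eps(\lat^*)=1$), the query algorithm scans the advice $W$ itself for $n$ linearly independent dual vectors $\vec{v}_1^*,\dots,\vec{v}_n^*$ of norm at most $\sqrt n$, computes the dual set $V$ by a single matrix inversion (the source of the $n^\omega$ term), and rounds the coefficients $\round{\langle\vec{v}_i^*,\vec{t}\rangle}$, which are all $0$ by Cauchy--Schwarz. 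It also has to argue (via the Hessian bound of Lemma~\ref{lem:Hess}) that $W$ indeed contains such a linearly independent set of short vectors with high probability — a step you skip by assuming an HKZ basis is available. Your route is \emph{not wrong} mathematically: one can check, using $\eta_\eps(\pi_{i-1}(\lat)^*) \le \eta_\eps(\lat^*)$ and Lemma~\ref{lem:tighter-smoothing-bound}, that every Gram--Schmidt norm of an HKZ basis is at least $\sqrt{\log(2/\eps)/\pi} \ge 1$ in the same normalization, so Babai would recover the origin once $\|\vec{t}\| < 1/2$; but the resulting theorem would have a weaker preprocessing guarantee than the one stated. If you want Theorem~\ref{thm:polyeps} as written, you need to extract the rounding basis from $W$ itself, as the paper does. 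One more small inaccuracy: at the boundary of the decoding radius the signal $f(\vec{t}) \gtrsim \rho(\vec{t}) \gtrsim \eps^{1/4}$, not $\sqrt{\eps}$; it is the requirement $N \cdot (\eps^{1/4})^2 \gtrsim n\log(1/\eps)$ (for the union bound over a $2^{O(n\log(1/\eps))}$-size net) that yields $N = O(n\log(1/\eps)/\sqrt{\eps})$, and the net must have mesh scaling with $\|\vec{t}\|$ (shells with geometrically varying mesh), which your description glosses over.
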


We note that we can achieve a run-time of $O(nN(1+\log n/\log(1/\eps)))$
arithmetic operations by computing the inverse of a matrix as part of the 
preprocessing.

Our result will follow easily from a proposition about $f_W$, whose proof is in Section~\ref{sec:bddproof}.
\begin{proposition}\label{prop:estimatorisgreat}
Let $\lat \subset \R^n$ be a lattice with $\rho(\lat) = 1+\eps$ with $\eps \in
(0,1/200)$. Let \sepsdef, \dmaxdef, and $\delta(\vec{t}) = \max
\set{\frac{1}{8}, \frac{\|\vec{t}\|}{s_\eps}}$. Let $W = (\vec{w}_1, \ldots, \vec{w}_N )$ be sampled independently from $D_{\lat^*}$. If $N = \Omega(n \log (1/\eps)/\sqrt{\eps})$, then with probability at least $1-2^{-\Omega(n)}$,
\begin{equation}\label{eq:boundononeascent}
\Big\| \frac{\grad f_W(\vec{t})}{2\pi f_W(\vec{t})}  + \vec{t} \Big\| \leq
\eps^{(1-2\delta(\vec{t}))/4} \length{\vec{t}}
\end{equation}
holds simultaneously for all $\vec{t} \in \R^n$ with $\length{\vec{t}} \leq
\dmax s_\eps$.
\end{proposition}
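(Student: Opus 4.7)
The plan is to separate the bound into three pieces: (a) rewriting the left-hand side as the ratio of two empirical averages, (b) bounding the corresponding ``true'' ratio (obtained by replacing $W$ with the full distribution $D_{\lat^*}$), and (c) bounding the deviation between the empirical and true ratios uniformly in $\vec{t}$ on the ball of radius $\dmax s_\eps$.

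For (a), the Poisson-summation identity~\eqref{eq:poisson} gives $f(\vec{t})=\expect_{\vec{w}\sim D_{\lat^*}}[\cos(2\pi\inner{\vec{w},\vec{t}})]$ and, by differentiation, $\grad f(\vec{t})=-2\pi\expect_{\vec{w}}[\vec{w}\sin(2\pi\inner{\vec{w},\vec{t}})]$. Setting
\[
\vec{h}(\vec{w},\vec{t}) \eqdef \vec{t}\cos(2\pi\inner{\vec{w},\vec{t}})-\vec{w}\sin(2\pi\inner{\vec{w},\vec{t}})
\]
yields $\grad f_W(\vec{t})/(2\pi f_W(\vec{t}))+\vec{t}=(\tfrac{1}{N}\sum_i\vec{h}(\vec{w}_i,\vec{t}))/f_W(\vec{t})$, with $\expect_W[\tfrac{1}{N}\sum_i\vec{h}(\vec{w}_i,\vec{t})]=f(\vec{t})\vec{B}(\vec{t})$, where $\vec{B}(\vec{t})\eqdef \vec{t}+\grad f(\vec{t})/(2\pi f(\vec{t}))$ is the ``true bias''.

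For (b), differentiating $\rho(\lat+\vec{t})$ in the primal gives the closed form $\vec{B}(\vec{t})=-\sum_{\vec{y}\in\lat\setminus\{\vec{0}\}}\vec{y}\,e^{-\pi\length{\vec{y}+\vec{t}}^2}/\rho(\lat+\vec{t})$. The normalization $\rho(\lat)=1+\eps$ forces $\eta_\eps(\lat^*)=1$, so Corollary~\ref{cor:smoothing-to-lambda} gives $\lambda_1(\lat)\geq (1-o(1))s_\eps$; hence every nonzero $\vec{y}\in\lat$ has $\length{\vec{y}+\vec{t}}\geq (1-\delta(\vec{t})-o(1))s_\eps$. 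Combining this with Lemma~\ref{lem:betterrhoLtbound} ($f(\vec{t})\geq\rho(\vec{t})=\eps^{\delta(\vec{t})^2+o(1)}$, using $\pi s_\eps^2=\log(2(1+\eps)/\eps)$) and a first-moment Banaszczyk-style estimate via Lemma~\ref{lem:strong-tailbound} applied to $\lat+\vec{t}$, one obtains $\length{\vec{B}(\vec{t})}\leq \eps^{(1-2\delta(\vec{t}))/2-o(1)}\length{\vec{t}}$, comfortably smaller than the target $\eps^{(1-2\delta(\vec{t}))/4}\length{\vec{t}}$.

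For (c), I have to show that with probability $1-2^{-\Omega(n)}$ over $W$, uniformly in $\vec{t}$ with $\length{\vec{t}}\leq \dmax s_\eps$,
\[
|f_W(\vec{t})-f(\vec{t})|\leq \tfrac{1}{2}f(\vec{t})\quad\text{and}\quad\length{\tfrac{1}{N}\textstyle\sum_i\vec{h}(\vec{w}_i,\vec{t})-f(\vec{t})\vec{B}(\vec{t})}\leq \tfrac{1}{4}\eps^{(1-2\delta(\vec{t}))/4}f(\vec{t})\length{\vec{t}}.
\]
For fixed $\vec{t}$ and unit $\vec{u}$, Lemma~\ref{lem:subgaussian} says $\inner{\vec{u},\vec{w}}$ is subgaussian with parameter $1$ (since $\eta_\eps(\lat^*)=1$), so $\inner{\vec{u},\vec{h}(\vec{w},\vec{t})}$ is subgaussian with parameter $O(1)$ (using $|\sin|\leq 1$); Lemma~\ref{lem:subexp} then gives exponential concentration of the empirical mean, while Lemma~\ref{lem:chernoff} handles the scalar $f_W(\vec{t})$. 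To make this uniform in $(\vec{t},\vec{u})$ I take $\delta$-nets of the ball and the unit sphere via Lemma~\ref{lem:smallnet} (each of size $2^{O(n)}$), union-bound, and extend to all points using that $\vec{h}$ and $f_W$ are Lipschitz with constants controlled by $\max_i\length{\vec{w}_i}=O(\sqrt n)$ (which holds with probability $1-2^{-\Omega(n)}$ by Lemma~\ref{lem:strong-tailbound}). Choosing $N=\Omega(n\log(1/\eps)/\sqrt\eps)$ makes the two deviations above hold simultaneously, and the algebraic identity
\[
\frac{\tfrac1N\sum_i\vec{h}(\vec{w}_i,\vec{t})}{f_W(\vec{t})}=\vec{B}(\vec{t})+\frac{\bigl(\tfrac1N\sum_i\vec{h}-f\vec{B}\bigr)-\vec{B}(f_W-f)}{f_W},
\]
combined with $|f_W|\geq f/2$, yields~\eqref{eq:boundononeascent}.

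The main obstacle is the uniform concentration in (c): the absolute tolerance on the vector deviation is $\eps^{(1-2\delta(\vec{t}))/4}f(\vec{t})\length{\vec{t}}$, which can be as small as $\eps^{\delta(\vec{t})^2+(1-2\delta(\vec{t}))/4}s_\eps$ once one substitutes $f(\vec{t})\approx\eps^{\delta(\vec{t})^2}$. The arithmetic fact that $\delta^2+(1-2\delta)/4\leq 1/4$ on $[1/8,\dmax]$ (with the maximum approached near $\dmax\approx 1/2$) is exactly what allows the sample complexity $N\sim n\log(1/\eps)/\sqrt{\eps}$; this explains the role of both the cutoff $\dmax=\tfrac{1}{2}-\tfrac{2}{\pi s_\eps^2}$ and the clamp $\delta(\vec{t})\geq 1/8$ in the statement, and calibrating it against the $2^{O(n)}$ net size and the subexponential tail in Lemma~\ref{lem:subexp} is the delicate step.
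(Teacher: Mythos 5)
Your high-level decomposition (true bias plus estimation error, concentrate via a net) is the same as the paper's, and your accounting of the exponent $\delta^2 + (1-2\delta)/4 \le 1/4$ correctly identifies why $N \sim n\log(1/\eps)/\sqrt\eps$ samples are the right order. But there is a genuine gap in step (c), and it is precisely where the paper spends most of its effort.

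Your concentration tolerance $\tfrac14\eps^{(1-2\delta(\vec{t}))/4}f(\vec{t})\length{\vec{t}}$ is \emph{proportional to} $\length{\vec{t}}$, hence it shrinks to $0$ as $\vec{t}\to\vec{0}$. A single $\delta$-net of the ball with a fixed mesh, plus a Lipschitz extension, cannot produce an estimate whose error is $O(\length{\vec{t}})$ uniformly all the way down to $\vec{t}=\vec{0}$: near the origin the required mesh would have to be arbitrarily fine. The paper deals with this in two pieces. Lemma~\ref{lem:gradcloseallpoints} uses a union of nets over exponentially spaced shells $[e^i,e^{i+1}]$, each with mesh $\propto e^i$, so the resolution scales with the radius; this handles $\eps^{1/8}/(1000n)\le\length{\vec{t}}\le\dmax s_\eps$ and forces the net size up to $2^{O(n\log(1/\eps))}$ (not the $2^{O(n)}$ you quote — this is what produces the extra $\log(1/\eps)$ in $N$). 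For the remaining tiny $\vec{t}$ with $\length{\vec{t}}\le\eps^{1/8}/(1000n)$, the paper proves the bound directly from a Hessian estimate (Lemma~\ref{lem:verycloset}), integrating $Hf_W$ along a segment from the origin: this is an essentially different mechanism, not derivable from pointwise concentration on a net, and it is not present in your outline. Without some replacement for this two-regime argument, step (c) as written does not close.

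Two smaller remarks. In (b) you propose to bound $\vec{B}(\vec{t})=-\sum_{\vec{y}\ne\vec{0}}\vec{y}\,e^{-\pi\|\vec{y}+\vec{t}\|^2}/\rho(\lat+\vec{t})$ via a Banaszczyk first-moment estimate rather than the paper's layer-cake argument (Lemma~\ref{lem:expectation}/Corollary~\ref{cor:simpleexpectbound}); this is a reasonable alternative, but the $o(1)$'s you invoke are not innocuous near $\delta=\dmax$, where $(1-2\delta)/4\approx 1/\log(1/\eps)$ and the target bound is only a constant fraction of $\length{\vec{t}}$ — you would need those errors to be $o(1/\log(1/\eps))$, which requires care. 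Second, the Lipschitz constant of $\frac1N\sum_i\vec{h}(\vec{w}_i,\cdot)$ is controlled by $\|Hf_W\|$, not by $\max_i\length{\vec{w}_i}$; the paper's Lemma~\ref{lem:Hess} shows this is $O(1)$ with high probability (a concentration result about $\frac1N\sum_i\vec{w}_i\vec{w}_i^T$), which is a cleaner bound than the $O(n)$ a naive per-sample estimate yields and is what the nets are calibrated against.
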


We note that for $\eps < 1/200$, $\eps^{(1-2\dmax)/4}
= e^{-\log(1/\eps)/\log(2(1+\eps)/\eps)} \leq 1/2$, so the right hand side of~\eqref{eq:boundononeascent} is at most $\|\vec{t}\|/2$.

\begin{proof}[Proof of Theorem~\ref{thm:polyeps}]
We present an algorithm with probabilistic preprocessing and argue that with positive probability the preprocessing algorithm will output advice that results in a query algorithm that is successful on all relevant inputs. Clearly this implies a deterministic algorithm.

The preprocessing algorithm takes as input a lattice $\lat \subset \R^n$ of rank $n$.  
It returns as advice a sequence of samples $W = (\vec{w}_1,\ldots, \vec{w}_N )$ from
$D_{\lat^*,\eta_{\eps}(\lat^*)}$ where $N = O(n \log (1/\eps)/\sqrt{\eps})$ 
is large to satisfy Proposition~\ref{prop:estimatorisgreat}. 

The query algorithm takes a target point $\vec{t} \in \R^n$ and advice $W$ from
preprocessing. It then iteratively updates $\vec{t} \leftarrow \vec{t} + \grad
f_W(\vec{t})/(2\pi f_W(\vec{t}))$ a total of $1 + \ceil{8 \log(\sqrt{n}
s_\eps)/\log(1/\eps)}$ times. It then scans $W$. Let $V^* =
(\vec{v}^*_1,\dots,\vec{v}^*_{n}) \subset W$ be the first $n$ linearly
independent vectors it finds of length bounded by $\sqrt{n} \eta_{\eps}(\lat^*)$
(it aborts if no such vectors exist). The algorithm computes
$V=(\vec{v}_1,\dots,\vec{v}_n)$ satisfying $\pr{\vec{v}^*_i}{\vec{v}_j} =
\delta_{i,j}$ and returns $\sum c_i \vec{v}_i$ for $c_i =
\round{\inner{\vec{v}_{i}^*, \vec{t}}}$.

By scaling the lattice appropriately, we can assume without loss of generality
that $\rho(\lat) = 1+\eps$ so that $\eta_\eps(\lat^*) = 1$. Moreover, it suffices to prove correctness for the case when $\vec0$ is the closest lattice vector to $\vec{t}$, and therefore $\length{\vec{t}} \leq \dmax s_\eps$. 
The reason is that for $\vec{y} \in \lat$, 
\[ 
f_W(\vec{t} + \vec{y}) = \frac{1}{N}\sum \cos(2\pi \inner{\vec{w}_i, \vec{t} + \vec{y}}) = \frac{1}{N}\sum \cos(2\pi \inner{\vec{w}_i, \vec{t}} )  = f_W(\vec{t}),
\] 
so $f_W(\vec{t})$ is periodic over the lattice, and so is its gradient, and also
\[ 
\sum \round{\inner{\vec{v}_i^*,\vec{t} + \vec{y}}}\vec{v}_i = 
\sum \round{\inner{\vec{v}_i^*, \vec{t}}}\vec{v}_i + \sum \inner{\vec{v}_i^*, \vec{y}}\vec{v}_i = 
\vec{y} + \sum \round{\inner{\vec{v}_i^*, \vec{t}}}\vec{v}_i
\]
for any $\vec{y} \in \lat$.

We now argue that with probability $1-2^{-\Omega(n)}$ taken over the preprocessing, the query algorithm succeeds in finding the set $V^*$ (and hence also $V$). Let $W' = (\vec{w}_1,\dots,\vec{w}_m)$ for $m=O(n)$.
By Lemma~\ref{lem:strong-tailbound}, we have that
$\Pr[\|\vec{w}_i\| > \sqrt{n}] < e^{-\frac{n}{2}(\sqrt{2\pi}-1)^2} \leq e^{-n}$, and hence 
with probability at least $1-me^{-n} = 1-2^{-\Omega(n)}$,
all vectors in $W'$ are of norm at most $\sqrt{n}$. 
In order to show that the vectors in $W'$ span $\R^n$ we can, e.g., apply
Lemma~\ref{lem:Hess} below to $W'$. We get that for $m = O(n)$
large enough, the Hessian of $f_{W'}$ satisfies
\[
\|Hf_{W'}(\vec0)+2\pi I_n\| \leq \frac{4\pi \eps}{1+\eps} \Big(\log \frac{2(1+\eps)}{\eps} + 1\Big) + 1 < 2\pi 
\]
with probability $1-2^{-\Omega(n)}$, where we used $\eps < 1/200$. In particular, the matrix $Hf_{W'}(\vec0) = -4\pi^2/m \sum_{i=1}^m \vec{w}_i \vec{w}_i^T$
(see Eq.~\eqref{eq:ar-estimator}) is invertible, and hence $W'$ spans $\R^n$. 

Now assume that $W$ contains such a subset $V^*$ and
satisfies the property in Proposition~\ref{prop:estimatorisgreat}. By the union
bound this happens with probability at least $1-2^{-\Omega(n)}$ over the
preprocessing.  Then using the remark below Proposition~\ref{prop:estimatorisgreat}, for any target $\vec{t}$ satisfying $\length{\vec{t}} \leq
\dmax s_\eps$, the length of $\vec{t}$ shrinks by a factor of at least $2$ in the first iteration. In each subsequent iteration, $\length{\vec{t}}
\leq \dmax s_\eps/2 < s_\eps/4$, and hence the target shrinks by a factor of at
least $\eps^{(1-2(1/4))/4} = \eps^{1/8}$. Therefore, after $1 + \ceil{8
\log(\sqrt{n} s_\eps)/\log(1/\eps)}$ total iterations, we have $\length{\vec{t}}
< 1/(2\sqrt{n})$. So, by Cauchy-Schwarz, $|\pr{\vec{t}}{\vec{v}^*_i}| < 1/2$ and
$ \round{\pr{\vec{t}}{\vec{v}^*_i}} = 0$ for all $i$.  Therefore, $\sum_{i=1}^n
\vec{v}_i\round{\pr{\vec{v}^*_i}{\vec{t}}} = \vec{0}$, and correctness follows. 

The running time consists of $1 + \ceil{8
\log(\sqrt{n} s_\eps)/\log(1/\eps)} = 2 +O(\log(n )/\log(1/\eps))$ 
iterations, each dominated by the computation of $O(N)$ dot products, followed
by a matrix inversion. Each dot product takes $O(n)$ arithmetic operations, and
the matrix inversion takes $n^\omega$. So, the total running time is $O(n
N(1+\log(n )/\log(1/\eps)) + n^\omega)$ arithmetic operations as
claimed.
\end{proof}

We remark that for small enough $\eps < 1/\poly(n)$ ($\eps<n^{-5}$ suffices), the number of iterations of gradient ascent used by the algorithm is only $1 + \ceil{8\frac{\log(\sqrt{n} s_\eps)}{\log(1/\eps)}} = 2 $.

\begin{corollary}
\label{cor:bddparams}
For $\Omega(1/\sqrt{n}) < \alpha < 1/2$, there exists an algorithm that solves \BDD{\alpha} with preprocessing consisting of \[N =O\Big(  \frac{\alpha^2 n^2}{(1-2\alpha)^2}  \cdot \exp\Big(\frac{ \alpha^2 n }{ (1-2\alpha)^2} + \frac{4}{1-2\alpha}\Big) \Big)\]
vectors using $O(n N(1 + \frac{(1-2\alpha)^2\log n}{\alpha^2n}))= O(n N(1 + \frac{\log n}{\alpha^2n}))$ arithmetic operations.
\end{corollary}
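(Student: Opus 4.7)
The strategy is to invoke Theorem~\ref{thm:polyeps} with a carefully chosen $\epsilon = \epsilon(\alpha,n) \in (0,1/200)$ so that the promised decoding radius $\phi(\lat) = \dmax s_\epsilon/\eta_\epsilon(\lat^*)$ of that theorem is at least $\alpha\,\lambda_1(\lat)$ for every lattice $\lat$. The bounds on preprocessing size and runtime in the corollary will then follow directly from the corresponding bounds in Theorem~\ref{thm:polyeps} once we plug in $\log(1/\epsilon)$ and $1/\sqrt{\epsilon}$.

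The first step is to translate the bound $\alpha\,\lambda_1(\lat) \leq \dmax s_\epsilon/\eta_\epsilon(\lat^*)$ into a condition that depends only on $\epsilon$ and $n$. Using Corollary~\ref{cor:smoothing-to-lambda} and the fact that $s_\epsilon \geq \sqrt{\log(2/\epsilon)/\pi}$, we obtain
\[
\lambda_1(\lat) \;\leq\; \frac{s_\epsilon + \sqrt{n/(2\pi)}}{\eta_\epsilon(\lat^*)},
\]
so it is enough to choose $\epsilon$ such that $\alpha(s_\epsilon + \sqrt{n/(2\pi)}) \leq \dmax s_\epsilon$, equivalently (using $\dmax = 1/2 - 2/(\pi s_\epsilon^2)$)
\[
\frac{1-2\alpha}{2}\,s_\epsilon \;\geq\; \alpha\sqrt{\tfrac{n}{2\pi}} + \frac{2}{\pi s_\epsilon}.
\]

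The plan is to take $\epsilon$ so that $\pi s_\epsilon^2 = \log(2(1+\epsilon)/\epsilon) = \frac{2\alpha^2 n}{(1-2\alpha)^2} + \frac{8}{1-2\alpha}$. Under the hypothesis $\alpha \geq \Omega(1/\sqrt{n})$ and $\alpha < 1/2$, one checks by a direct (but slightly tedious) computation that $s_\epsilon$ exceeds $\frac{2\alpha}{1-2\alpha}\sqrt{\tfrac{n}{2\pi}}$ by enough of a margin to absorb the $2/(\pi s_\epsilon)$ correction term, verifying the displayed inequality. Moreover $\epsilon < 1/200$ because $8/(1-2\alpha) > 8 > \log(400)$, so Theorem~\ref{thm:polyeps} applies. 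Plugging into the theorem,
\[
\log(1/\epsilon) \;=\; \Theta\!\Big(\tfrac{\alpha^2 n}{(1-2\alpha)^2}\Big), \qquad \tfrac{1}{\sqrt{\epsilon}} \;=\; \Theta\!\Big(\exp\!\Big(\tfrac{\alpha^2 n}{(1-2\alpha)^2} + \tfrac{4}{1-2\alpha}\Big)\Big),
\]
so $N = O(n\log(1/\epsilon)/\sqrt{\epsilon})$ matches the stated bound, and the arithmetic-operation count $O(nN(1+\log n/\log(1/\epsilon)) + n^\omega) = O(nN(1 + \log n/(\alpha^2 n)))$ also matches, since $\alpha \geq \Omega(1/\sqrt{n})$ forces $nN = \Omega(n^3) \geq n^\omega$ and $(1-2\alpha)^2 \leq 1$.

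The main point requiring care is the tuning of the additive $+8/(1-2\alpha)$ slack inside $\pi s_\epsilon^2$: it must be large enough that the $2/(\pi s_\epsilon)$ correction in $\dmax$ together with the $\alpha\sqrt{n/(2\pi)}$ additive term from Corollary~\ref{cor:smoothing-to-lambda} are both absorbed, yet small enough that after passing through $1/\sqrt{\epsilon} = \exp(\pi s_\epsilon^2/2)/O(1)$ it yields exactly the stated exponent $\alpha^2 n/(1-2\alpha)^2 + 4/(1-2\alpha)$ rather than something larger. Aside from this constant balancing, the corollary is a direct instantiation of Theorem~\ref{thm:polyeps}.
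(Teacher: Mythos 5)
Your proposal follows the same route as the paper: set $\pi s_\epsilon^2 = 2\alpha^2 n/(1-2\alpha)^2 + 8/(1-2\alpha)$, use the upper bound $\eta_\epsilon(\lat^*) \le (s_\eps + \sqrt{n/(2\pi)})/\lambda_1(\lat)$ from Lemma~\ref{lem:tighter-smoothing-bound}, and verify that the resulting decoding radius $\dmax s_\eps/\eta_\eps(\lat^*)$ is at least $\alpha\lambda_1(\lat)$. The ``tedious computation'' you defer is exactly what the paper carries out: after applying $\sqrt{x+y} \le \sqrt{x} + y/(2\sqrt{x})$ to the $\sqrt{\alpha^2 n^2 + 4n(1-2\alpha)}$ term, the numerator and denominator cancel to give exactly $\alpha$. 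So the additive slack $8/(1-2\alpha)$ is not merely ``large enough''; it is tuned so that this cancellation is exact. Your bookkeeping for $\log(1/\epsilon)$, $1/\sqrt{\epsilon}$, $N$, and the $(1+\log n/\log(1/\epsilon))$ factor all match the paper's.

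One point is wrong as stated: you justify dropping the $n^\omega$ term from Theorem~\ref{thm:polyeps}'s runtime by claiming $nN = \Omega(n^3) \geq n^\omega$. This fails in the regime $\alpha = \Theta(1/\sqrt{n})$: there the exponential factor is $O(1)$ and $\alpha^2 n^2 = \Theta(n)$, so $N = O(n)$ and $nN = O(n^2)$, which can be below $n^\omega$. The paper instead removes the $n^\omega$ term via the remark immediately following Theorem~\ref{thm:polyeps}: the matrix inverse used in the final rounding step can be computed once as part of the preprocessing, so the query-time cost is $O(nN(1+\log n/\log(1/\epsilon)))$ regardless of how small $N$ is. You should appeal to that remark rather than to a crude size comparison.
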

\begin{proof}
Let $\eps$ be given by
\[
1/\eps =  
\frac{1}{2}\cdot \exp\Big( \frac{ 2\alpha^2 n} { (1-2\alpha)^2 } + \frac{8}{1-2\alpha}\Big) - 1 > 200
\; ,\\
\]
and notice that
\[
\pi s_\eps^2 = \log\Big(2\cdot\frac{1+\eps}{\eps} \Big) = \frac{ 2\alpha^2 n + 8(1-2\alpha)} { (1-2\alpha)^2 }
\; .
\]
Using Lemma~\ref{lem:tighter-smoothing-bound}, the decoding radius given by Theorem~\ref{thm:polyeps} satisfies
\begin{align*}
\dmax \cdot \frac{s_\eps}{\eta_\eps(\lat^*)} &\geq \frac{\dmax s_\eps}{s_\eps+\sqrt{n/(2\pi)}} \cdot   \lambda_1(\lat)\\
&= \frac{\pi s_\eps^2 - 4}{2\pi s_\eps^2+\sqrt{2\pi ns_\eps^2}} \cdot \lambda_1(\lat)\\
&= \frac{2\alpha^2 n + 4 - 16\alpha^2}{4\alpha^2n + 16(1-2\alpha) + 2(1-2\alpha)\sqrt{\alpha^2 n^2 + 4n(1-2\alpha)}} \cdot \lambda_1(\lat)\\
&\geq \frac{2\alpha^2 n + 4 - 16\alpha^2}{4\alpha^2n + 16(1-2\alpha) + 2(1-2\alpha)(\alpha n + 2(1-2\alpha)/\alpha)} \cdot \lambda_1(\lat)\\
&= \alpha \cdot   \lambda_1(\lat)
\; ,
\end{align*}
where we have used the inequality $\sqrt{x + y} \leq \sqrt{x} + y/(2\sqrt{x})$ for $x, y > 0$.
\end{proof}

We remark that one can strengthen the bound in Lemma~\ref{lem:tighter-smoothing-bound} using
the first bound in Lemma~\ref{lem:strong-tailbound}, and as a result get improved dependence 
on $\alpha$ in Corollary~\ref{cor:bddparams} especially for large $\alpha$. Since the resulting expressions have no nice closed form, we leave the straightforward calculation to the interested reader. 

\section{Proof of Proposition~\ref{prop:estimatorisgreat}}
\label{sec:bddproof}

Our goal is to show that $\grad f_W(\vec{t})/(2\pi f_W(\vec{t}))$ is close to $-\vec{t}$ when $\length{\vec{t}}$ is small. 
We start by showing in Section~\ref{sec:boundsongaussian} that this is satisfied by the \emph{exact} function $f$, i.e., that $\grad f(\vec{t})/(2 \pi f(\vec{t}))$ is close to $-\vec{t}$. 
We also prove several other bounds on $f$. We then complete the proof in Section~\ref{sec:estimator} by arguing that $f_W$ and $f$ are sufficiently close and so are their gradients. 

\subsection{Three Bounds on the Periodic Gaussian}
\label{sec:boundsongaussian}

We first give in Lemma~\ref{lem:rhoL} a general bound (illustrated in Figure~\ref{fig:boundonf}) on $f(\vec{t})$ itself. 
This will not be used in the sequel and is included here as a warmup and for future reference.
We then use a similar idea in Lemma~\ref{lem:expectation} to show that $-\grad f(\vec{t})/(2\pi f(\vec{t}))$ is close to $\vec{t}$, and in Corollary~\ref{cor:simpleexpectbound} bring this bound to a more convenient form. Finally, in Lemma~\ref{lem:exactHess} we similarly bound the Hessian $H f(\vec{t})$.

\begin{figure}
\begin{centering}
  \subfigure[ ]{\includegraphics[width=0.4\textwidth]{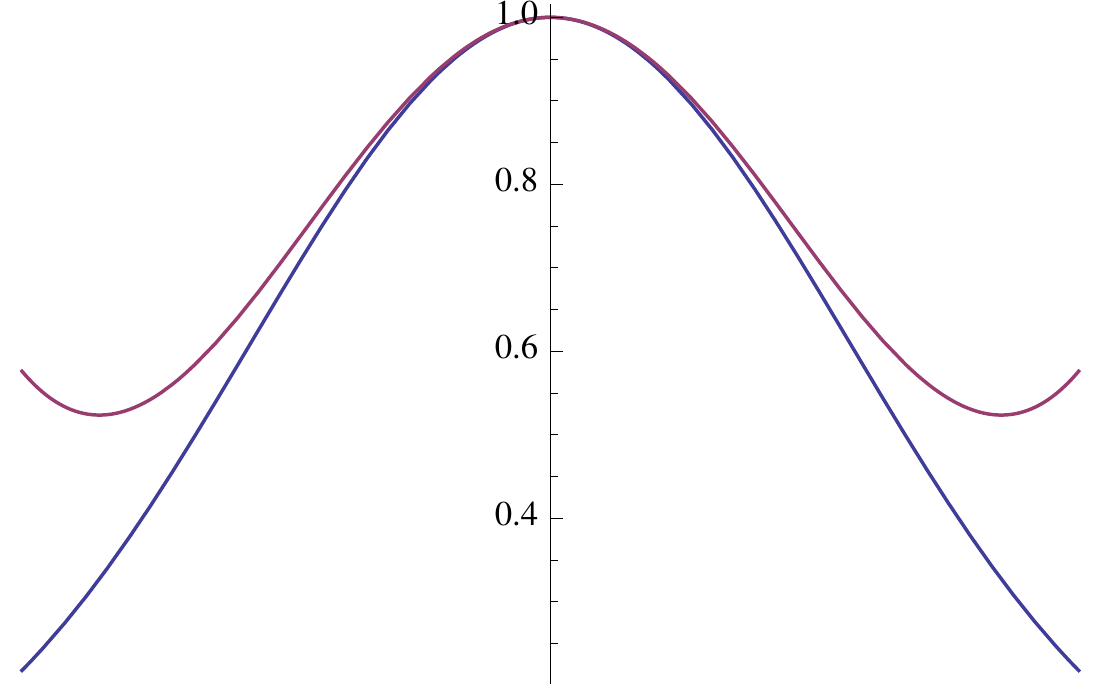}}\hspace{0.2cm}
  \subfigure[ ]{\includegraphics[width=0.4\textwidth]{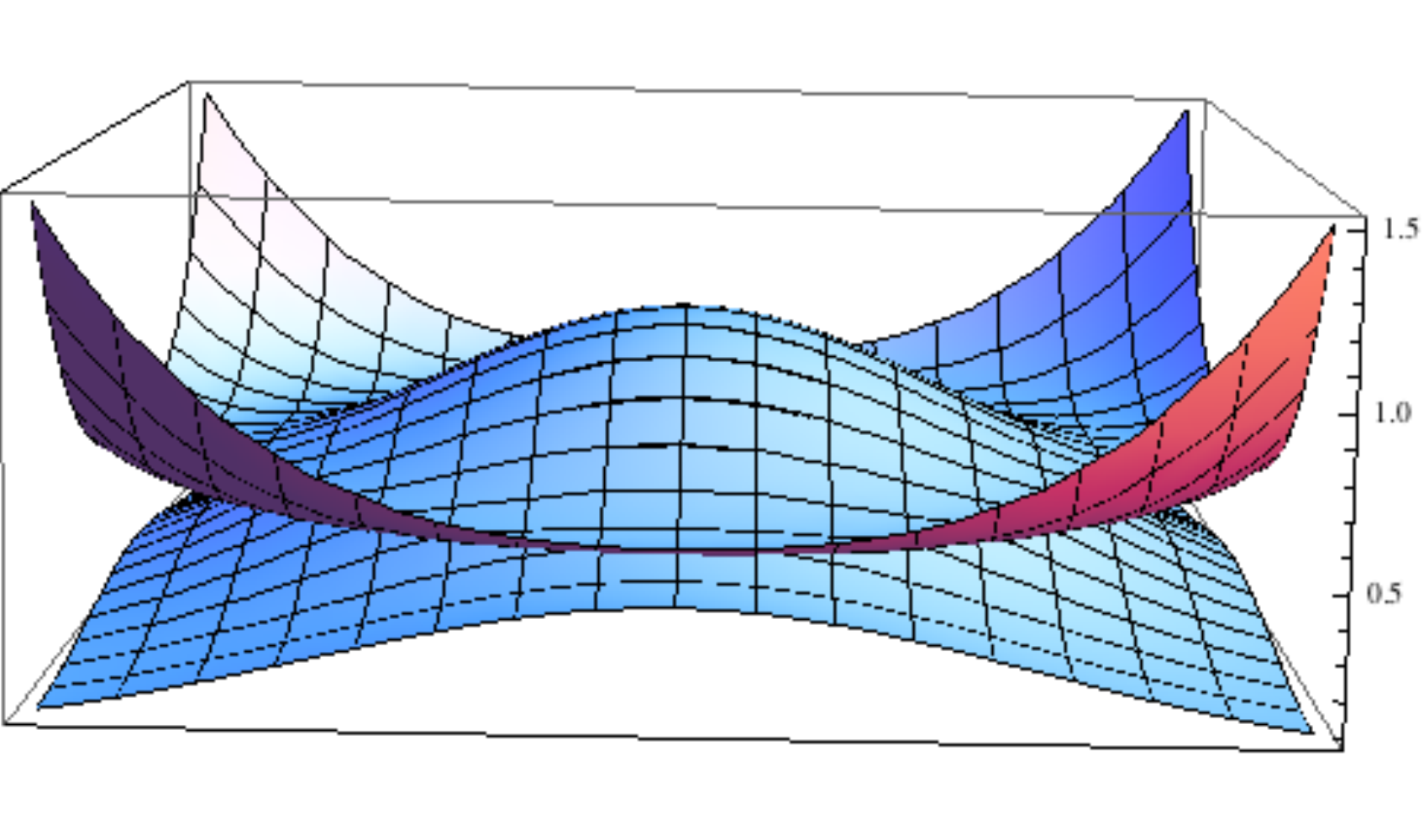}}
\caption{$f(\vec{t})$ and our bound for $\length{\vec{t}} \lesssim s_\eps$ for an example lattice.}
\label{fig:boundonf}
\par\end{centering}
\end{figure}

\begin{lemma}
\label{lem:rhoL}
Let $\epsilon >0$ and $\lat \subset \R^n$ a lattice with $\rho(\lat) = 1+\epsilon$.
Then, for any $\vec{t} \in \R^n$,
\begin{align*}
f(\vec{t}) &\le 
\rho(\vec{t}) \Big( \frac{1}{1+\eps} + \frac{\eps}{1+\eps} \cdot \cosh( 2\pi s_\eps \length{\vec{t}}) \Big)
+ 2 \pi  \|\vec{t}\| \int_{s_\eps-\length{\vec{t}}}^{s_\eps+\length{\vec{t}}} e^{-\pi z^2}{\rm d}z
\; 
\end{align*}
where \sepsdef.
\end{lemma}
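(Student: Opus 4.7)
The plan is to start from the identity $\rho(\lat+\vec{t}) = \rho(\vec{t})\sum_{\vec{y}\in\lat}\rho(\vec{y})\cosh(2\pi\langle\vec{y},\vec{t}\rangle)$, which is exactly the identity derived in the proof of Lemma~\ref{lem:betterrhoLtbound}, and divide by $\rho(\lat) = 1+\eps$. I would then split the lattice sum according to whether $\vec{y}=\vec{0}$, $0<\|\vec{y}\|\leq s_\eps$, or $\|\vec{y}\|>s_\eps$; these three pieces are designed to correspond to the three summands of the right-hand side.

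The first two pieces are routine. The $\vec{y}=\vec{0}$ contribution is exactly $\rho(\vec{t})/(1+\eps)$, matching the first summand on the right-hand side. For $0<\|\vec{y}\|\leq s_\eps$, Cauchy--Schwarz gives $|\langle\vec{y},\vec{t}\rangle|\leq s_\eps\|\vec{t}\|$, so by monotonicity of $\cosh$ on $[0,\infty)$ each summand is at most $\rho(\vec{y})\cosh(2\pi s_\eps\|\vec{t}\|)$; combined with the trivial inequality $\sum_{0<\|\vec{y}\|\leq s_\eps}\rho(\vec{y})\leq \rho(\lat)-1 = \eps$, this yields the middle summand $\rho(\vec{t})\eps\cosh(2\pi s_\eps\|\vec{t}\|)/(1+\eps)$.

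For the tail $\|\vec{y}\|>s_\eps$, I would convert back to the shifted-lattice picture via $2\rho(\vec{t})\rho(\vec{y})\cosh(2\pi\langle\vec{y},\vec{t}\rangle) = \rho(\vec{y}+\vec{t})+\rho(\vec{y}-\vec{t})$. Since the index set $\{\vec{y}\in\lat:\|\vec{y}\|>s_\eps\}$ is closed under $\vec{y}\mapsto-\vec{y}$, the substitution $\vec{y}\leftrightarrow-\vec{y}$ collapses the two halves and reduces the goal to showing $\frac{1}{1+\eps}\sum_{\vec{y}\in\lat,\|\vec{y}\|>s_\eps}\rho(\vec{y}+\vec{t})\leq 2\pi\|\vec{t}\|\int_{s_\eps-\|\vec{t}\|}^{s_\eps+\|\vec{t}\|}e^{-\pi z^2}\,dz$. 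Note that the defining relation $2(1+\eps)e^{-\pi s_\eps^2}=\eps$ is precisely what should tie the $1/(1+\eps)$ prefactor on the left to the Gaussian mass $e^{-\pi s_\eps^2}$ inside the integration interval on the right.

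This tail bound is the main technical obstacle. My approach would be to write each term of the shifted sum via the integral representation $e^{-\pi a^2} = \int_a^\infty 2\pi z e^{-\pi z^2}\,dz$ with $a = \|\vec{y}+\vec{t}\|$, swap the (absolutely convergent) sum and integral, and bound the resulting lattice-point count $N(z):=|\{\vec{y}\in\lat:\|\vec{y}\|>s_\eps,\ \|\vec{y}+\vec{t}\|\leq z\}|$. The elementary identity $e^{-\pi(r-\|\vec{t}\|)^2}-e^{-\pi(r+\|\vec{t}\|)^2} = \int_{r-\|\vec{t}\|}^{r+\|\vec{t}\|}2\pi z e^{-\pi z^2}\,dz$ applied to $r=\|\vec{y}\|$ is what I expect to produce the interval of integration in the claimed bound, with the $2\pi\|\vec{t}\|$ prefactor emerging as the width of the Gaussian shell around $s_\eps$. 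The delicate part will be carrying out the bookkeeping so that contributions from different radii $\|\vec{y}\|$ combine cleanly (without overcounting) against the global budget $\sum_{\vec{y}\neq\vec{0}}\rho(\vec{y}) = \eps$; this is the only place where the proof really uses the assumption $\rho(\lat)=1+\eps$ in a quantitative way, and it is where I expect all the real work to lie.
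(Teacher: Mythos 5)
Your decomposition has a genuine gap: the inequality you propose to prove for the tail piece,
\[
\frac{1}{1+\eps}\sum_{\vec{y}\in\lat,\,\|\vec{y}\|>s_\eps}\rho(\vec{y}+\vec{t})\;\leq\;2\pi\|\vec{t}\|\int_{s_\eps-\|\vec{t}\|}^{s_\eps+\|\vec{t}\|}e^{-\pi z^2}\,dz,
\]
is false in general. Take $\lat = a\Z^n$ with $n\geq 2$ and $a$ chosen so that $\rho(\lat)=1+\eps$; then $e^{-\pi s_\eps^2}=\eps/(2(1+\eps))\approx n\,e^{-\pi a^2}$, so $s_\eps^2 \approx a^2-\tfrac{1}{\pi}\log n < a^2$ and \emph{every} nonzero lattice vector lies in your tail region $\|\vec{y}\|>s_\eps$. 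The left-hand side is then $f(\vec{t})-\rho(\vec{t})/(1+\eps)$, which tends to $\eps/(1+\eps)>0$ as $\vec{t}\to\vec{0}$, while the right-hand side tends to $0$. The root cause is that you split the lattice sum by $\|\vec{y}\|$: this leaves the entire $\eps$ budget of $\sum_{\vec{y}\neq\vec{0}}\rho(\vec{y})$ available to sit in the tail, where no factor of $\|\vec{t}\|$ or $\cosh(2\pi s_\eps\|\vec{t}\|)$ can rescue the claimed bound. Your first two pieces ($\vec{y}=\vec{0}$ and $0<\|\vec{y}\|\leq s_\eps$) are each correct, but the three bounds cannot coexist because the middle piece and the tail piece both implicitly charge against the same $\eps$ mass, and the tail's budget can in fact be all of it.

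The paper avoids this by never partitioning by $\|\vec{y}\|$. It rewrites $\expect_{\vec{y}\sim D_\lat}[\cosh(2\pi\langle\vec{y},\vec{t}\rangle)]$ as $1+2\pi\|\vec{t}\|\int_0^\infty\Pr[|\langle\vec{y},\vec{t}\rangle|>s\|\vec{t}\|]\sinh(2\pi s\|\vec{t}\|)\,ds$, so the split is by the size $s$ of the one-dimensional projection $|\langle\vec{y},\vec{t}\rangle|/\|\vec{t}\|$, not by $\|\vec{y}\|$. The crucial ingredient your proposal never invokes is Lemma~\ref{lem:subgaussian}: $D_\lat$ is subgaussian with parameter $1$, giving $\Pr[|\langle\vec{y},\vec{t}\rangle|>s\|\vec{t}\|]\leq 2e^{-\pi s^2}$ for all $s$, uniformly in the lattice. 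Combined with $\Pr[\vec{y}\neq\vec{0}]=\eps/(1+\eps)$ this yields $\Pr[\cdot]\leq\min(\eps/(1+\eps),\,2e^{-\pi s^2})$; the two bounds cross exactly at $s=s_\eps$, which is what makes the allocation of the $\eps$ budget between the $[0,s_\eps]$ and $[s_\eps,\infty)$ ranges come out right, and also what produces the Gaussian-interval integral. Without a subgaussian-type one-dimensional tail bound, a partition-by-$\|\vec{y}\|$ argument cannot reach the stated inequality.
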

Contrast this with the easy lower bound 
$f(\vec{t}) \ge 
\rho(\vec{t})$ from Lemma~\ref{lem:betterrhoLtbound} valid for all lattices and all $\vec{t}$.
\begin{proof}
We can write
\begin{align}\label{eq:firstbound}
f(\vec{t}) = \frac{\rho(\lat+\vec{t})}{\rho(\lat)} = \frac{\rho(\vec{t})}{\rho(\lat)}\cdot\sum_{\vec{y} \in \lat} e^{-2\pi \inner{\vec{y}, \vec{t}}}\rho(\vec{y}) = \rho(\vec{t}) \expect_{\vec{y} \sim D_\lat}[\cosh(2 \pi \inner{\vec{y}, \vec{t}})].
\end{align}
We now use the fact that for any real-valued random variable $X$ and (sufficiently nice) even function $g:\R \to \R$,
\[
\expect_X[g(X)] = \expect_X[g(|X|)] = g(0) + \int_0^\infty g'(s) \Pr_X[\abs{X} > s]{\rm d}s 
\; .
\]
Therefore, the expectation in Eq.~\eqref{eq:firstbound} is given by
\begin{align}\label{eq:expec1}
1+ 2 \pi \length{\vec{t}} \int_{s=0}^\infty \Pr\big[ \abs{\inner{\vec{y}, \vec{t}}} > s\length{\vec{t}} \big] \sinh(2 \pi s\length{\vec{t}}) {\rm d}s.
\end{align}
We can upper bound the probability using Lemma~\ref{lem:subgaussian} (and noticing that $\vec{y}$ is nonzero with probability $\eps/(1+\eps)$) by
\[
\Pr[|\inner{\vec{y}, \vec{t}}| > s\length{\vec{t}}] \le \min\Big(\frac{\eps}{1+\eps},\, 2 e^{-\pi s^2} \Big).
\]
The minimum is determined by the second term for $s>s_\eps$. We can therefore bound the integral in Eq.~\eqref{eq:expec1} from above by the sum of two integrals, the first being
\[
\frac{\eps}{1+\eps}\int_{s=0}^{s_\eps} \sinh(2 \pi s \|\vec{t}\|) {\rm d}s = \frac{\eps}{1+\eps} \cdot \frac{\cosh( 2\pi s_\eps \length{\vec{t}})-1}{2\pi \length{\vec{t}}},
\]
and the second being
\[
2 \int_{s=s_\eps}^\infty e^{-\pi s^2} \sinh(2 \pi s \length{\vec{t}}) {\rm d}s =
\frac{1}{\rho(\vec{t})} \int_{s_\eps-\length{\vec{t}}}^{s_\eps+\length{\vec{t}}} e^{-\pi z^2}{\rm d}z.
\]
Putting it all together, we obtain the desired bound
\begin{align*}
\frac{\rho(\lat+\vec{t})}{\rho(\lat)} &\le 
\rho(\vec{t}) \Big( \frac{1}{1+\eps} + \frac{\eps}{1+\eps} \cdot \cosh( 2\pi s_\eps \length{\vec{t}})
+ 2 \pi  \|\vec{t}\| \frac{1}{\rho(\vec{t})} \int_{s_\eps-\length{\vec{t}}}^{s_\eps+\length{\vec{t}}} e^{-\pi z^2}{\rm d}z \Big).\qedhere
\end{align*}
\end{proof}

\begin{lemma}
\label{lem:expectation}
Let $\epsilon > 0$ and $\lat \subset \R^n$ a lattice with $\rho(\lat) = 1+\epsilon$.
Then, for any $\vec{t} \in \R^n$,
\[ \Big\| \frac{\grad f(\vec{t})}{2\pi f(\vec{t})} + \vec{t} \Big\| \leq \frac{\epsilon}{1+\eps}\cdot(s_\eps\sinh(2\pi s_\eps\length{\vec{t}}) + \length{\vec{t}}\cosh(2\pi s_\eps \length{\vec{t}}))+ \frac{1}{\rho(\vec{t})}\cdot(1+2\pi \length{\vec{t}}^2)\int_{s_\eps-\length{\vec{t}}}^{s_\eps + \length{\vec{t}}} e^{-\pi z^2}{\rm d}z \; \]
where \sepsdef.
\end{lemma}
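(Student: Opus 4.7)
The plan is to differentiate the representation $f(\vec{t}) = \rho(\vec{t})\cdot\expect_{\vec{y}\sim D_\lat}[\cosh(2\pi\inner{\vec{y},\vec{t}})]$ used in the proof of Lemma~\ref{lem:rhoL}. A direct computation and rearrangement yields
\[
\frac{\grad f(\vec{t})}{2\pi f(\vec{t})} + \vec{t} \;=\; \frac{\rho(\vec{t})}{f(\vec{t})}\cdot\expect_{\vec{y}\sim D_\lat}\!\bigl[\vec{y}\sinh(2\pi\inner{\vec{y},\vec{t}})\bigr].
\]
Invoking Lemma~\ref{lem:betterrhoLtbound} gives $\rho(\vec{t})/f(\vec{t})\leq 1$, and applying the triangle inequality to the vector-valued lattice sum reduces the problem to bounding the scalar expectation $\expect_{\vec{y}\sim D_\lat}[\|\vec{y}\|\cdot|\sinh(2\pi\inner{\vec{y},\vec{t}})|]$ by the right-hand side of the lemma.

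The heart of the argument is a one-dimensional integration closely following Lemma~\ref{lem:rhoL}. Setting $X = \inner{\vec{y},\vec{t}/\|\vec{t}\|}$, so that $|\sinh(2\pi\inner{\vec{y},\vec{t}})| = \sinh(2\pi\|\vec{t}\|\cdot|X|)$, I apply the tail identity $\expect[g(|X|)] = \int_0^\infty g'(s)\Pr[|X|>s]\,ds$ to the even function $g(s) = s\sinh(2\pi\|\vec{t}\|s)$ to express the core quantity as
\[
\expect\!\bigl[X\sinh(2\pi\|\vec{t}\|X)\bigr] \;=\; \int_0^\infty\!\bigl[\sinh(2\pi\|\vec{t}\|s) + 2\pi\|\vec{t}\|s\cosh(2\pi\|\vec{t}\|s)\bigr]\Pr[|X|>s]\,ds.
\]
I then split at $s_\eps$, using $\Pr[|X|>s]\leq \eps/(1+\eps)$ for $s\leq s_\eps$ (since $\Pr[\vec{y}\neq\vec{0}]=\eps/(1+\eps)$) and $\Pr[|X|>s]\leq 2e^{-\pi s^2}$ for $s>s_\eps$ from the subgaussianity of $D_\lat$ (Lemma~\ref{lem:subgaussian}); the crossover at $s_\eps$ is ensured by the definition of $s_\eps$, which makes $2e^{-\pi s_\eps^2} = \eps/(1+\eps)$. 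The $s\le s_\eps$ piece telescopes to $\tfrac{\eps}{1+\eps}\cdot s_\eps\sinh(2\pi s_\eps\|\vec{t}\|)$. For the $s>s_\eps$ piece, the key is to complete the square in the exponents:
\[
2e^{-\pi s^2}\sinh(2\pi\|\vec{t}\|s) = \rho(\vec{t})^{-1}\bigl(e^{-\pi(s-\|\vec{t}\|)^2}-e^{-\pi(s+\|\vec{t}\|)^2}\bigr),
\]
and the analogous identity for $\cosh$. After substitution, the $\sinh+2\pi\|\vec{t}\|s\cosh$ combination collapses into a Gaussian integral over $[s_\eps-\|\vec{t}\|,s_\eps+\|\vec{t}\|]$ carrying coefficient $(1+2\pi\|\vec{t}\|^2)/\rho(\vec{t})$, plus surface terms at $s_\eps\pm\|\vec{t}\|$. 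Using $e^{-\pi s_\eps^2}=\eps/(2(1+\eps))$, the surface terms simplify to exactly $\tfrac{\eps\|\vec{t}\|}{1+\eps}\cosh(2\pi s_\eps\|\vec{t}\|)$, and summing the three pieces reproduces the stated right-hand side.

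The main obstacle is that $\|\vec{y}\|$ has a component $\|\vec{y}_\perp\|$ orthogonal to $\vec{t}$, so the one-dimensional reduction through $X$ alone undercounts the norm. I plan to control this by writing $\|\vec{y}\|\leq |X|+\|\vec{y}_\perp\|$ and bounding the perpendicular contribution separately: for any unit $\vec{u}\perp\vec{t}$ the projection $\expect[\inner{\vec{u},\vec{y}}\sinh(2\pi\|\vec{t}\|X)]$ can be estimated using Cauchy-Schwarz on the expectation together with the fact that a product of two one-dimensional subgaussian variables is subexponential (Lemma~\ref{lem:subgausssquared}, Corollary~\ref{cor:subgaussianproduct}). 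The factor $(1+2\pi\|\vec{t}\|^2)/\rho(\vec{t})$ in the integral term has enough slack to absorb this perpendicular contribution, and once that is verified the bound falls out exactly. All other steps are direct computation.
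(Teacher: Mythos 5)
Your opening reduction is correct and matches the paper: the identity
\[
\frac{\grad f(\vec{t})}{2\pi f(\vec{t})} + \vec{t} \;=\; \frac{\rho(\vec{t})}{f(\vec{t})}\,\expect_{\vec{y}\sim D_\lat}\!\bigl[\vec{y}\sinh(2\pi\inner{\vec{y},\vec{t}})\bigr]
\]
follows from Eq.~\eqref{eq:firstbound}, and $\rho(\vec{t})/f(\vec{t})\leq 1$ is Lemma~\ref{lem:betterrhoLtbound}. The gap opens at the very next step. You apply the triangle inequality $\bigl\|\expect[\vec{y}\sinh(\cdot)]\bigr\| \leq \expect\bigl[\|\vec{y}\|\,|\sinh(\cdot)|\bigr]$, whereas the paper instead writes the norm as a supremum, $\bigl\|\expect[\vec{y}\sinh(\cdot)]\bigr\| = \max_{\|\vec{v}\|=1}\expect[\inner{\vec{y},\vec{v}}\sinh(\cdot)]$, and bounds each such signed expectation. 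These are genuinely different quantities: your absolute-value version discards the sign cancellation in directions orthogonal to $\vec{t}$, which is precisely the cancellation that makes the lemma true at the claimed sharpness. The paper exploits it through a \emph{two-dimensional} integral representation using the indicator $P_{r,s}(\vec{y})$ for the joint event $\{|\inner{\vec{y},\vec{v}}|>s\}\cap\{|\inner{\vec{y},\vec{t}}|>r\|\vec{t}\|\}\cap\{\text{signs agree}\}$; the joint probability is then bounded by $\min(\eps/(1+\eps),\,2e^{-\pi s^2},\,2e^{-\pi r^2})$, which is what produces the stated right-hand side.

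To see concretely why there is no room for your approach, note that your one-dimensional computation bounds the \emph{parallel} contribution $\expect[X\sinh(2\pi\|\vec{t}\|X)]$ (with $X=\inner{\vec{y},\vec{t}/\|\vec{t}\|}$), and if you carry out the splitting at $s_\eps$ exactly as you describe, the three pieces sum to \emph{precisely} the right-hand side of the lemma --- no slack remains. (This is the same as the paper's bound specialized to $\vec{v}=\vec{t}/\|\vec{t}\|$.) So after writing $\|\vec{y}\|\le |X|+\|\vec{y}_\perp\|$, the perpendicular term $\expect[\|\vec{y}_\perp\|\,|\sinh(\cdot)|]$ would have to be zero, which it is not in general. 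Consider a rank-one lattice $\lat=s\Z\vec{e}_1\subset\R^2$ with $\rho(\lat)=1+\eps$ and $\vec{t}=t(\vec{e}_1+\vec{e}_2)/\sqrt{2}$: then $|X|=\|\vec{y}_\perp\|=|m|s/\sqrt 2$ for $\vec{y}=ms\vec{e}_1$, so the perpendicular term equals the parallel one and your decomposition gives a bound roughly twice the right-hand side. The claim that the factor $(1+2\pi\|\vec{t}\|^2)/\rho(\vec{t})$ ``has enough slack to absorb this'' is therefore false.

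Your proposed repairs also do not work. (i) Cauchy--Schwarz gives $\expect[\|\vec{y}_\perp\|\,|\sinh|]\le \sqrt{\expect[\|\vec{y}_\perp\|^2]}\sqrt{\expect[\sinh^2]}$, and $\expect[\|\vec{y}_\perp\|^2]$ can carry a factor of $n$ (it is a sum of $n-1$ one-dimensional second moments), whereas the right-hand side of the lemma is dimension-free. (ii) Corollary~\ref{cor:subgaussianproduct} concerns products of two \emph{linear} functionals of subgaussian vectors, but $\sinh(2\pi\|\vec{t}\|X)$ is not linear, so that lemma doesn't apply. (iii) Replacing $\expect[\|\vec{y}_\perp\|\,|\sinh|]$ by $\sup_{\vec{u}\perp\vec{t}}\expect[\inner{\vec{u},\vec{y}}\sinh(\cdot)]$, as your last paragraph suggests, is not valid: you cannot pull the supremum inside the expectation, and this substitution silently undoes the triangle inequality you already committed to. Indeed, $\sup_{\vec{u}}\expect[\inner{\vec{u},\vec{y}}\sinh(\cdot)] = \|\expect[\vec{y}_\perp\sinh(\cdot)]\| \leq \|\expect[\vec{y}\sinh(\cdot)]\|$ is a \emph{component} of the quantity the lemma bounds, not an addend. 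The right move is to abandon the triangle inequality altogether and work directly with $\max_{\vec v}\expect[\inner{\vec{y},\vec{v}}\sinh(\cdot)]$ via the double-integral and joint-indicator argument.
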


\begin{proof}
Using Eq.~\eqref{eq:firstbound} to compute $\grad f(\vec{t})$ and recalling that $f(\vec{t}) = \rho(\lat + \vec{t})/(1+\eps)$, 
\[ 
 \Big\| \frac{\grad f(\vec{t})}{2\pi f(\vec{t})} + \vec{t} \Big\|  = 
\frac{(1+\eps)\rho(\vec{t})}{\rho(\lat+\vec{t})} \max_{\length{\vec{v}} = 1} 
   \expect_{\vec{y} \sim D_\lat}[\sinh(2\pi \inner{\vec{y}, \vec{t}})\inner{\vec{y}, \vec{v}}]
\; .
\]
Fix a unit vector $\vec{v}$. For any $\vec{y}$, let $P_{r,s}(\vec{y})$ be the indicator that $\abs{\inner{\vec{y}, \vec{v}}} > s$, $\abs{\inner{\vec{y}, \vec{t}}} > r\length{\vec{t}}$, and $\inner{\vec{y}, \vec{t}} \inner{\vec{y}, \vec{v}} > 0$.
Then, 
\begin{align*}
 \sinh&(2 \pi \inner{\vec{y},\vec{t}}) \inner{\vec{y},\vec{v}} \\
&= 
2 \pi \length{\vec{t}} \int_0^{\infty} \int_0^{\infty} \cosh(2 \pi \length{\vec{t}} r)(
1_{\abs{\inner{\vec{y}, \vec{v}}} > s} 1_{\abs{\inner{\vec{y}, \vec{t}}} > r\length{\vec{t}}} \sgn(\inner{\vec{y}, \vec{t}}) \sgn(\inner{\vec{y}, \vec{v}})
) {\rm d}s {\rm d} r \\
&\leq 2 \pi \length{\vec{t}} \int_0^{\infty} \int_0^{\infty} \cosh(2 \pi \length{\vec{t}} r)P_{r,s}(\vec{y}) {\rm d}s {\rm d} r
 \; .
\end{align*}
Taking expectations on both sides, we get
\begin{align*}
\expect[\sinh(2\pi \inner{\vec{y}, \vec{t}})\inner{\vec{y}, \vec{v}}]
&\leq 2\pi\length{\vec{t}} \int_{0}^\infty \int_{0}^\infty \cosh(2\pi \length{\vec{t}}r)\expect[P_{r,s}(\vec{y})]{\rm d}s{\rm d}r
\; .
\end{align*}

As in the previous proof, note that 
\[\expect[P_{r,s}(\vec{y})] \leq \min\Big(\frac{\epsilon}{1+\epsilon}, 2 e^{-\pi s^2}, 2 e^{-\pi r^2}\Big)\]
 by Lemma~\ref{lem:subgaussian}. So, we partition the positive quadrant of the $(r,s)$-plane into three regions and bound the integral separately in each region.

\begin{enumerate}
\item When $s \leq s_\eps$ and $ r \leq s_\eps$, $\expect[P_{r,s}(\vec{y})]$ is at most $\eps/(1+\eps)$, and the integral in this region is bounded by 
\[ \frac{\eps}{1+\eps}\cdot\int_0^{s_\eps}\int_0^{s_\eps}\cosh(2\pi r \length{\vec{t}}){\rm d}s{\rm d}r = \frac{\eps}{1+\eps}\cdot\frac{s_\eps}{2\pi \length{\vec{t}}}\sinh(2\pi s_\eps \length{\vec{t}}) \;. \]

\item When $s\leq r$ and $r>s_\eps$, $\expect[P_{r,s}(\vec{y}) ]$ is at most $2 e^{-\pi r^2}$, and the integral in this region is bounded by 
\begin{align*}
2\int_{s_\eps}^\infty \int_{0}^r\cosh(2\pi \length{\vec{t}}r)e^{-\pi r^2}{\rm d}s{\rm d}r &= \frac{1}{\rho(\vec{t})}\int_{s_\eps}^\infty (re^{-\pi (r-\length{\vec{t}})^2} + re^{-\pi(r+\length{\vec{t}})^2}){\rm d}r\\
&=  \frac{1}{2\pi\rho(\vec{t})}\big(e^{-\pi (s_\eps-\length{\vec{t}})^2} + e^{-\pi(s_\eps + \length{\vec{t}})^2}\big)+ \frac{\length{\vec{t}}}{\rho(\vec{t})}\int_{s_\eps - \length{\vec{t}}}^{s_\eps + \length{\vec{t}}}e^{-\pi z^2}{\rm d}z\\
&= \frac{1}{2\pi }\frac{\epsilon}{1+\epsilon}\cdot \cosh(2\pi s_\eps \length{\vec{t}}) +\frac{\length{\vec{t}}}{\rho(\vec{t})}\int_{s_\eps - \length{\vec{t}}}^{s_\eps + \length{\vec{t}}}e^{-\pi z^2}{\rm d}z
\; .
\end{align*}

\item When $s>r$ and $s>s_\eps$, $\expect[P_{r,s}(\vec{y})]$ is at most $2 e^{-\pi s^2}$. So, the integral in this region is bounded by
\begin{align*}
2\int_{s_\eps}^\infty\int_{0}^s\cosh(2\pi \length{\vec{t}}r)e^{-\pi s^2}{\rm d}r{\rm d}s &= \frac{1}{\pi \length{\vec{t}}}\int_{s_\eps}^\infty \sinh(2\pi \length{\vec{t}}s)e^{-\pi s^2}{\rm d}r{\rm d}s\\
&= \frac{1}{2\pi \length{\vec{t}}\rho(\vec{t})}\int_{s_\eps-\length{\vec{t}}}^{s_\eps + \length{\vec{t}}} e^{-\pi z^2}{\rm d}z
\; .
\end{align*}
\end{enumerate}

Combining everything together, and applying Lemma~\ref{lem:betterrhoLtbound},
\begin{align*}
& \Big\| \frac{\grad f(\vec{t})}{2\pi f(\vec{t})} + \vec{t} \Big\| \\
&\leq \frac{\rho(\vec{t})}{\rho(\lat + \vec{t})}\Big( \epsilon\cdot(s_\eps\sinh(2\pi s_\eps\length{\vec{t}}) + \length{\vec{t}}\cosh(2\pi s_\eps \length{\vec{t}})) + \frac{1+\epsilon}{\rho(\vec{t})}(1+2\pi \length{\vec{t}}^2)\int_{s_\eps-\length{\vec{t}}}^{s_\eps + \length{\vec{t}}} e^{-\pi z^2}{\rm d}z \Big)\\
&\leq \frac{\epsilon}{1+\eps}\cdot(s_\eps\sinh(2\pi s_\eps\length{\vec{t}}) + \length{\vec{t}}\cosh(2\pi s_\eps \length{\vec{t}}))+ \frac{1}{\rho(\vec{t})}\cdot(1+2\pi \length{\vec{t}}^2)\int_{s_\eps-\length{\vec{t}}}^{s_\eps + \length{\vec{t}}} e^{-\pi z^2}{\rm d}z
\; .
\end{align*}
\end{proof}

\begin{corollary}
\label{cor:simpleexpectbound}
Let $\epsilon \in (0,1/200)$ and $\lat \subset \R^n$ a lattice with $\rho(\lat) = 1+\epsilon$. Let \sepsdef. 
Then for all $\vec{t} \in \R^n$ satisfying $\length{\vec{t}} < s_\eps/2$, 
\[
\Big\| \frac{\grad f(\vec{t})}{2\pi f(\vec{t})} + \vec{t} \Big\| 
\leq 12\cdot (\eps/2)^{1-2\delta(\vec{t})}\cdot \length{\vec{t}}
\; ,
\]
where $\delta(\vec{t}) = \max(1/8, \length{\vec{t}}/s_\eps)$.
In particular, for $\delta(\vec{t}) \leq \dmaxdef$,
\[
\Big\| \frac{\grad f(\vec{t})}{2\pi f(\vec{t})} + \vec{t} \Big\| \leq \frac{\length{\vec{t}}}{4}
\; . 
\]
\end{corollary}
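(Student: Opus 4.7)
The strategy is to derive the bound directly from Lemma~\ref{lem:expectation}, which upper-bounds the quantity $\length{\grad f(\vec{t})/(2\pi f(\vec{t})) + \vec{t}}$ by a ``hyperbolic'' term $\frac{\eps}{1+\eps}(s_\eps\sinh(x) + \length{\vec{t}}\cosh(x))$ (with $x = 2\pi s_\eps \length{\vec{t}}$) plus a ``Gaussian integral'' term $\frac{1+2\pi\length{\vec{t}}^2}{\rho(\vec{t})}\int_{s_\eps-\length{\vec{t}}}^{s_\eps+\length{\vec{t}}} e^{-\pi z^2}\,dz$. The identity to exploit throughout is $e^{\pi s_\eps^2} = 2(1+\eps)/\eps$: together with $\length{\vec{t}} \le \delta s_\eps$ it gives $e^{2\pi s_\eps \length{\vec{t}}} \le (2(1+\eps)/\eps)^{2\delta}$, and combined with the factor $\eps/(1+\eps)$ in front yields $\frac{\eps}{1+\eps}(2(1+\eps)/\eps)^{2\delta} = 2^{2\delta}(\eps/(1+\eps))^{1-2\delta} \le 2(\eps/2)^{1-2\delta}$, which supplies exactly the desired prefactor.

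For the Gaussian integral term, bounding $e^{-\pi z^2}$ by its maximum $e^{-\pi(s_\eps-\length{\vec{t}})^2}$ on the integration interval yields $\int \le 2\length{\vec{t}}\, e^{-\pi(s_\eps-\length{\vec{t}})^2}$, after which the cancellation $e^{\pi\length{\vec{t}}^2}\cdot e^{-\pi(s_\eps-\length{\vec{t}})^2} = e^{-\pi s_\eps^2 + 2\pi s_\eps\length{\vec{t}}} = \frac{\eps}{2(1+\eps)}\, e^{2\pi s_\eps \length{\vec{t}}}$ is the critical simplification. It produces a bound on the Gaussian term of $\frac{\eps}{1+\eps}(1+2\pi\length{\vec{t}}^2)\length{\vec{t}}\cdot(2(1+\eps)/\eps)^{2\delta}$. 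For the hyperbolic term I will use two complementary bounds: $s_\eps\sinh(x)+\length{\vec{t}}\cosh(x) \le (s_\eps+\length{\vec{t}})e^x/2$ (from $\sinh(x)=(e^x-e^{-x})/2$ and $\length{\vec{t}} \le s_\eps$), which is tight when $x$ is large; and via $\sinh(x)\le x\cosh(x)$ the alternative $\le \length{\vec{t}}(1+2\pi s_\eps^2)\cosh(x)$, which is tight when $x$ is small.

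The main obstacle is that each hyperbolic bound carries an apparent $\log(2/\eps)$ factor (via $s_\eps$ or $\pi s_\eps^2$) that must be absorbed into the constant $12$. I plan to split on $\delta(\vec{t})$. When $\delta = \length{\vec{t}}/s_\eps \ge 1/8$, one has $s_\eps \le 8\length{\vec{t}}$, so the first hyperbolic bound gives $(s_\eps+\length{\vec{t}})/2 \le 9\length{\vec{t}}/2$ and hence Term~1 $\le 9\length{\vec{t}}(\eps/2)^{1-2\delta}$; meanwhile, switching to the sharper Gaussian tail bound $\int_a^{\infty} e^{-\pi z^2}dz \le e^{-\pi a^2}/(2\pi a)$ with $a = s_\eps-\length{\vec{t}} > s_\eps/2$ controls the $(1+2\pi\length{\vec{t}}^2)/s_\eps$ factor in Term~2 via $\length{\vec{t}}/s_\eps \le 1/2$ and $1/(\pi s_\eps^2) \le 1/\log(400)$. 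When $\delta = 1/8$ (so $\length{\vec{t}} < s_\eps/8$) the crucial observation is that $x \le \pi s_\eps^2/4$, whence $e^x \le (2(1+\eps)/\eps)^{1/4}$; here I will take the minimum of the two hyperbolic bounds and optimize over $\length{\vec{t}}$ at their crossover point, using the fact that on $(0,1/200]$ the quantity $\eps^{1/4}\log(2/\eps)$ is maximized at the right endpoint with value below $1.5$ (its unconstrained maximum is at $\eps = e^{-4}$), keeping the total constant under $12$.

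Finally, for the ``in particular'' claim, substituting $\delta = \dmax = \frac{1}{2} - \frac{2}{\pi s_\eps^2}$ gives $1-2\dmax = \frac{4}{\pi s_\eps^2} = \frac{4}{\log(2(1+\eps)/\eps)}$, so $(\eps/2)^{1-2\dmax} = \exp\!\big(-4\cdot\frac{\log(2/\eps)}{\log(2(1+\eps)/\eps)}\big)$. For $\eps < 1/200$ the ratio $\log(2/\eps)/\log(2(1+\eps)/\eps) = 1 - O(\eps)$, so $(\eps/2)^{1-2\dmax} \le e^{-4}(1+O(\eps))$, yielding $12(\eps/2)^{1-2\dmax} \le 12 e^{-4}(1+O(\eps)) < 1/4$.
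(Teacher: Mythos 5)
Your overall strategy matches the paper's---both start from Lemma~\ref{lem:expectation} and both exploit the identity $e^{\pi s_\eps^2} = 2(1+\eps)/\eps$ to convert $e^{2\pi \delta s_\eps^2}$ into $(\eps/2)^{-2\delta}$-type factors, and your treatment of the regime $\delta(\vec{t}) = \length{\vec{t}}/s_\eps \geq 1/8$ is fine (your computed constant $\approx 9 + 2.34 < 12$ checks out). However, there is a genuine gap in the complementary regime $\length{\vec{t}} < s_\eps/8$, where $\delta(\vec{t})$ is the fixed constant $1/8$ and the target bound $12(\eps/2)^{3/4}\length{\vec{t}}$ no longer tightens as $\length{\vec{t}}$ shrinks.

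The specific problem is that your surrogate bounds introduce an extra factor of order $\pi s_\eps^2 = \log(2(1+\eps)/\eps)$, which blows up as $\eps \to 0$ and is not killed by the available $(\eps/2)^{3/4}$ factor. Concretely, your hyperbolic bound~(B), $s_\eps\sinh(x)+\length{\vec{t}}\cosh(x)\leq \length{\vec{t}}(1+2\pi s_\eps^2)\cosh(x)$, when evaluated near $\length{\vec{t}} = s_\eps/8$ gives $\frac{\eps}{1+\eps}(1+2\pi s_\eps^2)e^{\pi s_\eps^2/4} = 2(1 + 2\pi s_\eps^2)e^{-3\pi s_\eps^2/4}$, which already for $\eps=1/200$ is about $0.29$, whereas $12(\eps/2)^{3/4}\approx 0.13$. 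Similarly, your max-times-length bound on the integral gives a Gaussian contribution whose ratio to $(\eps/2)^{3/4}$ is $\approx 2(1 + \pi s_\eps^2/32)$, unbounded in $\eps$. Your proposed rescue---take the minimum of bounds (A) and (B) and analyze the crossover---is not carried out, and the arithmetic is delicate: the quantity you want to control is $\max_\beta \min\big((A)(\beta),(B)(\beta)\big)$ where (A) is U-shaped and (B) is increasing, so one must locate the crossover and verify both that its value is small and that (A) stays small from there to $\beta = 1/8$, and the parallel issue for the Gaussian term also needs a min-of-two-integral-bounds argument that you do not give. (There are also two small numerical slips: the unconstrained maximizer of $\eps^{1/4}\log(2/\eps)$ is $\eps = 2e^{-4}$, not $e^{-4}$, and at $\eps = 1/200$ the value is $\approx 1.59$, slightly above your claimed $1.5$.)

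The ingredient that makes the paper's proof go through cleanly is the \emph{convexity} (secant-line) bound $\sinh(2\pi s_\eps\length{\vec{t}}) \le \frac{\length{\vec{t}}}{2\delta s_\eps}e^{2\pi\delta s_\eps^2}$, equivalently the observation that $\sinh(c\beta)/\beta$ is increasing, so that once one divides by $\length{\vec{t}}$ the worst case over $\length{\vec{t}}\in(0,s_\eps/8]$ is at the right endpoint, giving $8\sinh(\pi s_\eps^2/4)+\cosh(\pi s_\eps^2/4)\le \tfrac{9}{2}e^{\pi s_\eps^2/4}$ with no stray $\log(1/\eps)$ factor. The paper also applies the analogous device ($e^{-\pi z^2}\le \tfrac{z}{s_\eps-\length{\vec{t}}}e^{-\pi z^2}$) to the Gaussian integral, again producing a $\sinh(2\pi s_\eps\length{\vec{t}})$ that factors out $\length{\vec{t}}$ cleanly. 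Your $\sinh(x)\le x\cosh(x)$ and max-of-integrand bounds lose exactly this structure; the missing convexity trick is what your sketch would need to supply to close the gap.

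Your treatment of the ``in particular'' clause is correct and matches the paper.
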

\begin{proof}
Recall from Lemma~\ref{lem:expectation} that
\[ \Big\| \frac{\grad f(\vec{t})}{2\pi f(\vec{t})} + \vec{t} \Big\| \leq \frac{\epsilon}{1+\eps}\cdot (s_\eps \sinh(2\pi s_\eps \length{\vec{t}}) + \length{\vec{t}}\cosh(2\pi s_\eps \length{\vec{t}}))+ (1+2\pi \length{\vec{t}}^2 )\cdot e^{\pi \length{\vec{t}}^2} \int_{s_\eps-\length{\vec{t}}}^{s_\eps+\length{\vec{t}}} e^{-\pi z^2}{\rm d}z \; .\]
Because $\sinh$ is convex on $\R^+$, $\sinh(0) = 0$, and $\length{\vec{t}} \leq \delta(\vec{t})s_\eps$,
\[
\sinh(2\pi s_\eps\length{\vec{t}}) 
\leq (1- \length{\vec{t}}/(\delta(\vec{t})s_\eps ))\sinh(0) + \frac{\length{\vec{t}}}{ \delta(\vec{t}) s_\eps} \cdot \sinh(2\pi \delta(\vec{t}) s_\eps^2) 
\leq \frac{\length{\vec{t}}}{2 \delta(\vec{t}) s_\eps} \cdot   e^{2\pi \delta(\vec{t}) s_\eps^2} 
\; .\] 
Using the above,
\begin{align*}
\frac{\eps}{1+\eps}\cdot( s_\eps \sinh(2\pi s_\eps \length{\vec{t}} ) + \length{\vec{t}}\cosh(2\pi s_\eps \length{\vec{t}} )) 
&\leq  \length{\vec{t}} \cdot \frac{\eps}{1+\eps}\cdot \Big(\frac{e^{2\pi \delta(\vec{t}) s_\eps^2}}{2\delta(\vec{t})} + \cosh(2\pi \delta(\vec{t}) s_\eps^2)\Big)\\
&\leq  \length{\vec{t}} \cdot \frac{\eps}{1+\eps}\cdot \Big(\frac{1}{2\delta(\vec{t})} + 1\Big)\cdot e^{2\pi \delta(\vec{t}) s_\eps^2}\\
&= \length{\vec{t}} \cdot \Big(\frac{1}{\delta(\vec{t})} + 2\Big)\cdot e^{-\pi (1-2\delta(\vec{t})) s_\eps^2}
\; .
\end{align*}
Turning to the integral and using the above bound on $\sinh(2\pi s_\eps \length{\vec{t}})$ again, 
 \begin{align*}
 e^{\pi \length{\vec{t}}^2} \int_{s_\eps - \length{\vec{t}}}^{s_\eps + \length{\vec{t}}} e^{-\pi z^2}{\rm d}z 
&\leq  e^{\pi \length{\vec{t}}^2} \int_{s_\eps - \length{\vec{t}}}^{s_\eps + \length{\vec{t}}} \frac{z}{s_\eps - \length{\vec{t}}}e^{-\pi z^2}{\rm d}z\\
&= \frac{1}{2\pi (s_\eps- \length{\vec{t}})}e^{\pi \length{\vec{t}}^2} (e^{-\pi (s_\eps - \length{\vec{t}})^2} - e^{-\pi (s_\eps+\length{\vec{t}})^2})\\
&= \frac{1}{\pi (s_\eps- \length{\vec{t}})}  e^{-\pi s_\eps^2 }\sinh(2\pi s_\eps \length{\vec{t}})\\
&\leq 
\frac{\length{\vec{t}}}{(1-\delta(\vec{t})) \delta(\vec{t})s_\eps} \cdot \frac{1}{2\pi s_\eps}\cdot e^{-\pi s_\eps^2}e^{2\pi \delta(\vec{t}) s_\eps^2}\\
&\leq 
\length{\vec{t}}\cdot \frac{1}{\pi \delta(\vec{t}) s_\eps^2}\cdot e^{-\pi(1-2\delta(\vec{t}))s_\eps^2} \; .
 \end{align*}
 
Combining everything together,
\begin{align*}
\Big\| \frac{\grad f(\vec{t})}{2\pi f(\vec{t})} + \vec{t} \Big\| &\leq \length{\vec{t}} \Big(\frac{1}{\delta(\vec{t})} + 2+\frac{1}{\pi \delta(\vec{t}) s_\eps^2} + 2\delta(\vec{t}) \Big)\cdot e^{-\pi(1-2\delta(\vec{t}))s_\eps^2}\\
&\leq \length{\vec{t}} \Big(\frac{1}{\delta(\vec{t})} + 2+\frac{1}{\pi \delta(\vec{t}) s_\eps^2} + 2\delta(\vec{t}) \Big)\cdot (\eps/2)^{1-2\delta(\vec{t})}
\; .
\end{align*}
The first result follows by noting that $\frac{1}{\delta(\vec{t})} + 2+\frac{1}{\pi \delta(\vec{t}) s_\eps^2} + 2\delta(\vec{t}) < 12$ for $\eps < 1/200$ and $\delta(\vec{t}) \in (1/8,1/2)$. The second result follows by noting that $12(\eps/2)^{1-2\delta(\vec{t})} < 1/4$ for $\delta(\vec{t}) \leq \frac{1}{2} - \frac{2}{\pi s_\eps^2}$.
\end{proof}

\begin{lemma}
\label{lem:exactHess}
Let $\epsilon >0$ and $\lat \subset \R^n$ a lattice with $\rho(\lat) = 1+\epsilon$. Then,
\begin{enumerate}
\item $\displaystyle \|Hf(\vec{t})\| \leq \|Hf(\vec{0})\| \leq 2\pi$ for all $\vec{t} \in \R^n$.
\item $\displaystyle \length{H f(\vec{0}) + 2\pi I_n} \leq \frac{4\pi \eps}{1+\eps} \Big(\log \frac{2(1+\eps)}{\eps} + 1\Big)$.
\end{enumerate}
\end{lemma}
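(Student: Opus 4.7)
The plan is to compute $Hf(\vec{t})$ in two different ways, exploiting both the Poisson summation identity~\eqref{eq:poisson} and the direct definition~\eqref{eq:funcf}, and then exploit the interplay between the two expressions.

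\textbf{Part 1.} First I would apply Poisson summation and differentiate under the sum to get
\[
Hf(\vec{t}) \;=\; -4\pi^2 \expect_{\vec{w}\sim D_{\lat^*}}\!\bigl[\cos(2\pi\pr{\vec{w}}{\vec{t}})\,\vec{w}\vec{w}^T\bigr].
\]
For any unit vector $\vec{v}$, bounding $|\cos(\cdot)|\le 1$ gives
\[
|\vec{v}^T Hf(\vec{t})\vec{v}| \;\le\; 4\pi^2\expect\bigl[\pr{\vec{w}}{\vec{v}}^2\bigr] \;=\; -\vec{v}^T Hf(\vec{0})\vec{v},
\]
and since $Hf(\vec{t})$ is symmetric, this immediately yields $\|Hf(\vec{t})\|\le\|Hf(\vec{0})\|$. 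The more interesting bound $\|Hf(\vec{0})\|\le 2\pi$ will come from combining the Poisson expression with the direct computation. Differentiating $f(\vec{t}) = \rho(\lat+\vec{t})/\rho(\lat)$ twice and evaluating at $\vec{t}=\vec0$ gives
\[
Hf(\vec{0}) \;=\; -2\pi I_n + 4\pi^2\expect_{\vec{y}\sim D_{\lat}}\![\vec{y}\vec{y}^T].
\]
Equating this with the Poisson expression for $Hf(\vec{0})$ produces the identity
\[
4\pi^2\expect_{\vec{y}\sim D_{\lat}}\![\vec{y}\vec{y}^T] + 4\pi^2\expect_{\vec{w}\sim D_{\lat^*}}\![\vec{w}\vec{w}^T] \;=\; 2\pi I_n.
\]
Both expectations are positive semidefinite, so each has spectral norm at most $2\pi$; applied to the Poisson form this gives $\|Hf(\vec{0})\| \le 2\pi$.

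\textbf{Part 2.} The identity from Part 1 rewrites as $Hf(\vec{0}) + 2\pi I_n = 4\pi^2\expect_{\vec{y}\sim D_{\lat}}[\vec{y}\vec{y}^T]$, so it suffices to bound $\expect[\pr{\vec{y}}{\vec{v}}^2]$ uniformly over unit $\vec{v}$. Using the layer-cake identity
\[
\expect[\pr{\vec{y}}{\vec{v}}^2] \;=\; \int_0^\infty 2s\,\Pr\bigl[|\pr{\vec{y}}{\vec{v}}|> s\bigr]\,{\rm d}s
\]
and the two bounds $\Pr[|\pr{\vec{y}}{\vec{v}}|>s] \le \eps/(1+\eps)$ (from $\Pr[\vec{y}\neq\vec{0}]=\eps/(1+\eps)$, since $\rho(\lat)=1+\eps$) and $\Pr[|\pr{\vec{y}}{\vec{v}}|>s]\le 2e^{-\pi s^2}$ (subgaussianity of $D_\lat$, Lemma~\ref{lem:subgaussian}), I split the integral at the crossover point $s_\eps = \sqrt{\log(2(1+\eps)/\eps)/\pi}$ where the two bounds coincide. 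The first piece contributes $\frac{\eps}{1+\eps}s_\eps^2$, and the Gaussian tail piece $\int_{s_\eps}^\infty 4s\,e^{-\pi s^2}{\rm d}s = \frac{2}{\pi}e^{-\pi s_\eps^2} = \frac{\eps}{\pi(1+\eps)}$. Adding these and multiplying by $4\pi^2$ yields precisely
\[
\|Hf(\vec{0})+2\pi I_n\| \;\le\; \frac{4\pi\eps}{1+\eps}\Bigl(\log\frac{2(1+\eps)}{\eps}+1\Bigr).
\]

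The only step that requires any real care is the double computation of $Hf(\vec{0})$ and the resulting identity linking the second moments of $D_\lat$ and $D_{\lat^*}$; everything else is a straightforward tail-integral computation using tools already established in the preliminaries.
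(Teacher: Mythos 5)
Your proof is correct and follows essentially the same route as the paper's: both Part 1 arguments equate the primal and dual second-moment expressions for $Hf(\vec{0})$ to conclude $\|Hf(\vec{0})\|\le 2\pi$, and both Part 2 arguments bound $4\pi^2\expect_{\vec{y}\sim D_\lat}[\inner{\vec{y},\vec{v}}^2]$ via the layer-cake formula and the pointwise minimum of the trivial bound $\eps/(1+\eps)$ and the subgaussian tail $2e^{-\pi s^2}$. The only cosmetic difference is that you split the tail integral at $s_\eps$ explicitly rather than writing $\int r\min(\cdot,\cdot)\,{\rm d}r$, which is the same computation.
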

\begin{proof}

From Eq.~\eqref{eq:poisson}, 
we have that for any $\vec{t} \in \R^n$
\begin{align*}
\length{H f(\vec{t})} &= 4\pi^2 \big\| \expect_{\vec{w} \sim D_{\lat^*}}[\vec{w} \vec{w}^T \cos(2\pi \inner{\vec{w,} \vec{t}})] \big\| \\
&\leq 4\pi^2 \big\| \expect_{\vec{w} \sim D_{\lat^*}}[\vec{w} \vec{w}^T] \big\| \\
&= \length{H f (\vec0)}
\; .
\end{align*}
From Eqs.~\eqref{eq:poisson} and~\eqref{eq:firstbound}, we have a representation of $H f(\vec0)$ in both the primal and the dual, 
\[ 
-\frac{1}{2\pi}H f(\vec0) =  I_n - 2\pi \expect_{\vec{y}\sim D_\lat}[\vec{y}\vec{y}^T] = 2\pi \expect_{\vec{w} \sim D_{\lat^*}}[\vec{w}\vec{w}^T]\,.
\]
Noting that both expectations are positive semidefinite, it follows that $\length{H f(\vec{t})} \leq \length{H f(\vec0)} \leq 2\pi$.

For the second bound, following the technique used in the proofs of Lemmas~\ref{lem:rhoL} and \ref{lem:expectation},
\begin{align*}
\length{ H f(\vec0) + 2\pi I_n }&= 4\pi^2\cdot \max_{\length{\vec{v}} = 1} \expect_{\vec{y} \sim D_\lat}[\inner{\vec{y}, \vec{v}}^2] \\
&= 8\pi^2\cdot \max_{\length{\vec{v}} = 1}  \int_{0}^\infty r \Pr_{\vec{y} \sim D_\lat}[\abs{\inner{\vec{y}, \vec{v}}} \geq r]{\rm d}r\\
&\leq 8\pi^2 \int_{0}^\infty r \min(\eps/(1+\eps), 2e^{-\pi r^2}){\rm d}r &\text{(Lemma~\ref{lem:subgaussian})}\\
&= \frac{4\pi \eps}{1+\eps}\Big(\log \frac{2(1+\eps)}{\eps} + 1\Big)
\; .
\end{align*} 
\end{proof}

\subsection{Completing the Proof}
\label{sec:estimator}

In this section we complete the proof of Proposition~\ref{prop:estimatorisgreat}. 
The basic plan of the proof is straightforward: after having shown 
in Corollary~\ref{cor:simpleexpectbound} the analogous property 
for the exact function $f$, it suffices to show that 
$\grad f_W(\vec{t})/f_W(\vec{t})$ is close to $\grad f(\vec{t})/f(\vec{t})$
for all relevant $\vec{t}$. It is obviously enough to argue separately
that $\grad f_W$ is close to $\grad f$ and that $f_W$ is close to $f$ 
(with appropriate notions of closeness; see the technical
Claim~\ref{claim:fraction_id_impr}
for the precise statement). 
The former will be shown to hold with high probability for any fixed $\vec{t}$
in Lemma~\ref{lem:gradestimator} and then to hold
with high probability simultaneously for all relevant $\vec{t}$
in Lemma~\ref{lem:gradcloseallpoints}.
Similarly, the latter will be shown to hold with high probability for
any fixed $\vec{t}$ in Lemma~\ref{lem:festimator}
and then to hold with high probability simultaneously for all relevant $\vec{t}$ 
in Lemma~\ref{lem:festimatorallpoints}.
In both cases, showing that the result holds simultaneously for all $\vec{t}$
is done by taking a union bound over an appropriately chosen net and showing that the functions do not vary much. 
One minor complication in the proof is that in the former case 
(closeness of $\grad f_W$) the net has to become denser as we get closer to the origin. 
In order to keep the net finite, Lemma~\ref{lem:gradcloseallpoints} actually
does not handle tiny vectors $\vec{t}$. Instead we include 
Lemma~\ref{lem:verycloset} which
proves Proposition~\ref{prop:estimatorisgreat} directly for the case
of tiny vectors. Finally, many of our proofs require quantitative statements about the smoothness $f_W$ and $\grad f_W$, which are shown in Lemma~\ref{lem:Hess} and Lemma~\ref{lem:lipshitz}.

\begin{lemma}
\label{lem:Hess}
Let $\lat \subset \R^n$ be a lattice with $\rho(\lat) = 1+\eps$ for some $\eps > 0$, and let $W = (\vec{w}_1, \ldots, \vec{w}_N)$ be
sampled independently from $D_{\lat^*}$. Then, for $s \geq 0$, $N \min(s,s^2) \geq \Omega(n)$, and $\Delta_\eps =  \frac{4\pi \eps}{1+\eps} (\log
\frac{2(1+\eps)}{\eps} + 1)$, we have
\begin{enumerate}
\item $\displaystyle \Pr[\length{H f_W(\vec{0}) + 2\pi I_n} > \Delta_\eps + s] \leq 2^{-\Omega(N\min(s,s^2))}$.
\item $\displaystyle \Pr[\exists \vec{t} \in \R^n: 
\length{H f_W(\vec{t}) + 2\pi I_n} > \Delta_\eps + s + (500n\length{\vec{t}})^2] \leq 2^{-\Omega(n)}$.
\item $\displaystyle \Pr[ \exists \vec{t} \in \R^n : \length{H f_W(\vec{t})} > 2\pi + s] \leq  2^{-\Omega(N\min(s,s^2))}$.
\end{enumerate}
\end{lemma}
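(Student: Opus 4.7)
The plan is to prove the three parts in sequence, using the explicit formula
\[
Hf_W(\vec{t}) \;=\; -\frac{4\pi^2}{N}\sum_{i=1}^N \vec{w}_i \vec{w}_i^T \cos(2\pi\inner{\vec{w}_i}{\vec{t}})
\]
together with the Poisson-summation identity $Hf(\vec{0}) = -4\pi^2 \expect_{\vec{w}\sim D_{\lat^*}}[\vec{w}\vec{w}^T]$ from Lemma~\ref{lem:exactHess}, so that $Hf_W(\vec{0})$ is the empirical-mean estimator of $Hf(\vec{0})$. The three parts will then follow a ``concentration at $\vec{0}$ plus smoothness in $\vec{t}$'' template.

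For part~(1), I would first show that $\length{Hf_W(\vec{0}) - Hf(\vec{0})} \leq s$ with probability $1 - 2^{-\Omega(N\min(s,s^2))}$, which combined with Lemma~\ref{lem:exactHess}(2) and the triangle inequality gives the claim. For a fixed unit vector $\vec{v}$, $\inner{(Hf_W(\vec{0})-Hf(\vec{0}))\vec{v}}{\vec{v}}$ equals an average of $N$ i.i.d.\ centered random variables of the form $4\pi^2(\expect[\inner{\vec{w}}{\vec{v}}^2] - \inner{\vec{w}_i}{\vec{v}}^2)$. Lemma~\ref{lem:subgaussian} makes $\inner{\vec{w}_i}{\vec{v}}$ subgaussian with parameter~$1$, and then Lemma~\ref{lem:subgausssquared} makes $\inner{\vec{w}_i}{\vec{v}}^2$ subexponential with parameter $O(1)$, so Lemma~\ref{lem:subexp} yields the desired single-vector concentration. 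I would then promote this to the spectral bound by taking a $(1/4)$-net of the unit sphere (of size at most $9^n$ by Lemma~\ref{lem:smallnet}), applying Lemma~\ref{lem:deltaeigenvalues}, and union-bounding; the $9^n$ loss is absorbed by the hypothesis $N\min(s,s^2) = \Omega(n)$.

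Part~(3) is essentially a corollary of part~(1). Since $|\cos| \leq 1$, for every $\vec{t}$ and every unit $\vec{v}$ we have the pointwise domination
\[
|\inner{Hf_W(\vec{t})\vec{v}}{\vec{v}}| \;\leq\; \frac{4\pi^2}{N}\sum_i \inner{\vec{w}_i}{\vec{v}}^2 \;=\; |\inner{Hf_W(\vec{0})\vec{v}}{\vec{v}}|,
\]
so $\length{Hf_W(\vec{t})} \leq \length{Hf_W(\vec{0})}$ for all $\vec{t}$. Hence it suffices to bound $\length{Hf_W(\vec{0})}$, which by the same concentration argument as in part~(1) combined with $\length{Hf(\vec{0})} \leq 2\pi$ from Lemma~\ref{lem:exactHess}(1) gives $\length{Hf_W(\vec{0})} \leq 2\pi + s$ with the stated probability.

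For part~(2), the task is to control the variation in $\vec{t}$ uniformly, and then combine with part~(1) by the triangle inequality. Using $|\cos x - 1| \leq x^2/2$ and $|\inner{\vec{w}_i}{\vec{t}}| \leq \|\vec{w}_i\|\|\vec{t}\|$, a term-by-term estimate gives
\[
\length{Hf_W(\vec{t}) - Hf_W(\vec{0})} \;\leq\; \frac{8\pi^4}{N}\sum_i \|\vec{w}_i\|^4 \|\vec{t}\|^2.
\]
By Lemma~\ref{lem:strong-tailbound} applied to each $\vec{w}_i$ with a sufficiently large constant $t_0$, a union bound over the $N$ samples gives $\max_i \|\vec{w}_i\| \leq C\sqrt{n}$ with probability $1 - 2^{-\Omega(n)}$ (this only requires $N = 2^{O(n)}$, which covers the parameter regime of interest). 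Under this event, the right-hand side is at most $(500 n\|\vec{t}\|)^2$ for an appropriate choice of $C$, and combining with part~(1) finishes the proof after observing that $2^{-\Omega(N\min(s,s^2))} \leq 2^{-\Omega(n)}$ under the hypothesis $N\min(s,s^2) \geq \Omega(n)$. The main obstacle is matching the specific constant $500$: the tail-bound constant $C$ must be chosen so that $8\pi^4 C^4 \leq 500^2$ while still keeping $\Pr[\|\vec{w}_i\| > C\sqrt{n}]$ small enough to survive the union bound over all samples.
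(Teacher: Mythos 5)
Your proofs of parts~(1) and~(3) match the paper's argument exactly: the triangle inequality against Lemma~\ref{lem:exactHess}, the subgaussian-to-subexponential concentration via Lemmas~\ref{lem:subgaussian}, \ref{lem:subgausssquared}, and~\ref{lem:subexp}, the $1/4$-net of size $2^{O(n)}$ combined with Lemma~\ref{lem:deltaeigenvalues}, and (for part~(3)) the pointwise domination $\|Hf_W(\vec{t})\|\le\|Hf_W(\vec{0})\|$.

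Part~(2) has a genuine gap. Your reduction $\|Hf_W(\vec{t})-Hf_W(\vec{0})\|\le \frac{8\pi^4}{N}\|\vec{t}\|^2\sum_i\|\vec{w}_i\|^4$ is correct, but you then control the sum by conditioning on $\max_i\|\vec{w}_i\|\le C\sqrt{n}$ and union-bounding over the $N$ samples. That union bound incurs a factor of $N$, so the failure probability is $N\cdot 2^{-\Omega(n)}$, which is $2^{-\Omega(n)}$ only under the additional assumption $N=2^{O(n)}$ with a small enough constant in the exponent. The hypothesis $N\min(s,s^2)\ge\Omega(n)$ gives a \emph{lower} bound on $N$, not an upper bound, so the lemma as stated offers no such guarantee; you flag this implicitly ("this only requires $N=2^{O(n)}$"), but the lemma makes no such assumption, and your worry about matching the constant $500$ is a symptom of the same problem: the smaller the constant you want, the more the admissible range of $N$ shrinks. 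The paper sidesteps this entirely by bounding the \emph{empirical average} $\frac{1}{N}\sum_i\|\vec{w}_i/\sqrt{n}\|^4$ rather than the maximum norm. It sets $S_j=\{i:\|\vec{w}_i\|\ge e^j\sqrt{n}\}$, applies Lemma~\ref{lem:strong-tailbound} to obtain $\expect[|S_j|]\le Ne^{-ne^{2j}}$, uses Markov's inequality so that $|S_j|\le Ne^{-ne^{2j}+n(j+1)}$ holds simultaneously for all $j\ge 0$ except with probability at most $2e^{-n}$, and then sums the layer-cake estimate $\|\vec{w}_i/\sqrt{n}\|^4\le 1+e^4\sum_{j\ge 0}e^{4j}1_{i\in S_j}$ to conclude $\frac{1}{N}\sum_i\|\vec{w}_i/\sqrt{n}\|^4\le 2e^4$. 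Because both $\expect[|S_j|]$ and the Markov threshold scale linearly in $N$, the argument is uniform in $N$, which is exactly what is missing from a max-norm bound. If you adopt this averaging step, the constant comes out to $4\pi^2 e^2\approx 292<500$, and the rest of your plan for part~(2) goes through.
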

\begin{proof} 
For bound $(1)$, using the triangle inequality and Lemma \ref{lem:exactHess}, we have that
\[
\length{H f_W(\vec{0}) + 2\pi I_n} \leq \length{H f(\vec{0}) + 2\pi I_n} + \length{H f_W(\vec{0}) - H f(\vec{0})} \leq \Delta_\eps + \length{H f_W(\vec{0}) - H
f(\vec{0})} \text{.}
\]
It now suffices to bound the probability that $\|H f_W(\vec{0}) - H f(\vec{0})\| > s$. For this, note that
\[
\length{H f_W(\vec0)} = \sup_{\length{\vec{v}} = 1} \abs{\inner{H f_W(\vec{0})\vec{v}, \vec{v}}} = \frac{4\pi^2}{N}\sup_{\length{\vec{v}} = 1} \sum_i \inner{\vec{v}, \vec{w}_i}^2  
\; .
\] By Lemma~\ref{lem:subgaussian}, $\vec{w}_i$ are subgaussian random variables with parameter 1. It follows from Lemma~\ref{lem:subgausssquared} that $\inner{\vec{w}_i, \vec{v}}^2 $ is subexponential with parameter $O(1)$. Then, applying Lemma~\ref{lem:subexp}, 
\[ 
\Pr[ \abs{\inner{(H f_W(\vec{0}) - Hf(\vec{0}))\vec{v}, \vec{v}}} > s/2] \leq 2^{1-\Omega(N\min(s,s^2))},
\; 
\] 
for any $s \geq 0$.
By Lemma~\ref{lem:smallnet}, there is a $\frac{1}{4}$-net of the unit sphere $A$ with $|A| = 2^{O(n)}$. Taking union bound over $A$ and applying Lemma~\ref{lem:deltaeigenvalues} gives
\begin{equation}
\label{eq:Hess0}
\Pr[  \length{H f_W(\vec{0})- Hf(\vec0)} > s] \leq 2^{-\Omega(N\min(s,s^2)) + O(n)} = 2^{-\Omega(N\min(s,s^2))} \; \text{,} 
\end{equation}
by our assumption that $N\min(s,s^2) = \Omega(n)$.

For bound $(2)$, using the triangle inequality as above, we have that
\begin{align*}
\length{H f_W(\vec{t}) + 2\pi I_n} &\leq \length{Hf(\vec{0}) + 2\pi I_n} + \length{Hf_W(\vec{0})-Hf(\vec{0})} + \length{Hf_W(\vec{t})-Hf_W(\vec{0})} \\
&\leq \Delta_\eps + \length{Hf_W(\vec{0})-Hf(\vec{0})} + \length{Hf_W(\vec{t})-Hf_W(\vec{0})} \text{.}
\end{align*}
By equation~\eqref{eq:Hess0}, we know that $\length{Hf_W(\vec{0})-Hf(\vec{0})} > s$ with probability at most $2^{-\Omega(N\min(s,s^2))} =
2^{-\Omega(n)}$. Hence it suffices to prove that $\length{Hf_W(\vec{t})-Hf_W(\vec{0})} > (500n\length{\vec{t}})^2$, for some $\vec{t} \in \R^n$, with probability at most $2^{-\Omega(n)}$.

Using the inequality $1-\theta^2/2 \leq \cos(\theta) \leq 1$ and Cauchy-Schwarz, we have that 
\begin{align*}
\length{H f_W(\vec{t}) - H f_W(\vec0)} 
       &= \frac{4\pi^2}{N}\Big\| \sum_{i=1}^N (\cos(2\pi\inner{\vec{w}_i, \vec{t}})-1) \vec{w}_i\vec{w}_i^T\Big\| \\
       &\leq \frac{4\pi^2}{N}\sum_{i=1}^N |\cos(2\pi\inner{\vec{w}_i, \vec{t}})-1| \|\vec{w}_i\vec{w}_i^T\| \\
       &\leq \frac{8\pi^4}{N}\sum_{i=1}^N \inner{\vec{w}_i, \vec{t}}^2 \big\|\vec{w}_i\big\|^2 \\
       &\leq (8\pi^4n^2\length{\vec{t}}^2) \frac{1}{N} \sum_{i=1}^N \Big\|\frac{\vec{w}_i}{\sqrt{n}}\Big\|^4 \text{.}
\end{align*}
It now suffices to bound the sum in the last expression with probability $1-2^{-\Omega(n)}$.  Let $S_j = \set{i \in [N]:
\length{\vec{w}_i} \geq e^j \sqrt{n}}$, for $j \geq 0$. Using Lemma~\ref{lem:strong-tailbound}, we have that 
\[
\expect[|S_j|] = N \Pr[\length{\vec{w}_i} \geq e^j \sqrt{n}] \leq N e^{-\frac{n}{2}(\sqrt{2\pi}e^j-1)^2} \leq N e^{-n e^{2j}} \text{.}
\]
By Markov's inequality, $\Pr[|S_j| \geq N e^{-ne^{2j}+n(j+1)}] \leq e^{-n(j+1)}$. By the union bound, the event $|S_j| \leq N e^{-ne^{2j}+n(j+1)}$,
$\forall j \geq 0$, occurs with probability at least $1-\sum_{j=0}^\infty e^{-n(j+1)} \geq 1-2e^{-n}$. Conditioning on this event, we will show the
desired bound. 

For all $i \in [N]$, we have that $ \length{\frac{\vec{w}_i}{\sqrt{n}}}^4 \leq 1 + e^4 \sum_{j=0}^\infty e^{4j} 1_{i \in S_j}$. Using this,
we get that
\begin{align*}
\frac{1}{N} \sum_{i=1}^N \Big\|\frac{\vec{w}_i}{\sqrt{n}}\Big\|^4 
            \leq 1 + \frac{e^4}{N} \sum_{j=0}^\infty e^{4j} |S_j| 
            \leq 1 + e^4 \sum_{j=0}^\infty e^{-ne^{2j}+n(j+1)+4j} 
            \leq 1 + e^4\sum_{j=0}^\infty e^{-j} \leq 2 e^4
\; .
\end{align*}
Plugging in gives $\length{H f_W(\vec{t}) - H f_W(\vec0)}  \leq (4\pi^2 e^2 n\length{\vec{t}})^2 \leq (500 n \length{\vec{t}})^2$.

For bound $(3)$, we simply note that
\begin{align*}
\length{H f_W(\vec{t})} = \frac{4\pi^2}{N} \big\| \sum \vec{w}_i \vec{w}_i^T\cos(2\pi \inner{\vec{w}_i \vec{t}}) \big\| \leq \frac{4\pi^2}{N} \big\| \sum \vec{w}_i \vec{w}_i^T \big\| = \length{H f_W (\vec0)}
\; .
\end{align*}
The bound now follows from equation \eqref{eq:Hess0}, the fact that $\length{Hf(\vec{0})} \leq 2\pi$ (Lemma~\ref{lem:exactHess}), and the triangle
inequality. 
\end{proof}

The following lemma establishes strong continuity properties for $f$, $\grad f$, and their respective approximations. 

\begin{lemma} 
\label{lem:lipshitz}
Let $\lat \subset \R^n$ be a lattice. Then, for all $\vec{t}, \vec{t}' \in \R^n$,
\begin{enumerate}
\item $\|\grad f(\vec{t}') - \grad f(\vec{t})\| \leq 2\pi \|\vec{t}-\vec{t}'\|$.
\item $| f(\vec{t}')-f(\vec{t})| \leq 2\pi \max(\|\vec{t}\|,\|\vec{t}'\|)\|\vec{t}'-\vec{t}\|$.
\end{enumerate}
Let $W = (\vec{w}_1, \ldots, \vec{w}_N)$ be sampled independently from $D_{\lat^*}$. Then for $s > 0$, $N \min(s,s^2) = \Omega(n)$, the following both hold simultaneously for all $\vec{t}, \vec{t}' \in \R^n$ with probability at least $1-2^{-\Omega(N\min(s,s^2))}$.
\begin{enumerate}
\item $\|\grad f_W(\vec{t}') - \grad f_W(\vec{t})\| \leq (2\pi+s) \|\vec{t}'-\vec{t}\| $ 
\item $| f_W(\vec{t}')-f_W(\vec{t})| \leq (2\pi+s) \max(\|\vec{t}\|,\|\vec{t}'\|) \|\vec{t}'-\vec{t}\|$
\end{enumerate}
\end{lemma}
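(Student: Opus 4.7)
The plan is to deduce all four inequalities from the Hessian bounds already established in Lemma~\ref{lem:exactHess} and Lemma~\ref{lem:Hess}, using the fundamental theorem of calculus along line segments together with the observation that $\nabla f$ and $\nabla f_W$ vanish at the origin.

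First I will handle the bounds on $f$. For part (1), I will write
\[
\grad f(\vec{t}') - \grad f(\vec{t}) = \int_0^1 H f(\vec{t} + u(\vec{t}'-\vec{t}))\,(\vec{t}'-\vec{t})\,{\rm d}u,
\]
and apply Lemma~\ref{lem:exactHess}(1), which gives $\|Hf(\vec{x})\| \leq 2\pi$ for every $\vec{x}$, to bound the integrand in operator norm by $2\pi \|\vec{t}'-\vec{t}\|$. For part (2), I first observe that $\grad f(\vec{0}) = \vec{0}$: this follows either from $f$ being an even function (since $\Lambda = -\Lambda$ and $\rho$ is even) or directly from Eq.~\eqref{eq:poisson}, since $\grad f(\vec{t}) = -2\pi\,\expect_{\vec{w}\sim D_{\Lambda^*}}[\vec{w}\sin(2\pi\inner{\vec{w}, \vec{t}})]$ vanishes at $\vec{t}=\vec{0}$. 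Combining this with part (1) yields $\|\grad f(\vec{x})\| \leq 2\pi\|\vec{x}\|$. Then
\[
f(\vec{t}') - f(\vec{t}) = \int_0^1 \inner{\grad f(\vec{t} + u(\vec{t}'-\vec{t})),\, \vec{t}'-\vec{t}}\,{\rm d}u,
\]
and since $\|\vec{t} + u(\vec{t}'-\vec{t})\| \leq \max(\|\vec{t}\|,\|\vec{t}'\|)$ for $u \in [0,1]$ by convexity of the norm, Cauchy--Schwarz gives the desired bound.

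The arguments for $f_W$ and $\grad f_W$ are identical in structure. I invoke Lemma~\ref{lem:Hess}(3), which states that with probability at least $1-2^{-\Omega(N\min(s,s^2))}$ we have $\|Hf_W(\vec{t})\| \leq 2\pi + s$ simultaneously for all $\vec{t} \in \R^n$. Condition on this event. Part (3) then follows from exactly the same line-integral argument as part (1), with the uniform bound $2\pi + s$ replacing $2\pi$. For part (4), note that $f_W(\vec{t}) = \frac{1}{N}\sum_i \cos(2\pi\inner{\vec{w}_i,\vec{t}})$, so $\grad f_W(\vec{t}) = -\frac{2\pi}{N}\sum_i \vec{w}_i \sin(2\pi \inner{\vec{w}_i,\vec{t}})$, which vanishes at $\vec{t}=\vec{0}$. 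Combined with the now-established part (3), this gives $\|\grad f_W(\vec{x})\| \leq (2\pi+s)\|\vec{x}\|$ for all $\vec{x}$. Repeating the line-integral argument from part (2) yields part (4).

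There is no genuine obstacle here: the only subtle point is ensuring that the single high-probability event from Lemma~\ref{lem:Hess}(3) is strong enough to underwrite both empirical bounds simultaneously for all pairs $\vec{t}, \vec{t}'$, but because that lemma already provides a uniform-in-$\vec{t}$ spectral bound on $Hf_W$, a single conditioning suffices and the two empirical inequalities follow together with the claimed probability $1-2^{-\Omega(N\min(s,s^2))}$.
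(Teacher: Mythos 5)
Your proof is correct and follows essentially the same route as the paper's: bound the Hessian uniformly (Lemma~\ref{lem:exactHess} for $f$, Lemma~\ref{lem:Hess}(3) for $f_W$), integrate along line segments to get the gradient Lipschitz bound, then use $\grad f(\vec{0}) = \grad f_W(\vec{0}) = \vec{0}$ together with a second line integral to get the bound on the function itself. The observation that a single conditioning on the uniform spectral bound of $H f_W$ suffices for both empirical inequalities is exactly the paper's argument as well.
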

\begin{proof}
By Lemma~\ref{lem:exactHess}, we have that $\|Hf(\vec{x})\| \leq 2\pi$ for all $ \vec{x} \in \R^n$. From this, we get that
\begin{align*}
\|\grad f(\vec{t}') - \grad f(\vec{t})\| &= \Big\|\int_0^1 Hf((1-r)\vec{t}+r\vec{t}')\cdot (\vec{t}'-\vec{t})dr\Big\| \\
                                        &\leq \|\vec{t}'-\vec{t}\| \int_0^1 \|Hf((1-r)\vec{t}+r\vec{t}')\| dr\\
                                         &\leq 2\pi \|\vec{t}'-\vec{t}\| 
\; .
\end{align*}
Since $\grad f(\vec{0}) = \vec{0}$, using the above we get that $\|\grad f(\vec{x})\| = \|\grad f(\vec{x}) - \grad f(\vec{0})\| \leq 2\pi
\|\vec{x}\|$, for all $ \vec{x} \in \R^n$. Using this inequality, we get that
\begin{align*}
|f(\vec{t}') - f(\vec{t})| &= \Big|\int_0^1 \inner{\grad f((1-r)\vec{t}+r\vec{t}'),\vec{t}'-\vec{t}}dr\Big| \\
                                        &\leq \|\vec{t}'-\vec{t}\|\int_0^1 \|\grad f((1-r)\vec{t}+r\vec{t}')\| dr \\
                                        &\leq 2\pi \max (\|\vec{t}\|, \|\vec{t}'\|) \|\vec{t}'-\vec{t}\|
\; .
\end{align*}

For the second part, by Lemma~\ref{lem:Hess} the event $\|Hf_W(\vec{x})\| \leq 2\pi+s$, for all $ \vec{x} \in \R^n$, holds with probability
$1-2^{\Omega(N\min(s,s^2))}$. The claim now follows by the same proof as above replacing $f$ by $f_W$. 
\end{proof}

\begin{lemma}
\label{lem:verycloset}
Let $\lat \subset \R^n$ be a lattice with $\rho(\lat) = 1+\eps$ for $\eps \in
(0,1/200)$. Let $W = (\vec{w}_1, \ldots, \vec{w}_N )$ be sampled independently
from $D_{\lat^*}$ with $N \geq \Omega(n/\sqrt{\eps})$. Then, 
\[
\Pr\Big[\exists \vec{t}, \length{\vec{t}} \leq \eps^{1/8}/(1000n) : 
\Big\| \frac{\grad f_W(\vec{t})}{2\pi f_W(\vec{t})} + \vec{t} \Big\| >
\eps^{1/4} \length{\vec{t}}\Big] \leq 2^{-\Omega(n)}\; .
\]
\end{lemma}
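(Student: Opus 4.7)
The plan is to exploit the fact that $\vec{t}$ is extremely close to the origin, so a first-order Taylor expansion of $\grad f_W$ around $\vec{0}$ — combined with $\grad f_W(\vec{0}) = \vec{0}$ — reduces the statement to a uniform bound on $\|Hf_W(\cdot) + 2\pi I_n\|$ of the sort already provided by Lemma~\ref{lem:Hess}(2). Concretely, using $\grad f_W(\vec{0}) = \vec{0}$ and $f_W(\vec{0}) = 1$, I would write
\[
\grad f_W(\vec{t}) + 2\pi\vec{t} \;=\; \int_0^1 \bigl(Hf_W(r\vec{t}) + 2\pi I_n\bigr)\vec{t}\,dr,
\]
and
\[
\Big\| \frac{\grad f_W(\vec{t})}{2\pi f_W(\vec{t})} + \vec{t} \Big\|
\;\le\; \frac{\|\grad f_W(\vec{t}) + 2\pi\vec{t}\| \;+\; 2\pi\, |f_W(\vec{t}) - 1|\cdot\|\vec{t}\|}{2\pi\, |f_W(\vec{t})|}.
\]
So the task is to control (a) $\|Hf_W(r\vec{t}) + 2\pi I_n\|$ uniformly over $r \in [0,1]$, and (b) $|f_W(\vec{t}) - 1|$.

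For (a), I would invoke Lemma~\ref{lem:Hess}(2) with slack parameter $s = \eps^{1/4}$. The hypothesis $N = \Omega(n/\sqrt{\eps})$ is exactly what makes $N\min(s,s^2) = N\sqrt{\eps} = \Omega(n)$, so this is applicable. Then with probability $1 - 2^{-\Omega(n)}$, simultaneously for all $\vec{x}$,
\[
\|Hf_W(\vec{x}) + 2\pi I_n\| \;\le\; \Delta_\eps + \eps^{1/4} + (500 n\|\vec{x}\|)^2.
\]
Since $\|\vec{t}\| \le \eps^{1/8}/(1000n)$, the quadratic term is at most $\eps^{1/4}/4$, and for $\eps < 1/200$ a direct calculation gives $\Delta_\eps = O(\eps\log(1/\eps)) \ll \eps^{1/4}$. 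Hence the integral bound yields $\|\grad f_W(\vec{t}) + 2\pi\vec{t}\| \le c\,\eps^{1/4}\|\vec{t}\|$ for some small absolute constant $c$. For (b), I would apply the Lipschitz bound from Lemma~\ref{lem:lipshitz} (on the same high-probability event, choosing a compatible slack), giving $|f_W(\vec{t}) - 1| \le (2\pi + \eps^{1/4})\|\vec{t}\|^2/2$, which is $O(\|\vec{t}\|^2)$ and thus negligible compared to $\eps^{1/4}\|\vec{t}\|$ in the regime $\|\vec{t}\| \le \eps^{1/8}/(1000n)$. The same Lipschitz estimate also forces $f_W(\vec{t}) \geq 1/2$, say, so the denominator causes no problem.

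Combining, the numerator is bounded by $(c\,\eps^{1/4} + O(\|\vec{t}\|))\|\vec{t}\|$ and the denominator is at least $\pi$, and substituting $\|\vec{t}\| \le \eps^{1/8}/(1000n)$ yields the desired bound $\eps^{1/4}\|\vec{t}\|$ after adjusting the absolute constants in the definition of $\|\vec{t}\|$. All failure events have probability $2^{-\Omega(n)}$, so a union bound preserves the conclusion. The main obstacle is really just bookkeeping: ensuring that each of the three contributions to the error ($\Delta_\eps$ from the first-order term of the bias, the statistical slack $\eps^{1/4}$ from Lemma~\ref{lem:Hess}(2), and the quadratic Hessian-variation term) is bounded by $2\pi\eps^{1/4}/3$ (or similar) for the entire range $\eps \in (0, 1/200)$, and verifying that the choice $N = \Omega(n/\sqrt{\eps})$ is tight enough to make this work.
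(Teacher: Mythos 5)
Your approach matches the paper's own proof essentially step for step: Taylor-expand $\grad f_W$ around $\vec{0}$ via the integral representation $\grad f_W(\vec{t}) + 2\pi\vec{t} = \int_0^1 (Hf_W(r\vec{t}) + 2\pi I_n)\vec{t}\,dr$, invoke Lemma~\ref{lem:Hess}(2) to bound $\|Hf_W(\vec{x}) + 2\pi I_n\|$ uniformly on the range $\|\vec{x}\| \le \eps^{1/8}/(1000n)$ (the paper takes slack $s = 3\eps^{1/4}/4$, you propose $s = \eps^{1/4}$; both give $N\min(s,s^2) = \Omega(n)$ under $N = \Omega(n/\sqrt{\eps})$), and finally account for the $1/f_W(\vec{t})$ factor by showing $f_W(\vec{t})$ is close to $1$. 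Your decomposition puts everything over a common denominator whereas the paper splits additively as $\|\grad f_W(\vec{t})/(2\pi)+\vec{t}\| + (1/f_W(\vec{t})-1)\|\grad f_W(\vec{t})/(2\pi)\|$; these are equivalent. The paper also obtains $1-f_W(\vec{t}) \leq (2\pi + 4\eps^{1/4})\|\vec{t}\|^2$ directly from the just-established gradient bound rather than invoking Lemma~\ref{lem:lipshitz} as a separate event, saving one union-bound, but your route works too.

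One small inaccuracy worth flagging: you assert $\Delta_\eps = O(\eps\log(1/\eps)) \ll \eps^{1/4}$ for all $\eps < 1/200$. The asymptotic relation holds as $\eps \to 0$, but near the boundary $\eps \approx 1/200$ one has $\Delta_\eps \approx 1.6\,\eps^{1/4}$, so $\Delta_\eps$ is not at all negligible relative to $\eps^{1/4}$. The paper uses the coarser but correct bound $\Delta_\eps \leq 3\eps^{1/4}$ valid on all of $(0,1/200)$. Your own proposed allocation of roughly $2\pi\eps^{1/4}/3 \approx 2.1\,\eps^{1/4}$ per error contribution does accommodate this, so the argument closes, but the budget is tighter than your ``$\ll$'' language suggests and one cannot simply discard the $\Delta_\eps$ term.
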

\begin{proof}
Let $\Delta_\eps =  \frac{4\pi \eps}{1+\eps} (\log \frac{2(1+\eps)}{\eps} +
1)$ as in Lemma~\ref{lem:Hess}, and note that $\Delta_\eps \leq 3\eps^{1/4}$ for $\eps < 1/200$. Then, by Lemma~\ref{lem:Hess}, setting $s = 3\eps^{1/4}/4$, we have that
\[
\length{H f_W(\vec{x}) + 2\pi I_n} < \Delta_\eps +
(500n\length{\vec{x}})^2 + 3\eps^{1/4}/4 \leq 3\eps^{1/4} + \eps^{1/4}/4 +
3\eps^{1/4}/4 = 4\eps^{1/4} 
\]
holds simultaneously for all $\vec{x}$ with $\length{\vec{x}} \leq
\eps^{1/8}/(1000n)$ with probability at least $1-2^{-\Omega(n)}$. Suppose
this holds. Noting that $\grad f_W(\vec{0}) = \vec{0}$, it follows that for
all $\vec{x}'$, $\length{\vec{x}'} \leq \eps^{1/8}/(1000n)$, we have that
\begin{align*}
\length{\grad f_W(\vec{x}') + 2\pi\vec{x}'} 
       &= \length[\Big]{\int_0^1 Hf_W(r\vec{x}')\vec{x}'dr + 2\pi \vec{x}'} = \length[\Big]{\int_0^1
(Hf_W(r\vec{x}')+2\pi I_n)\vec{x}'dr} \\ &\leq \length{\vec{x}'}\int_0^1
\length{Hf_W(r\vec{x}')+2\pi I_n}dr \leq 4\eps^{1/4} \cdot\length{\vec{x}'}
\; .
\end{align*}

In particular, $\length{\grad f_W(\vec{x}')} \leq (2\pi+4\eps^{1/4}) \|\vec{x}'\|$.
Since $f_W(\vec{0}) = 1$, it follows that for any $\vec{t}$ with
$\length{\vec{t}} \leq \eps^{1/8}/(1000n)$, we have
\[
1 \geq f_W(\vec{t}) \geq 1-(2\pi+4\eps^{1/4})\|\vec{t}\|^2 \geq 1- \eps^{1/4}/100
\; .
\] 

Putting it all together, 
\begin{align*}
\left\|\frac{\grad f_W(\vec{t})}{2\pi f_W(\vec{t})}+\vec{t}\right\| 
&\leq \left\|\frac{\grad f_W(\vec{t})}{2\pi}+\vec{t}\right\|
+ \left(\frac{1}{f_W(\vec{t})}-1\right)\left\|\frac{\grad f_W(\vec{t})}{2\pi}\right\|
\\
&\leq \frac{4\eps^{1/4}}{2\pi}\|\vec{t}\| +
\Big(\frac{1-f_W(\vec{t})}{f_W(\vec{t})}\Big)\Big(1+\frac{4\eps^{1/4}}{2\pi}\Big)\|\vec{t}\| \\
&\leq \frac{2}{3} \eps^{1/4} \|\vec{t}\| +
\frac{4}{3} \Big(\frac{1-f_W(\vec{t})}{f_W(\vec{t})}\Big) \|\vec{t}\| \\
&\leq \eps^{1/4}\|\vec{t}\| 
\; ,
\end{align*}
as needed.
\end{proof}

\begin{claim}
\label{claim:fraction_id_impr}
Let $\eps \in (0,1/200)$ and $\lat \subset \R^n$ be a lattice with $\rho(\lat) =
1+\eps$. Let $\sepsdef$, $\dmaxdef$, and $W = (\vec{w}_1,\dots,\vec{w}_N)$
be vectors in $\lat^*$. Suppose that for some $\gamma>0$ and $\vec{t} \in \R^n$ it holds that
\begin{enumerate}
\item  $\|\vec{t}\| \leq \min \set{\dmax s_\eps,~ \sqrt{\log(1/(4\gamma))/\pi}} $, 
\item $ \|\grad f_W(\vec{t})-\grad f(\vec{t})\| \leq \frac{\pi}{2} \gamma \length{\vec{t}} $, and
\item $|f_W(\vec{t})-f(\vec{t})| \leq \gamma$.
\end{enumerate}
Then, 
\[
\Big\|\frac{\grad f_W(\vec{t})}{2\pi f_W(\vec{t})} - 
              \frac{\grad f(\vec{t})}{2\pi f(\vec{t})}\Big\| \leq
\frac{2\gamma}{\rho(\vec{t})} \length{\vec{t}} \text{ .}
\]
\end{claim}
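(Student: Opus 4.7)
The plan is to prove the bound by the standard algebraic identity
\[
\frac{\grad f_W(\vec{t})}{2\pi f_W(\vec{t})} - \frac{\grad f(\vec{t})}{2\pi f(\vec{t})}
= \frac{\grad f_W(\vec{t}) - \grad f(\vec{t})}{2\pi f_W(\vec{t})}
+ \frac{\grad f(\vec{t})}{2\pi f(\vec{t})} \cdot \frac{f(\vec{t}) - f_W(\vec{t})}{f_W(\vec{t})}\, ,
\]
and then bound the two summands separately using the triangle inequality. To make this work I need a lower bound on $f_W(\vec{t})$ (to control both denominators $2\pi f_W(\vec{t})$ appearing in the identity) and an upper bound on $\|\grad f(\vec{t})/(2\pi f(\vec{t}))\|$ (the ``normalized exact gradient'').

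First I would establish the denominator bound. Condition $(1)$ gives $\pi\|\vec{t}\|^2 \leq \log(1/(4\gamma))$, hence $\rho(\vec{t}) \geq 4\gamma$. Combining this with Lemma~\ref{lem:betterrhoLtbound} ($f(\vec{t}) \geq \rho(\vec{t})$) and condition $(3)$ ($|f_W(\vec{t})-f(\vec{t})| \leq \gamma$), we get
\[
f_W(\vec{t}) \;\geq\; \rho(\vec{t}) - \gamma \;\geq\; \rho(\vec{t}) - \rho(\vec{t})/4 \;=\; \tfrac{3}{4}\rho(\vec{t})\, .
\]
Next, since $\|\vec{t}\| \leq \dmax s_\eps$ by condition $(1)$, the second conclusion of Corollary~\ref{cor:simpleexpectbound} applies and gives $\|\grad f(\vec{t})/(2\pi f(\vec{t})) + \vec{t}\| \leq \|\vec{t}\|/4$, so by the triangle inequality $\|\grad f(\vec{t})/(2\pi f(\vec{t}))\| \leq \tfrac{5}{4}\|\vec{t}\|$.

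Now I would plug these into the two summands of the decomposition. The first summand is bounded by
\[
\frac{\|\grad f_W(\vec{t}) - \grad f(\vec{t})\|}{2\pi f_W(\vec{t})}
\;\leq\; \frac{(\pi/2)\gamma\|\vec{t}\|}{2\pi \cdot (3/4)\rho(\vec{t})}
\;=\; \frac{\gamma\|\vec{t}\|}{3\rho(\vec{t})}\, ,
\]
using condition $(2)$. The second summand is bounded by
\[
\Big\|\frac{\grad f(\vec{t})}{2\pi f(\vec{t})}\Big\| \cdot \frac{|f(\vec{t})-f_W(\vec{t})|}{f_W(\vec{t})}
\;\leq\; \frac{5}{4}\|\vec{t}\| \cdot \frac{\gamma}{(3/4)\rho(\vec{t})}
\;=\; \frac{5\gamma\|\vec{t}\|}{3\rho(\vec{t})}\, .
\]
Summing these two bounds yields exactly $2\gamma\|\vec{t}\|/\rho(\vec{t})$, the desired conclusion.

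The whole proof is essentially bookkeeping once one chooses the right decomposition; the only nontrivial step is recognizing that $f_W(\vec{t})$ must be bounded below by a constant fraction of $\rho(\vec{t})$ (not merely by $\rho(\vec{t}) - \gamma$ as a small quantity), so that the denominator $\rho(\vec{t})$ appearing in the target bound comes out cleanly. Condition $(1)$ was designed for exactly this purpose: it encodes precisely that $\gamma$ is at most a quarter of $\rho(\vec{t})$, which is the slack needed to absorb both the additive error from condition $(3)$ and the $5/4$ slack from the ``$\vec{t}$ plus quarter'' bound of Corollary~\ref{cor:simpleexpectbound}. No other subtlety arises: the gradient-difference bound in condition $(2)$ is already of the desired form $\gamma\|\vec{t}\|$, with the constant $\pi/2$ chosen precisely to cancel against the $1/(2\pi)$ and the $3/4$ to give the clean final constant.
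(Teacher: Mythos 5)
Your proof is correct and takes essentially the same approach as the paper: the algebraic decomposition you use is the same as the paper's (the paper's first term $\frac{\grad f_W - \grad f}{2\pi f}\cdot\frac{f}{f_W}$ simplifies to your $\frac{\grad f_W - \grad f}{2\pi f_W}$, and the second terms are algebraically identical), and both proofs rely on $f(\vec{t})\geq\rho(\vec{t})\geq 4\gamma$ from Lemma~\ref{lem:betterrhoLtbound} and the bound $\|\grad f/(2\pi f)\|\leq\frac{5}{4}\|\vec{t}\|$ from Corollary~\ref{cor:simpleexpectbound}. The only stylistic difference is that you substitute $f_W\geq\frac{3}{4}\rho(\vec{t})$ into the denominators immediately, whereas the paper carries the quantity $f(\vec{t})-\gamma$ through and converts to $\rho(\vec{t})$ at the very end; the arithmetic lands on $2\gamma\|\vec{t}\|/\rho(\vec{t})$ either way.
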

\begin{proof}
By Lemma~\ref{lem:betterrhoLtbound} and the first assumption, we see that
$f(\vec{t}) \geq \rho(\vec{t}) \geq 4\gamma$.

By the triangle inequality
\begin{align}
\Big\|\frac{\grad f_W(\vec{t})}{2\pi f_W(\vec{t})} - 
              \frac{\grad f(\vec{t})}{2\pi f(\vec{t})}\Big\| 
&= \Big\|\frac{\grad f_W(\vec{t})-\grad f(\vec{t})}{2\pi f(\vec{t})}
~\frac{f(\vec{t})}{f_W(\vec{t})} + 
              \frac{\grad f(\vec{t})}{2\pi
f(\vec{t})}\Big(\frac{f(\vec{t})}{f_W(\vec{t})}-1\Big)\Big\| \nonumber \\
&\leq \Big\|\frac{\grad f_W(\vec{t})-\grad f(\vec{t})}{2\pi f(\vec{t})}
\Big\|~\frac{f(\vec{t})}{f_W(\vec{t})} + 
              \Big\|\frac{\grad f(\vec{t})}{2\pi
f(\vec{t})}\Big\|\left|\frac{f(\vec{t})}{f_W(\vec{t})}-1\right| \; .
\label{eq:frac-1}
\end{align}
For the first term in~\eqref{eq:frac-1}, by the second and third assumption, we have
\begin{equation}
\label{eq:frac-2}
\Big\|\frac{\grad f_W(\vec{t})-\grad f(\vec{t})}{2\pi f(\vec{t})}
\Big\|~\frac{f(\vec{t})}{f_W(\vec{t})} 
\leq \frac{\gamma\|\vec{t}\|}{4 f(\vec{t})} \cdot \frac{f(\vec{t})}{f(\vec{t})-\gamma} 
= \frac{\gamma}{4(f(\vec{t})-\gamma)}\|\vec{t}\| \; .
\end{equation}
For the second term in~\eqref{eq:frac-1}, by Corollary~\ref{cor:simpleexpectbound} 
and the first and third assumption,
\begin{equation}
\label{eq:frac-3}
\Big\|\frac{\grad f(\vec{t})}{2\pi
f(\vec{t})}\Big\|\left|\frac{f(\vec{t})}{f_W(\vec{t})}-1\right| \leq
\frac{5}{4}\|\vec{t}\| \left(\frac{f(\vec{t})}{f(\vec{t})-\gamma}-1\right) =
\frac{5\gamma}{4(f(\vec{t})-\gamma)} \|\vec{t}\| \; .
\end{equation}
Combining \eqref{eq:frac-1}, \eqref{eq:frac-2}, and \eqref{eq:frac-3} together, we have
\[
\Big\|\frac{\grad f_W(\vec{t})}{2\pi f_W(\vec{t})} - 
              \frac{\grad f(\vec{t})}{2\pi f(\vec{t})}\Big\| \leq
\frac{6}{4} \frac{\gamma}{f(\vec{t})-\gamma}\|\vec{t}\| 
\leq \frac{2\gamma}{\rho(\vec{t})}\|\vec{t}\| \; ,
\]
as needed.
\end{proof}

\begin{lemma}
\label{lem:gradestimator}For $\lat \subset \R^n$ a lattice, $W=( \vec{w}_1, \ldots, \vec{w}_N )$ sampled independently from $D_{\lat^*}$, $\vec{t} \in \R^n$, and $s \geq 0$,
\[ 
\Pr[ \length{ \grad f_W(\vec{t}) - \grad f(\vec{t})} > s\length{\vec{t}} ] \leq 2^{-\Omega(N\min(s,s^2)) + O(n)}  \; .
\]
\end{lemma}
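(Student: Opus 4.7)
The plan is to bound $\|\grad f_W(\vec{t}) - \grad f(\vec{t})\|$ by testing against a net of unit vectors, showing each one-dimensional inner product concentrates via the subexponential tail bound (Lemma~\ref{lem:subexp}), and then taking a union bound over a $\frac{1}{2}$-net of the unit sphere (which has size $2^{O(n)}$ by Lemma~\ref{lem:smallnet}), converting back to a spectral-type bound via Lemma~\ref{lem:deltanorm}.

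Concretely, differentiating the formula $f_W(\vec{t}) = \frac{1}{N}\sum_i \cos(2\pi\inner{\vec{w}_i,\vec{t}})$ and the corresponding formula~\eqref{eq:poisson} for $f$, I would write
\[
\grad f_W(\vec{t}) - \grad f(\vec{t}) = \frac{1}{N}\sum_{i=1}^N Y_i, \qquad
Y_i \eqdef -2\pi\vec{w}_i\sin(2\pi\inner{\vec{w}_i,\vec{t}}) - \expect[-2\pi\vec{w}\sin(2\pi\inner{\vec{w},\vec{t}})],
\]
where the $Y_i$ are i.i.d.\ mean-zero vector-valued random variables. For any fixed unit vector $\vec{v} \in \R^n$, I would show that $\inner{\vec{v},Y_i}/\length{\vec{t}}$ is subexponential with parameter $O(1)$. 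Using $|\sin(2\pi x)| \leq 2\pi|x|$, we have
\[
\bigl|\inner{\vec{v},\vec{w}_i}\sin(2\pi\inner{\vec{w}_i,\vec{t}})\bigr| \leq 2\pi\length{\vec{t}}\cdot\bigl|\inner{\vec{v},\vec{w}_i}\inner{\vec{w}_i,\vec{t}/\length{\vec{t}}}\bigr|.
\]
By Lemma~\ref{lem:subgaussian}, $\vec{w}_i$ is subgaussian with parameter $1$, so by Corollary~\ref{cor:subgaussianproduct} the product $\inner{\vec{v},\vec{w}_i}\inner{\vec{w}_i,\vec{t}/\length{\vec{t}}}$ is subexponential with parameter $O(1)$. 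Hence $\inner{\vec{v},\vec{w}_i}\sin(2\pi\inner{\vec{w}_i,\vec{t}})/\length{\vec{t}}$ is as well, and so is its centered version $\inner{\vec{v},Y_i}/(2\pi\length{\vec{t}})$.

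Applying Lemma~\ref{lem:subexp} to the $N$ i.i.d.\ mean-zero subexponentials $\inner{\vec{v},Y_i}/\length{\vec{t}}$, we obtain
\[
\Pr\Bigl[\bigl|\inner{\vec{v},\grad f_W(\vec{t})-\grad f(\vec{t})}\bigr| > (s/2)\length{\vec{t}}\Bigr] \leq 2^{1-\Omega(N\min(s,s^2))}
\]
for each fixed unit vector $\vec{v}$. Taking a $\frac{1}{2}$-net $A$ of the unit sphere of size $2^{O(n)}$ (Lemma~\ref{lem:smallnet}) and applying a union bound gives that $\max_{\vec{v}\in A}|\inner{\vec{v},\grad f_W(\vec{t})-\grad f(\vec{t})}| \leq (s/2)\length{\vec{t}}$ except with probability $2^{-\Omega(N\min(s,s^2))+O(n)}$. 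Finally, by Lemma~\ref{lem:deltanorm} this implies $\|\grad f_W(\vec{t})-\grad f(\vec{t})\| \leq s\length{\vec{t}}$, completing the proof.

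The only mildly delicate step is the subexponentiality claim for $\inner{\vec{v},\vec{w}}\sin(2\pi\inner{\vec{w},\vec{t}})$: one wants the scaling to be linear in $\length{\vec{t}}$ (not constant), which is exactly what $|\sin x|\leq|x|$ gives. Everything else is a straightforward packaging of the standard net-and-concentration argument.
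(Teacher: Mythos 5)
Your proof is correct and follows essentially the same route as the paper's: bound $|\sin(2\pi x)|\le 2\pi|x|$ to reduce to a product of subgaussian inner products, invoke Corollary~\ref{cor:subgaussianproduct} and Lemma~\ref{lem:subexp} for the one-dimensional concentration, then union bound over a $\tfrac12$-net via Lemmas~\ref{lem:smallnet} and~\ref{lem:deltanorm}. The only cosmetic difference is that you explicitly center the terms $Y_i$ before applying Lemma~\ref{lem:subexp}, a detail the paper leaves implicit.
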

\begin{proof}
For any $i$ and any unit vector $\vec{v}$, \[
\abs{\inner{\grad f_{\{ \vec{w}_i \}} (\vec{t}), \vec{v}}} = 2\pi\abs{\inner{\vec{w}_i, \vec{v}}\sin(2\pi \inner{\vec{w}_i, \vec{t}})} \leq 4\pi^2 \abs{\inner{\vec{w}_i, \vec{v}}\inner{\vec{w}_i, \vec{t}}}
\; .
\]
It follows from the subgaussianity of the discrete Gaussian and Corollary~\ref{cor:subgaussianproduct} that $\inner{\grad f_{\{ \vec{w}_i
\}}(\vec{t}), \vec{v}}/\length{\vec{t}}$ is subexponential with parameter $O(1)$. Applying Lemma~\ref{lem:subexp}, we get that

\[
\Pr[\abs{\inner{\grad f_{W}(\vec{t}) - \grad f(\vec{t}),\vec{v}}} > (s/2)\length{\vec{t}}] \leq 2^{1-\Omega(N\min(s,s^2))} \; .
\]
By Lemma~\ref{lem:smallnet}, there is a $\frac{1}{2}$-net of the sphere, $A$ with $|A| = 2^{O(n)}$. Taking a union bound over $A$
and applying Lemma~\ref{lem:deltanorm} gives
\[ 
\Pr[\length{\grad f_W(\vec{t}) - \grad f(\vec{t})} > s\length{\vec{t}}] \leq 2^{-\Omega(N\min(s,s^2)) + O(n)},
\] 
as needed.
\end{proof}

\begin{lemma}
\label{lem:gradcloseallpoints}
Let $\lat \subset \R^n$ be a lattice with $\rho(\lat) = 1+\eps$ with $\eps \in (0,1/200)$. Let \sepsdef. Let $W=( \vec{w}_1, \ldots, \vec{w}_N )$
be sampled independently from $D_{\lat^*}$. Then, for $\eps^2 \leq s\leq 10$, if $N\geq \Omega(n \log(1/\eps)/s^2)$,
\[ 
\Pr[\exists \vec{t} \in \R^n, \eps^{1/8}/(1000n) \leq \length{\vec{t}} \leq s_\eps : \length{\grad f_W(\vec{t}) - \grad f(\vec{t})} > s\length{\vec{t}}] \leq
2^{-\Omega(Ns^2)} \; .
\]
\end{lemma}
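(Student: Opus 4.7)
The plan is to upgrade the pointwise bound in Lemma~\ref{lem:gradestimator} to a uniform bound over all $\vec{t}$ in the spherical shell $\{\vec{t}: \eps^{1/8}/(1000n) \le \|\vec{t}\| \le s_\eps\}$ via a standard net-plus-Lipschitz argument. The complication is that the per-point failure bound scales with $\|\vec{t}\|$, so we cannot use a single uniform net, and instead I would cover the shell by geometric annuli $A_k = \{\vec{t} : R_k/2 \le \|\vec{t}\| \le R_k\}$ with $R_k = 2^k \cdot \eps^{1/8}/(500n)$ for $k = 0,1,\dots,K$, where $K = O(\log(n s_\eps /\eps^{1/8})) = O(\log(n/\eps))$ annuli suffice to cover the entire shell.

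Within each annulus $A_k$, I would take a $\delta R_k$-net $\mathcal{N}_k \subset A_k$ with $\delta = c s$ for a small constant $c>0$ to be chosen below. By Lemma~\ref{lem:smallnet} we may take $|\mathcal{N}_k| \le (1+2/\delta)^n = 2^{O(n \log(1/s))}$. By Lemma~\ref{lem:gradestimator} applied with parameter $s/2$, each fixed net point $\vec{u}$ satisfies $\|\grad f_W(\vec{u}) - \grad f(\vec{u})\| \le (s/2)\|\vec{u}\|$ except with probability $2^{-\Omega(Ns^2) + O(n)}$. A union bound over all $\mathcal{N}_k$ and all $k$ thus fails with probability at most $K \cdot 2^{O(n\log(1/s)) - \Omega(Ns^2) + O(n)}$. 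Since $s \ge \eps^2$ implies $\log(1/s) \le 2\log(1/\eps)$, and since $N s^2 \ge \Omega(n \log(1/\eps))$ by hypothesis, choosing the hidden constant in $N$ sufficiently large absorbs both the $O(n\log(1/s))$ and $O(\log K) = O(\log\log(n/\eps))$ overhead, yielding a combined failure probability of $2^{-\Omega(Ns^2)}$.

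To extend from net points to arbitrary $\vec{t}$, I would invoke Lemma~\ref{lem:lipshitz}, whose ``with probability $1 - 2^{-\Omega(Ns^2)}$'' conclusion gives $\|\grad f_W(\vec{t}) - \grad f_W(\vec{u})\| \le (2\pi + s)\|\vec{t}-\vec{u}\|$ simultaneously for all $\vec{t},\vec{u}$, together with the deterministic bound $\|\grad f(\vec{t}) - \grad f(\vec{u})\| \le 2\pi \|\vec{t}-\vec{u}\|$ from the same lemma. For $\vec{t} \in A_k$ and its nearest $\vec{u} \in \mathcal{N}_k$, we have $\|\vec{t}-\vec{u}\| \le \delta R_k \le 2\delta \|\vec{t}\|$ and $\|\vec{u}\| \le (1+2\delta)\|\vec{t}\|$, so by the triangle inequality
\begin{align*}
\|\grad f_W(\vec{t}) - \grad f(\vec{t})\|
&\le (s/2)\|\vec{u}\| + (4\pi + s)\|\vec{t}-\vec{u}\| \\
&\le \bigl[(s/2)(1+2\delta) + 2\delta(4\pi + s)\bigr] \|\vec{t}\|.
\end{align*}
Choosing $\delta = c s$ with $c$ small enough that the bracketed expression is at most $s$ (e.g.\ $c = 1/(100)$ suffices for $s \le 10$) gives the desired conclusion $\|\grad f_W(\vec{t}) - \grad f(\vec{t})\| \le s \|\vec{t}\|$ for all $\vec{t}$ in the shell.

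The main obstacle is the bookkeeping of the union bound: the net size grows like $(1/s)^n$, and one must verify that the hypothesis $N \ge \Omega(n\log(1/\eps)/s^2)$ together with $s \ge \eps^2$ makes $Ns^2$ strictly dominate $n\log(1/s)$. This is precisely the role played by the lower bound $s \ge \eps^2$ in the statement: it ensures $\log(1/s) = O(\log(1/\eps))$ so that a single factor of $\log(1/\eps)$ in $N$ is enough to absorb the net cardinality, and the remaining budget in $Ns^2$ delivers the claimed $2^{-\Omega(Ns^2)}$ probability.
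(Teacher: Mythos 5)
Your proposal follows essentially the same route as the paper: cover the shell by $O(\log(n/\epsilon))$ geometrically scaled annuli, place a net of granularity proportional to $s$ times the annulus radius on each (giving $2^{O(n\log(1/s))} = 2^{O(n\log(1/\eps))}$ points in total since $s\ge\eps^2$), union-bound Lemma~\ref{lem:gradestimator} over net points, and transfer to arbitrary $\vec{t}$ via the Lipschitz bounds of Lemma~\ref{lem:lipshitz}. The only (cosmetic) quibble is your explicit constant $c = 1/100$: plugging $\delta = s/100$ into your bracket gives $s/2 + 3s^2/100 + 8\pi s/100$, which exceeds $s$ once $s > (50-8\pi)/3 \approx 8.3$, so a slightly smaller $c$ (or applying Lemma~\ref{lem:gradestimator} at the net points with parameter $s/4$ rather than $s/2$) is needed to cover the whole range $s \le 10$.
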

\begin{proof}
We wish to find a set a vectors $A = \{ \vec{t}_j \}$ such that for any
$\vec{t}$ with $\eps^{1/8}/(1000n) \leq
\length{\vec{t}} \leq s_\eps$, there is a $\vec{t}_j \in A$ with
$\length{\vec{t} - \vec{t}_j} \leq s \length{\vec{t}_j}/100$. For $i = -\ceil{\log n}-\ceil{\log 1/\eps}-10$ to 
$\ceil{\log s_\eps}$, let $A_i $ be a $(e^{i} s/100)$-net of the shell of inner radius radius $e^i$
and outer radius $e^{i+1}$. By Lemma~\ref{lem:smallnet}, we can take $|A_i| =
2^{O(n \log (1/\eps))}$. Let $A = \cup A_i$. There are $O(\log n + 
\log(1/\eps))$ such nets, so $|A| = 2^{O(n \log (1/\eps))}$.

We show that two bounds hold with high probability.
\begin{enumerate}
\item By Lemma~\ref{lem:lipshitz}, $\|\grad f_W(\vec{x})-\grad f_W(\vec{y})\| \leq 3\pi\|\vec{x}-\vec{y}\|$ holds simultaneously for all $\vec{x},\vec{y} \in
\R^n$ with probability at least $1-2^{-\Omega(N)}$. 
\item By Lemma~\ref{lem:gradestimator} and union bound over $A$, $\length{\grad f_W(\vec{t}_j) - \grad f(\vec{t}_j)} \leq s\length{\vec{t}_j}/10$ holds simultaneously for all $\vec{t}_j$ with probability at least 
$1-2^{-\Omega(Ns^2)+O(n \log(1/\eps)) }$.
\end{enumerate}

Suppose that both bounds hold, which happens with probability at least $1-2^{-\Omega(Ns^2)}$. For a target vector $\vec{t}$ with $\eps^{1/8}/(1000n)  \leq \length{\vec{t}} \leq s_\eps$, let $\vec{t}_j$ be the closest
vector to $\vec{t}$ in $A$. Then, by the first bound, $\length{\grad f_W(\vec{t}) - \grad f_W(\vec{t}_j)  } \leq 3\pi \length{\vec{t} - \vec{t}_j}
\leq s \length{\vec{t}_j}/10$. Again, using Lemma~\ref{lem:lipshitz}, $\length{\grad f(\vec{t}) - \grad f(\vec{t}_j)  } \leq s\length{\vec{t}_j}/10$.
Applying triangle inequality repeatedly and noting that $\length{\vec{t}_j} \leq e\length{\vec{t}}$, 
\begin{align*}
\length{\grad f_W(\vec{t}) - \grad f(\vec{t})} &\leq s \length{\vec{t}_j}/5 + \length{\grad f_W(\vec{t}_j) - \grad f(\vec{t}_j)}\\
&\leq s\length{\vec{t}}
\; .
\end{align*}
\end{proof}

\begin{lemma}[{Implicit in \cite[Lemma 1.3]{AharonovR04}}]
\label{lem:festimator}
For a lattice $\lat \subset \R^n$, $W=( \vec{w}_1, \ldots, \vec{w}_N )$ sampled independently from $D_{\lat^*}$, $\vec{t} \in \R^n$, and $s \geq 0$, 
\[ \Pr[\abs{f_W(\vec{t}) -f(\vec{t})} > s ] \leq 2^{1-\Omega(Ns^2)} \; .\]
\end{lemma}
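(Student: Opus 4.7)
My plan is to recognize that $f_W(\vec{t}) - f(\vec{t})$ is, for any fixed $\vec{t}$, the deviation of an empirical mean from its true expectation over $N$ i.i.d.\ bounded random variables, and then apply the Chernoff--Hoeffding bound (Lemma~\ref{lem:chernoff}) directly.

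More concretely, I would proceed as follows. For each $i \in [N]$, set $X_i = \cos(2\pi \inner{\vec{w}_i, \vec{t}})$. Since $\vec{w}_i \sim D_{\lat^*}$ independently and $\cos$ is bounded, the $X_i$ are i.i.d.\ random variables with $-1 \leq X_i \leq 1$. By definition,
\[
f_W(\vec{t}) = \frac{1}{N}\sum_{i=1}^N X_i,
\]
and by the Poisson summation formula identity in~\eqref{eq:poisson},
\[
\expect[X_i] = \expect_{\vec{w} \sim D_{\lat^*}}[\cos(2\pi \inner{\vec{w}, \vec{t}})] = f(\vec{t}).
\]

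Therefore the event $|f_W(\vec{t}) - f(\vec{t})| > s$ is precisely the event $\big|\expect[X_i] - \tfrac{1}{N}\sum_i X_i\big| > s$. Applying Lemma~\ref{lem:chernoff} with $a = 1$ immediately yields
\[
\Pr[|f_W(\vec{t}) - f(\vec{t})| > s] \leq 2^{1-\Omega(Ns^2)},
\]
as claimed.

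There is no real obstacle here: the entire content of the lemma is that $f_W$ is an empirical estimator of $f$ built from bounded samples, so the standard Hoeffding bound applies off the shelf. The only thing to double-check is that~\eqref{eq:poisson} gives the correct expectation, which it does since $\cos$ is bounded and $D_{\lat^*}$ is a probability distribution. This is why the authors attribute the result as implicit in Aharonov--Regev, where essentially the same observation is made.
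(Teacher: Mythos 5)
Your proof is correct and takes exactly the same approach as the paper: the paper's own proof cites Lemma~\ref{lem:chernoff} together with Eqs.~\eqref{eq:poisson} and~\eqref{eq:ar-estimator}, which is precisely the Chernoff--Hoeffding argument you spell out.
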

\begin{proof}
The result follows immediately from Lemma~\ref{lem:chernoff} (the Chernoff-Hoeffding bound) and the definitions of $f_W(\vec{t})$ and $f(\vec{t})$ (see Eqs.~\eqref{eq:poisson} and~\eqref{eq:ar-estimator}).
\end{proof}

\begin{lemma}
\label{lem:festimatorallpoints}
Let $\lat \subset \R^n$ be a lattice with $\rho(\lat) = 1+\eps$ for $\eps \in (0,1/200)$. Let \sepsdef. Let $W=( \vec{w}_1, \ldots, \vec{w}_N )$ be sampled independently from $D_{\lat^*}$. Then, for $\eps^2 \leq s \leq 10$, if $N \geq \Omega(n \log(1/\eps)/s^2)$,
\[
\Pr[ \exists \vec{t}, \length{\vec{t}} \leq s_\eps : \abs{f_W(\vec{t}) -f(\vec{t})} > s ] \leq 2^{-\Omega(Ns^2)} 
\; .\]
\end{lemma}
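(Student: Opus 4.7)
The plan is to establish the result by an $\epsilon$-net argument over the ball $\{\vec{t} : \|\vec{t}\| \leq s_\eps\}$, combined with the pointwise concentration bound from Lemma~\ref{lem:festimator} and the Lipschitz property for $f$ and $f_W$ from Lemma~\ref{lem:lipshitz}.

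First, I would choose a $\delta$-net $A$ of the ball of radius $s_\eps$ with $\delta = c s/s_\eps$ for a sufficiently small absolute constant $c>0$. By Lemma~\ref{lem:smallnet}, such a net exists with $|A| \leq (1 + 2s_\eps/\delta)^n = (1 + 2s_\eps^2/(cs))^n$. Since $s \geq \eps^2$ and $s_\eps^2 = \log(2(1+\eps)/\eps)/\pi = O(\log(1/\eps))$, we have $\log(s_\eps^2/s) = O(\log(1/\eps))$, and hence $|A| \leq 2^{O(n \log(1/\eps))}$.

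Next, I would apply Lemma~\ref{lem:festimator} at each fixed $\vec{t}_j \in A$, obtaining $\Pr[|f_W(\vec{t}_j) - f(\vec{t}_j)| > s/2] \leq 2^{1-\Omega(Ns^2)}$. A union bound over $A$ yields
\[
\Pr[\exists \vec{t}_j \in A : |f_W(\vec{t}_j) - f(\vec{t}_j)| > s/2] \leq 2^{O(n \log(1/\eps)) - \Omega(Ns^2)} \leq 2^{-\Omega(Ns^2)},
\]
where the last inequality uses the hypothesis $N \geq \Omega(n \log(1/\eps)/s^2)$, so that $Ns^2$ dominates $n \log(1/\eps)$ with room to spare. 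Separately, the Lipschitz bound from Lemma~\ref{lem:lipshitz} applied with parameter, say, $s'=1$ yields
\[
|f_W(\vec{t}') - f_W(\vec{t})| \leq (2\pi + 1)\max(\|\vec{t}\|,\|\vec{t}'\|)\|\vec{t}' - \vec{t}\|
\]
simultaneously for all $\vec{t}, \vec{t}' \in \R^n$ with probability at least $1 - 2^{-\Omega(N)} \geq 1 - 2^{-\Omega(Ns^2)}$ (since $s \leq 10$).

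Finally, I would condition on both good events and propagate the pointwise bound to all of the ball. For any $\vec{t}$ with $\|\vec{t}\| \leq s_\eps$, let $\vec{t}_j \in A$ be a net point with $\|\vec{t} - \vec{t}_j\| \leq \delta$. Then both $\|\vec{t}\|$ and $\|\vec{t}_j\|$ are at most $s_\eps + \delta \leq 2 s_\eps$, so by the (unconditional) Lipschitz bound for $f$ in Lemma~\ref{lem:lipshitz} and the conditional one for $f_W$,
\[
|f(\vec{t}) - f(\vec{t}_j)| + |f_W(\vec{t}) - f_W(\vec{t}_j)| \leq O(s_\eps \delta) = O(c s) \leq s/2
\]
for $c$ chosen small enough. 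Combined with the pointwise bound $|f_W(\vec{t}_j) - f(\vec{t}_j)| \leq s/2$, the triangle inequality gives $|f_W(\vec{t}) - f(\vec{t})| \leq s$, as desired.

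The proof is essentially routine given the tools already developed; the only mild subtlety is verifying that the net size $2^{O(n \log(1/\eps))}$ is indeed absorbed by the concentration term $2^{-\Omega(Ns^2)}$ under the hypothesis $N = \Omega(n \log(1/\eps)/s^2)$, which is where the $\log(1/\eps)$ factor in $N$ comes from.
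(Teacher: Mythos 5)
Your proof is correct and follows essentially the same strategy as the paper's: a $\delta$-net over the ball of radius $s_\eps$ with $\delta \asymp s/s_\eps$ (giving $2^{O(n\log(1/\eps))}$ net points), a union bound of the pointwise concentration from Lemma~\ref{lem:festimator}, and the simultaneous Lipschitz estimate from Lemma~\ref{lem:lipshitz} to extend from the net to all of the ball. The only differences are inessential constants in how the error budget $s$ is split.
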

\begin{proof}
Our proof is quite similar to that of Lemma~\ref{lem:gradcloseallpoints}. 
Let $A$ be a $s/(100 s_\eps)$-net of the ball of radius $\dmax s_\eps$. By
Lemma~\ref{lem:smallnet}, and since $s \geq \eps^2$, we can take $|A| \leq (1+200 s_\eps^2/s)^n = 2^{O(n \log(1/\eps))}$. 

The following events hold with high probability.
\begin{enumerate}
\item By Lemma~\ref{lem:lipshitz}, we have that 
$\abs{f_W(\vec{x})-f_W(\vec{y})} \leq 3\pi s_\eps \length{\vec{x}-\vec{y}}$ holds simultaneously for all $\vec{x},\vec{y} \in \R^n$ with $\length{\vec{x}},\length{\vec{y}} \leq s_\eps$ with probability at least $1-2^{-\Omega(N)}$.
\item By Lemma~\ref{lem:festimator} and union bound, we have that $\abs{f_W(\vec{t}_j) -f(\vec{t}_j)} \leq s/10 $ holds simultaneously for all $\vec{t}_j \in A$ with probability at least $1-2^{-\Omega(s^2N) + O(n \log(1/\eps))}$.
\end{enumerate}

Suppose that both bounds hold, which happens with probability at least $1-2^{-\Omega(s^2N)}$. For a target vector $\vec{t}$ with $\length{\vec{t}} \leq s_\eps$, let $\vec{t}_j$
be the closest point to $\vec{t}$ in $A$. From the first event, we have that $\abs{f_W(\vec{t}) - f_W(\vec{t}_j)} \leq
3\pi s_\eps \length{\vec{t} - \vec{t}_j} < s/10 $. Similarly, by Lemma~\ref{lem:lipshitz}, we have $\abs{f(\vec{t}) - f(\vec{t}_j)} < s/10$. Then, using the triangle inequality,
\begin{align*}
\abs{f_W(\vec{t})-f(\vec{t})} &\leq \abs{f_W(\vec{t})-f_W(\vec{t}_j)}+ \abs{f_W(\vec{t}_j)-f(\vec{t}_j)}+ \abs{f(\vec{t}_j)-f(\vec{t})} \\
                        &< s/10 + s/10 + s/10 \\
                        &< s 
\; .
\end{align*}
\end{proof}

\begin{proof}[Proof of Proposition~\ref{prop:estimatorisgreat}]
Lemma~\ref{lem:verycloset} shows that the proposition is satisfied for all $\vec{t}$ with $\length{\vec{t}} \leq \eps^{1/8}/(1000n)$ with probability at least
$1-2^{-\Omega(n)}$. So, we consider the case when $\eps^{1/8}/(1000n) \leq \length{\vec{t}} \leq \dmax s_\eps$. 
By Lemma~\ref{lem:betterrhoLtbound}, for such $\vec{t}$,
\begin{equation}
\label{eq:eig-1}
f(\vec{t}) \geq \rho(\vec{t}) \geq e^{-\pi \dmax^2s_\eps^2} > \eps^{\dmax^2}/2
\geq \eps^{1/4}/2 \; .
\end{equation}
We first show that the estimators $f_W,\grad f_W$ are close to their
expectations.
\begin{enumerate}
\item 
By Lemma~\ref{lem:gradcloseallpoints}, we have that $\length{\grad f_W(\vec{t})
- \grad f(\vec{t})} \leq \eps^{1/4}\length{\vec{t}}/100$ holds
simultaneously for all $\vec{t}$ with $\eps^{1/8}/(1000n) \leq
\length{\vec{t}} \leq \dmax s_\eps$ with probability at least
$1-2^{-\Omega(\eps^{1/2}N)} = 1-2^{-\Omega(n)}$. 
\item By Lemma~\ref{lem:festimatorallpoints}, we have that $\abs{f_W(\vec{t}) -
f(\vec{t})} \leq \eps^{1/4}/100$ holds simultaneously for all relevant
$\vec{t}$ with probability at least $1-2^{-\Omega(\eps^{1/2}N)} = 1-2^{\Omega(n)}$.  
\end{enumerate}

Suppose that both of these bounds hold, which happens with probability at least $1-2^{\Omega(n)}$. 
Then, applying Claim~\ref{claim:fraction_id_impr} with $\gamma =
\eps^{1/4}/100$, we have that for all relevant $\vec{t}$,
\begin{align*}
\Big\| \frac{\grad f_W(\vec{t})}{2\pi f_W(\vec{t})} + \vec{t} \Big\| &\leq
\frac{2\gamma}{\rho(\vec{t})}\length{\vec{t}} + \Big\|\frac{\grad f(\vec{t})}{2\pi f(\vec{t})} +
\vec{t} \Big\| 
\\
&\leq \frac{\eps^{1/4}}{50} \cdot e^{\pi \length{\vec{t}}^2}\length{\vec{t}} + 
12 (\eps/2)^{1-2\delta(\vec{t})}\|\vec{t}\|
&\text{(Corollary~\ref{cor:simpleexpectbound})} \\
&\leq\frac{\eps^{1/4}}{50}  \cdot \Big( \frac{2(1+\eps)}{\eps} \Big)^{ \delta(\vec{t})^2}\length{\vec{t}} + 
12 \eps^{1-2\delta(\vec{t})}\|\vec{t}\| \\
&\leq\frac{\eps^{1/4 - \delta(\vec{t})^2}}{40}  \cdot \length{\vec{t}} + 
\frac{9\eps^{(1-2\delta(\vec{t}))/4}}{10}\cdot\|\vec{t}\| \\
&\leq \eps^{(1-2\delta(\vec{t}))/4}\|\vec{t}\| 
\; ,
\end{align*}
as needed. In the next-to-last inequality we used the straightforward inequality $12 \eps^{3(1-2\dmax)/4} = 12 \exp(-3 \log(1/\eps)/(\pi s_\eps^2)) < 9/10$.
\end{proof}

\section{Reduction from \problem{CVPP} to \problem{CVPP} with a Promise}
\label{sec:cvppbdd}

In this section, we present our Kannan-style reductions from \CVPP{\gamma'}{} to \CVPP{\gamma}{\phi}.

\begin{theorem}
\label{thm:cvpptopromise}
Let $\gamma(n) \ge 1$, and let $\alpha(n) > 0$ be a non-increasing function. Then, a polynomial-time algorithm that solves $\CVPP{\gamma}{\phi}$, where $\phi(\lat) = \alpha(n)\cdot\lambda_1(\lat)$ for any lattice $\lat$ of rank $n$, implies a polynomial-time algorithm that solves $\CVPP{\gamma'}{}$, where
\[
\gamma'(n) := \max_{i \in \{0,\ldots,n\}}\Big(\gamma(n-i)^2 +  \frac{i}{4\alpha(n)^2}\Big)^{1/2}
\]
with the convention that $\gamma(0)=0$. In particular, if $\alpha(n) \leq 1/2$ and 
$\gamma(n) = \sqrt{n}/(2\alpha(n))$, we have $\gamma'(n) = \gamma(n)$.
\end{theorem}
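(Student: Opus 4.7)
My plan is a Kannan-style projection reduction. In preprocessing, compute an HKZ basis $(\vec{b}_1,\ldots,\vec{b}_n)$ of $\lat$ and, for each $i \in \{0,\ldots,n-1\}$, run the preprocessing phase of the assumed $\CVPP{\gamma}{\phi}$ algorithm on the rank-$(n-i)$ projected lattice $\lat_i := \pi_i(\lat)$, where $\pi_i$ denotes orthogonal projection onto $\{\vec{b}_1,\ldots,\vec{b}_i\}^\perp$; note $\lambda_1(\lat_i) = \|\gs{\vec{b}}_{i+1}\|$ by the HKZ property. On a query $\vec{t}$ and for each $i \in \{0,\ldots,n-1\}$, run the promise-CVPP solver on $(\lat_i,\pi_i(\vec{t}))$ to obtain $\vec{y}_i' \in \lat_i$, lift $\vec{y}_i'$ to some $\vec{w}_i \in \lat$ with $\pi_i(\vec{w}_i) = \vec{y}_i'$, and apply Babai's algorithm with the sub-basis $(\vec{b}_1,\ldots,\vec{b}_i)$ to the component of $\vec{t}-\vec{w}_i$ lying in $\spn(\vec{b}_1,\ldots,\vec{b}_i)$, producing a correction $\vec{u}_i$; set $\vec{y}_i := \vec{w}_i + \vec{u}_i$. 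For $i = n$ just set $\vec{y}_n := \alg{Babai}(\vec{t},\basis)$. Return whichever $\vec{y}_i$ is closest to $\vec{t}$.

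Setting $d := \dist(\vec{t},\lat)$, the orthogonal decomposition of $\vec{t}-\vec{y}_i$ into its $\{\vec{b}_1,\ldots,\vec{b}_i\}^\perp$-component $\pi_i(\vec{t})-\vec{y}_i'$ and its $\spn(\vec{b}_1,\ldots,\vec{b}_i)$-component (bounded by Lemma~\ref{lem:babai}), combined with the fact that projections do not increase distance, gives
\[
\|\vec{y}_i-\vec{t}\|^2 \;\leq\; \gamma(n-i)^2\, d^2 \;+\; \tfrac{1}{4}\sum_{j=1}^{i} \|\gs{\vec{b}}_j\|^2
\]
whenever the promise $\dist(\pi_i(\vec{t}),\lat_i) < \alpha(n-i)\lambda_1(\lat_i)$ is satisfied (and likewise with $\gamma(0)=0$ for $i=n$).

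The key step is to exhibit an $i$ for which the promise holds and the Gram-Schmidt tail is small. Let $i^*$ be the smallest index in $\{0,\ldots,n-1\}$ with $\|\gs{\vec{b}}_{i^*+1}\| > d/\alpha(n)$, or $i^*:=n$ if no such index exists. When $i^* < n$, every $j \leq i^*$ satisfies $\|\gs{\vec{b}}_j\| \leq d/\alpha(n)$, and since $\alpha$ is non-increasing,
\[
\dist(\pi_{i^*}(\vec{t}),\lat_{i^*}) \leq d < \alpha(n)\|\gs{\vec{b}}_{i^*+1}\| \leq \alpha(n-i^*)\lambda_1(\lat_{i^*}),
\]
so the promise holds and the displayed bound yields $\|\vec{y}_{i^*}-\vec{t}\|^2 \leq \bigl(\gamma(n-i^*)^2 + i^*/(4\alpha(n)^2)\bigr)d^2 \leq \gamma'(n)^2 d^2$. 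When $i^*=n$, all Gram-Schmidt vectors have norm at most $d/\alpha(n)$, so Babai gives $\|\vec{y}_n-\vec{t}\| \leq \tfrac{1}{2}\sqrt{n}\cdot d/\alpha(n) \leq \gamma'(n)d$, using the $i=n$ term of the max with $\gamma(0)=0$. The query algorithm does not need to know $d$ or $i^*$ since it simply takes the best over all $i$.

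For the ``in particular'' clause, when $\gamma(n)=\sqrt{n}/(2\alpha(n))$ and $\alpha$ is non-increasing, $(n-i)/(4\alpha(n-i)^2) \leq (n-i)/(4\alpha(n)^2)$, so every term in the maximum is at most $n/(4\alpha(n)^2) = \gamma(n)^2$, and the $i=0$ term attains this. The main subtlety is really just the choice of $i^*$ so that the Gram-Schmidt profile is controlled below it and the promise kicks in above it; everything else, including running HKZ in preprocessing, is routine.
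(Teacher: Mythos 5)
Your proof is correct and takes essentially the same Kannan-style projection-and-lift route as the paper: HKZ preprocessing, running the promise solver on the projected lattices $\pi_i(\lat)$, lifting and correcting with Babai, and returning the best over all $i$. The only cosmetic difference is the choice of the key index: the paper takes the minimal $i$ for which the promise $\dist(\pi_i(\vec{t}),\pi_i(\lat)) < \alpha(n-i)\|\gs{\vec{b}}_{i+1}\|$ is met and then derives the bound $\|\gs{\vec{b}}_{j+1}\|\le d/\alpha(n)$ for $j<i$ from the failure of the promise at smaller indices, whereas you define $i^*$ directly as the first index with $\|\gs{\vec{b}}_{i^*+1}\|>d/\alpha(n)$ and check that the promise then holds there; both parameterizations give the same control on the Gram--Schmidt tail and the same final bound $\gamma'(n)^2 d^2$.
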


The reduction of Theorem~\ref{thm:cvpptopromise} uses as preprocessing an HKZ basis and the preprocessing of the underlying \CVPP{\gamma}{\phi} algorithm on $n$ lattices of dimension $O(n)$, so it incurs a blowup of roughly $n$ in the size of the preprocessing. We now present a more elaborate reduction based on similar ideas that incurs almost no blowup in the size of preprocessing for an appropriate setting of parameters. 

\begin{theorem}
\label{thm:mastertheorem}
Let $0 < \alpha(n) \leq 1/2$ be a non-increasing function and $g(n) \geq 1$ be a non-decreasing function. Let $\gamma(n) = g(n)^{h(n)}/(2\alpha(n))$ where $0 \leq h(n) < n$ is a non-decreasing integer-valued function satisfying $g(n)^{h(n)-1} \leq \sqrt{n}$. Let $\gamma'(n) = g(n)\sqrt{n}/(2\alpha(n))$ and $\phi(\lat) = \alpha(n) \lambda_1(\lat)$ for any lattice $\lat$ of rank $n$.
Then, a polynomial-time algorithm that solves $\CVPP{\gamma}{\phi}$ implies a polynomial-time algorithm that solves $\CVPP{\gamma'}{}$ using as preprocessing only an HKZ basis of the input lattice and the preprocessing of the $\CVPP{\gamma}{\phi}$ algorithm for a collection of lattices $\{\lat_k\}$ with $\sum \dim \lat_k \leq n \cdot(h(n)+1)$, where $n$ is the dimension of the input lattice.
\end{theorem}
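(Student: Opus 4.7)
My plan is to refine the projection-and-lift strategy of Theorem~\ref{thm:cvpptopromise} by enumerating only $h(n)+1$ carefully chosen projection depths, rather than all $n+1$ possibilities. The extra factor $g(n)$ appearing in $\gamma'(n) = g(n)\sqrt{n}/(2\alpha(n))$, compared with the $\sqrt{n}/(2\alpha(n))$ bound of Theorem~\ref{thm:cvpptopromise}, precisely pays for the larger ``jumps'' between consecutive depths that this restriction forces, and the hypothesis $g(n)^{h(n)-1} \leq \sqrt{n}$ is exactly what guarantees that $h(n)+1$ depths are enough to cover the range of Gram--Schmidt lengths.

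\paragraph{Preprocessing.} First I compute an HKZ basis $\vec{b}_1,\ldots,\vec{b}_n$ of $\lat$ with Gram--Schmidt lengths $\|\gs{\vec{b}}_1\|,\ldots,\|\gs{\vec{b}}_n\|$, and write $\lat_i$ for the projection of $\lat$ onto $\{\vec{b}_1,\ldots,\vec{b}_{i-1}\}^\perp$, so that $\lat_1 = \lat$ and $\lambda_1(\lat_i) = \|\gs{\vec{b}}_i\|$. Next I greedily build a set of indices $I = \{0 = i_0 < i_1 < \cdots < i_{h(n)}\} \subseteq \{0,\ldots,n-1\}$: starting from $i_0 = 0$, I let $i_{k+1}$ be the largest index $i \leq n-1$ for which the Babai-lift error $\tfrac{1}{2}\sqrt{\sum_{i_k < j \leq i}\|\gs{\vec{b}}_j\|^2}$ accumulated over the block $(i_k,i]$ is at most $g(n)\,\alpha(n)\,\|\gs{\vec{b}}_{i_k+1}\|$. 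For each $i_k \in I$ I invoke the $\CVPP{\gamma}{\phi}$ preprocessing on the sublattice $\lat_{i_k+1}$, of dimension $n - i_k \leq n$. The total preprocessed dimension is at most $\sum_{k=0}^{h(n)}(n-i_k) \leq n(h(n)+1)$, as required.

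\paragraph{Query and analysis.} Given a target $\vec{t}$ at distance $d := \dist(\vec{t},\lat)$, I locate the smallest $i_k \in I$ for which the projection of $\vec{t}$ onto $\spn(\lat_{i_k+1})$ lies within the promise distance $\alpha(n)\|\gs{\vec{b}}_{i_k+1}\| = \alpha(n)\lambda_1(\lat_{i_k+1})$, invoke the $\CVPP{\gamma}{\phi}$ algorithm at that level, and lift the returned vector back to $\lat$ by Babai rounding (Section~\ref{sec:babai}) in the top $i_k$ coordinates of the HKZ basis, exactly as in the proof of Theorem~\ref{thm:cvpptopromise}. The squared error of the final output splits orthogonally into a projected CVPP contribution $\gamma(n-i_k)^2 d^2$ and a Babai-lift contribution $\tfrac{1}{4}\sum_{j \leq i_k}\|\gs{\vec{b}}_j\|^2$; by the greedy choice of $I$ the latter is at most $O(g(n)^2 i_k\, d^2/\alpha(n)^2)$, and substituting $\gamma(n) = g(n)^{h(n)}/(2\alpha(n))$ together with $g(n)^{h(n)-1} \leq \sqrt{n}$ bundles everything into at most $\gamma'(n)^2 d^2 = (g(n)\sqrt{n}/(2\alpha(n)))^2 d^2$.

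\paragraph{Main obstacle.} The crux is verifying that the greedy breakpoint construction always terminates within $h(n)+1$ indices. This reduces to a geometric pigeonhole argument using classical HKZ bounds on the dynamic range of the $\|\gs{\vec{b}}_j\|$, which are constrained tightly enough that a bucketization at ratio $g(n)$ subdivides the range into at most $h(n)+1$ buckets precisely thanks to $g(n)^{h(n)-1} \leq \sqrt{n}$. A secondary challenge is matching the approximation constants to exactly $g(n)\sqrt{n}/(2\alpha(n))$ rather than a polylogarithmic overhead; this requires careful bookkeeping of how the Babai-lift error compounds with the projected CVPP error at each level, and I expect it to be the main technical work of the proof.
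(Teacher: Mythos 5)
Your proposal takes a genuinely different route from the paper, but unfortunately it has a gap that I do not believe can be closed.

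Your plan is to preprocess only $h(n)+1$ full-depth projections $\biglat_{i_k+1}$ for a greedily chosen set $I = \{i_0,\ldots,i_{h(n)}\}$, and at query time to select the smallest $i_k \in I$ at which the promise is satisfied. There are two problems. First, there is no reason the greedy partition terminates within $h(n)+1$ blocks. You invoke ``classical HKZ bounds on the dynamic range of the $\|\gs{\vec{b}}_j\|$,'' but no such bound exists: an HKZ basis controls $\|\gs{\vec{b}}_i\| = \lambda_1(\biglat_i)$ and the Gram--Schmidt coefficients, but the ratios $\|\gs{\vec{b}}_{j+1}\|/\|\gs{\vec{b}}_j\|$ can be arbitrarily large, so the Babai error accumulated over a block grows uncontrollably and each greedy step may advance by only a single index. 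Indeed, when $g(n)\alpha(n) < 1/2$ (entirely compatible with the hypotheses, e.g.\ $g(n)=1$ and small $\alpha$) the very first greedy step has no admissible $i$, so your construction is not even well defined. Second, and more decisively, even if $I$ had exactly $h(n)+1$ elements, a target far from the lattice may have the promise violated at \emph{every} level in $I$, and your query algorithm has no fallback. This is clearest in the case $h(n) = 0$, $g(n)=1$ (the regime of Corollary~\ref{cor:cvpptobdd}): then $I = \{0\}$, you preprocess only the full lattice, and any target with $\dist(\vec{t},\lat) \ge \alpha(n)\lambda_1(\lat)$ leaves you with nothing to do.

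The paper's construction is structurally different in a way that avoids both problems. It defines indices $n = i_0 > i_1 > \cdots > i_\ell = 0$ by a multiplicative ratio criterion on $\max_{j\le i_k}\|\gs{\vec{b}}_j\|$, and $\ell$ can be as large as $n$ --- the number of levels is \emph{not} bounded by $h(n)+1$. The saving instead comes from defining each preprocessed lattice as $\lat_k = \pi_k(\lat(\vec{b}_{i_k+1},\ldots,\vec{b}_{i_{\max(k-r,0)}}))$ with $r = h(n)+1$, so each $\lat_k$ spans only $r$ consecutive ``blocks'' of the basis; each basis vector then appears in at most $r$ of the $\lat_k$, which is exactly why $\sum_k \dim \lat_k \le n(h(n)+1)$. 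Because $\lat_k$ is not the full projected lattice $\biglat_k$, a recursive sub-procedure $\alg{S}$ is needed to stitch together answers across levels $k, k-r, k-2r, \ldots$, and the condition $g(n)^{h(n)-1} \le \sqrt{n}$ is used to show (via~\eqref{eq:distoftarget}) that at each recursion step the closest vector lies in the restricted coset $\lat_k + \vec{y}$, so the recursion is sound. This mechanism --- many levels with narrow windows chained recursively, rather than few full-depth levels --- is the key idea missing from your proposal.
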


Of particular interest to us is the special case $g(n) = 1$ and $h(n) = 0$ in Theorem~\ref{thm:mastertheorem}, which we highlight in the following corollary. With these parameters, the reduction achieves $\sum \dim \lat_k = n$, which is intuitively optimal. 

\begin{corollary}
\label{cor:cvpptobdd}
Let $0<\alpha(n)<1/2$ be a non-increasing function and define $\gamma(n) = \sqrt{n}/(2\alpha(n))$. 
Then, there is a polynomial-time reduction from $\CVPP{\gamma}{}$ to $\BDD{\alpha}$ that uses as preprocessing an HKZ basis of the input lattice and the preprocessing of the $\BDD{\alpha}$ algorithm for a collection of lattices $\{\lat_k\}$ with $\sum \dim \lat_k = n$, where $n$ is the dimension of the input lattice.
\end{corollary}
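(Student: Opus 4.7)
The plan is to derive this corollary as a straightforward specialization of Theorem~\ref{thm:mastertheorem}. The guiding observation is that the parameter choices $g(n) \equiv 1$ and $h(n) \equiv 0$ collapse the theorem to exactly the regime we want: the inner approximation factor becomes $\gamma(n) = g(n)^{h(n)}/(2\alpha(n)) = 1/(2\alpha(n))$, the outer factor becomes $\gamma'(n) = g(n)\sqrt{n}/(2\alpha(n)) = \sqrt{n}/(2\alpha(n))$, and the dimension budget shrinks to $\sum \dim \lat_k \le n \cdot (h(n)+1) = n$.

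Before invoking the theorem I would verify the hypotheses. Both $g(n)=1$ and $h(n)=0$ are trivially non-decreasing, $h$ is integer-valued with $0 \le h(n) < n$, and $g(n)^{h(n)-1} = 1 \le \sqrt{n}$. The function $\alpha(n)$ is non-increasing and lies in $(0,1/2)$ as required. Thus Theorem~\ref{thm:mastertheorem} supplies a polynomial-time reduction from $\CVPP{\gamma'}{} = \CVPP{\sqrt{n}/(2\alpha)}{}$ to $\CVPP{1/(2\alpha)}{\phi}$ where $\phi(\lat) = \alpha(n)\lambda_1(\lat)$, using only an HKZ basis of the input lattice and the preprocessing of the $\CVPP{1/(2\alpha)}{\phi}$ algorithm on a family of lattices $\{\lat_k\}$ with $\sum \dim \lat_k \le n$.

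The one remaining step (and essentially the only content beyond citing the theorem) is to identify $\CVPP{1/(2\alpha)}{\phi}$ with $\BDD{\alpha}$. By definition $\BDD{\alpha} = \CVPP{1}{\phi}$ with the same $\phi$, so I must argue that on inputs with $\dist(\vec{t},\lat) < \alpha(n)\lambda_1(\lat)$ the approximation factor $1/(2\alpha(n)) > 1$ is no help: since $\alpha(n) < 1/2$, any two distinct lattice points near $\vec{t}$ would be separated by at least $\lambda_1(\lat) > 2\alpha(n)\lambda_1(\lat) > 2\dist(\vec{t},\lat)$, so there is a unique closest lattice vector and any approximate solution within the promised radius must in fact be that unique closest vector. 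Hence a $\BDD{\alpha}$ oracle solves $\CVPP{1/(2\alpha)}{\phi}$ verbatim, with identical preprocessing.

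There is no real obstacle here — the entire content of the corollary is in Theorem~\ref{thm:mastertheorem}; the only subtlety is the trivial unique-decoding argument that lets us replace the nominally weaker $\CVPP{1/(2\alpha)}{\phi}$ oracle by a $\BDD{\alpha}$ oracle without any loss in parameters. Substituting the bound $\sum \dim \lat_k \le n$ from the theorem then gives exactly the claim that equality $\sum \dim \lat_k = n$ can be achieved (it is in fact already forced from below by the fact that the reduction must, in the worst case, encode the full target, but even the upper bound of $n$ is sufficient for the stated corollary).
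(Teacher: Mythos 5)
Your proof is correct and follows exactly the paper's intended route: the corollary is the specialization of Theorem~\ref{thm:mastertheorem} to $g(n)\equiv 1$, $h(n)\equiv 0$, together with the trivial observation that a $\BDD{\alpha} = \CVPP{1}{\phi}$ oracle in particular solves $\CVPP{1/(2\alpha)}{\phi}$ (your uniqueness argument additionally shows the two are in fact equivalent, but only the easy direction is needed). The one loose spot is the final remark that the upper bound $\sum \dim\lat_k \le n$ "is sufficient" for the stated equality: to literally get $\sum\dim\lat_k = n$ you should observe that with $r=h(n)+1=1$ the construction in the proof of Theorem~\ref{thm:mastertheorem} sets $\lat_k = \pi_k(\lat(\vec{b}_{i_k+1},\dots,\vec{b}_{i_{k-1}}))$, so $\dim\lat_k = i_{k-1}-i_k$ and the sum telescopes to exactly $i_0 - i_\ell = n$.
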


Another interesting special case, obtained by setting $g(n) = 2$ and $h(n) = \floor{(\log_2 n)/2}+1$, gives a reduction that matches the approximation factor $\gamma$ achieved by Theorem~\ref{thm:cvpptopromise} up to a factor of $2$ 
but incurs only a logarithmic blow-up in preprocessing, $\sum \dim \lat_k \leq O(n \log n)$ (as opposed to linear). Finally, setting $g(n) = n^{1/(2m)}$ and $h(n)=m+1$ for any integer $m\geq 1$ gives a reduction with $\gamma'(n) = \gamma(n) = n^{1/2+ 1/(2m)}/(2\alpha(n))$ that achieves $O(m)$ blow-up, $\sum \dim \lat_k \leq (m+2)\cdot n$.

Lastly, we show that similar ideas can be made to work without preprocessing with worse parameters. 
\begin{proposition}
\label{prop:cvptocvpreduction}
Let $\gamma(n) \geq g(n)\sqrt{n+3}/2 $ where $g(n) \geq 1$ is a non-decreasing function. Let $\phi(\lat) = \lambda_1(\lat)$ for any lattice. Then, there is a polynomial-time reduction from $\CVP{\gamma}{}$ to $\CVP{g}{\phi}$.
\end{proposition}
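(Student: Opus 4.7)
My plan is to mimic the Kannan-style projection argument of Theorem~\ref{thm:cvpptopromise} in the non-preprocessing setting, using the $\CVP{g}{\phi}$ oracle itself to construct, on the fly, a basis that plays the role of the HKZ basis used there.

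By Theorem~\ref{thm:svptocvp}, the $\CVP{g}{\phi}$ oracle (with $\phi(\lat) = \lambda_1(\lat)$) yields an $\SVP{g}$ oracle. Applying this oracle iteratively to $\lat$ and then to the successive projected lattices $\pi_{i}(\lat)$, where $\pi_i$ denotes orthogonal projection onto $\{\vec{b}_1,\dots,\vec{b}_i\}^\perp$, and extending each approximate shortest vector to a basis of the enclosing lattice via standard Hermite Normal Form manipulations (extracting a primitive representative when necessary), one obtains in polynomial time a ``$g$-HKZ'' basis $(\vec{b}_1,\dots,\vec{b}_n)$ of $\lat$ with $\|\gs{\vec{b}}_i\| \leq g\cdot\lambda_1(\pi_{i-1}(\lat))$ for every $i$. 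With this basis, the algorithm is: for each $i \in \{0,1,\dots,n-1\}$, query the oracle on $(\pi_i(\lat),\pi_i(\vec{t}))$ to obtain some $\vec{z}_i \in \pi_i(\lat)$ (accepting arbitrary output when the distance condition fails), then lift to $\vec{y}_i \in \lat$ with $\pi_i(\vec{y}_i)=\vec{z}_i$ by running \alg{Babai} on the sublattice basis $(\vec{b}_1,\dots,\vec{b}_i)$ against the residual $\vec{t}-\vec{y}^0$ for any preimage $\vec{y}^0$; finally, return the $\vec{y}_i$ minimizing $\|\vec{t}-\vec{y}_i\|$.

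For the analysis, let $d = \dist(\vec{t},\lat)$ and let $i^*$ be the smallest index with $\dist(\pi_{i^*}(\vec{t}),\pi_{i^*}(\lat)) \leq \lambda_1(\pi_{i^*}(\lat))$; such $i^*$ exists with $i^* \leq n-1$ because projection onto any rank-$1$ sublattice yields a point within distance $\lambda_1/2$. For each $j \leq i^*$, minimality of $i^*$ gives $\lambda_1(\pi_{j-1}(\lat)) < \dist(\pi_{j-1}(\vec{t}),\pi_{j-1}(\lat)) \leq d$, hence $\|\gs{\vec{b}}_j\| \leq g\,\lambda_1(\pi_{j-1}(\lat)) < gd$. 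The oracle call at $i=i^*$ is thus valid and returns $\vec{z}_{i^*}$ with $\|\pi_{i^*}(\vec{t})-\vec{z}_{i^*}\| \leq g\cdot \dist(\pi_{i^*}(\vec{t}),\pi_{i^*}(\lat)) \leq gd$. Since this error lives orthogonally to $\spn(\vec{b}_1,\dots,\vec{b}_{i^*})$ while the Babai error lives inside it, Pythagoras together with Lemma~\ref{lem:babai} give
\[
\|\vec{t}-\vec{y}_{i^*}\|^2 \leq \|\pi_{i^*}(\vec{t})-\vec{z}_{i^*}\|^2 + \tfrac{1}{4}\sum_{j\leq i^*}\|\gs{\vec{b}}_j\|^2 \leq g^2 d^2\Bigl(1+\tfrac{i^*}{4}\Bigr) \leq \tfrac{g^2(n+3)}{4}\,d^2,
\]
so the returned vector satisfies $\|\vec{t}-\vec{y}\| \leq gd\sqrt{n+3}/2 \leq \gamma d$, as required.

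The main obstacle I anticipate is the oracle-based construction of the $g$-HKZ basis: when $g \geq 2$, the vector returned by the $\SVP{g}$ reduction may be a non-primitive multiple of a shorter lattice vector, so one has to extract the primitive part and extend it to a full basis of $\lat$ cleanly, all in polynomial time. This is handled by standard HNF bookkeeping but requires care to keep every intermediate basis integer and of controlled bit length, particularly when the procedure is iterated across the $n$ levels of projected lattices.
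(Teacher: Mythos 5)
Your proposal is correct and takes essentially the same Kannan-style route as the paper: compute a $g$-HKZ basis via the oracle (using Theorem~\ref{thm:svptocvp}), query the oracle on each projection $\pi_i(\lat)$, lift via \alg{Babai}, and argue at the first index where the projected target falls within the oracle's promise. The only substantive difference is your stopping rule $\dist(\pi_{i^*}(\vec t),\pi_{i^*}(\lat))\leq\lambda_1(\pi_{i^*}(\lat))$ versus the paper's $\dist(\pi_i(\vec t),\biglat_i)<\|\gs{\vec b}_{i+1}\|/g(n-i)$ (the paper also allows $i=n$ as a catch-all, while you note $i^*\leq n-1$ always exists); both yield the same bound $\|\gs{\vec b}_j\|\leq g\dist(\vec t,\lat)$ for the relevant indices and the same final $\tfrac{n+3}{4}g^2\dist(\vec t,\lat)^2$. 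One small correction: the promise in $\CVP{g}{\phi}$ is the strict inequality $\dist(\vec t,\lat)<\phi(\lat)$, so your stopping condition should be $\dist(\pi_{i^*}(\vec t),\pi_{i^*}(\lat))<\lambda_1(\pi_{i^*}(\lat))$ rather than $\leq$; existence at $i^*\leq n-1$ still holds since a rank-one projection puts the target strictly within $\lambda_1/2$, and the bound for $j\leq i^*$ then becomes $\|\gs{\vec b}_j\|\leq gd$ (non-strict), which is all the argument needs.
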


\subsection{Proof of Theorem~\ref{thm:cvpptopromise}}

\begin{proof}[Proof of Theorem~\ref{thm:cvpptopromise}]
Suppose that we have an efficient algorithm that solves $\CVPP{\gamma}{\phi}$ with preprocessing algorithm $\alg{P}$ and query algorithm $\alg{Q}$. We assume without loss of generality that $\gamma(1) = 1$. We construct an algorithm that solves $\CVPP{\gamma'}{}$ as follows.

On input $\lat \subset \R^n$, the preprocessing algorithm first computes an HKZ basis $\basis = (\vec{b}_1, \ldots, \vec{b}_n)$ of $\lat$. For $i=0,\ldots, n$, let $\pi_i = \pi_{\{\vec{b}_1, \ldots, \vec{b}_i \}^\perp}$ and $\biglat_i = \pi_i(\lat)$. 
Then, the preprocessing algorithm returns as its advice $\basis$ and the advice strings $A_i = \alg{P}(\biglat_i)$ for all $i$.

On input $\vec{t} \in \R^n$, the query algorithm does the following for each $i=0,\ldots,n$.
It computes $\vec{x}_i = \alg{Q}(A_i, \pi_i(\vec{t}))\in \biglat_i$. Write $\vec{x}_i = \sum_{j=i+1}^n a_{i,j} \pi_i(\vec{b}_j)$
for some coefficients $a_{i,j}\in \Z$ and let $\vec{y}_i = \sum_{j=i+1}^n a_{i,j} \vec{b}_j\in \lat$ be a ``lift'' of $\vec{x}_i$. Let $\smalllat_{i} = \lat(\vec{b}_1,\ldots, \vec{b}_i) \subseteq \lat$ and 
\[
\vec{z}_{i} = \alg{Babai}(\pi_{\spn(\smalllat_{i})}(\vec{t}-\vec{y}_i), (\vec{b}_1,\ldots, \vec{b}_i))\in \lat.
\] 
The query algorithm then returns the vector nearest to the target $\vec{t}$ 
among the vectors $\vec{y}_{i} + \vec{z}_{i} \in \lat$.
In other words, for each $i=0,\ldots,n$, we use $\alg{Babai}$ to compute a close point to $\vec{t}$ in $\smalllat_i + \vec{y}_i = \{ \vec{w} \in \lat : \pi_i(\vec{w}) = \vec{x}_i \} \subseteq \lat$, and output the closest.

Clearly, the advice from preprocessing has polynomial length and the query algorithm runs in polynomial time.
Let $i \in \{0,\ldots,n-1\}$ be minimal such that $\dist(\pi_i(\vec{t}), \biglat_i) < \phi(\biglat_i) = \alpha(n-i) \cdot \length{\gs{\vec{b}}_{i+1}}$, where $(\gs{\vec{b}}_1, \ldots, \gs{\vec{b}}_n)$ is the Gram-Schmidt orthogonalization of $\basis$. If no such $i$ exists, we take $i$ to be $n$. We will complete the proof by showing that $\vec{y}_{i} + \vec{z}_{i}$ is close to $\vec{t}$.
By separating the norm into its projection on the two orthogonal subspaces,
\begin{align*}
\length{\vec{y}_i + \vec{z}_i - \vec{t}}^2 &= \length{\pi_i(\vec{y}_i - \vec{t})}^2 + \length{\pi_{\spn(\smalllat_{i})}(\vec{y}_i + \vec{z}_i - \vec{t})}^2\\
&= \length{\vec{x}_i - \pi_i(\vec{t})}^2 + \length{\vec{z}_{i} - \pi_{\spn(\smalllat_i)}(\vec{t}-\vec{y}_i)}^2 \; .
\end{align*}
For the first term, using the definition of $\CVPP{\gamma}{\phi}$ and our choice of $i$, we have that 
\[
\length{\vec{x}_i - \pi_i(\vec{t})}^2 \leq 
\gamma(n-i)^2 \dist(\pi_i(\vec{t}), \biglat_i)^2 \leq 
\gamma(n-i)^2 \dist(\vec{t}, \lat)^2
\;. \]
For the second term, by Lemma~\ref{lem:babai} and again by our choice of $i$, 
\begin{align*}
\length{\vec{z}_{i} - \pi_{\spn(\smalllat_i)}(\vec{t}-\vec{y}_{i})}^2 
&\leq \frac{i}{4} \max_{j<i} \length{\gs{\vec{b}}_{j+1}}^2  \\
&\leq \frac{i}{4} \max_{j<i} \frac{1}{\alpha(n-j)^2} \dist(\pi_j(\vec{t}), \biglat_j)^2  \\
&\leq \frac{i}{4\alpha(n)^2} \cdot \dist(\vec{t}, \lat)^2  
\; .
\end{align*}
The theorem follows by combining the two inequalities. 
\end{proof}

\subsection{Proof of Theorem~\ref{thm:mastertheorem}}

\begin{proof}[Proof of Theorem~\ref{thm:mastertheorem}]
Suppose that we have an algorithm that solves $\CVPP{\gamma}{\phi}$ in polynomial time with preprocessing algorithm $\alg{P}$ and query algorithm $\alg{Q}$. We construct an algorithm that solves $\CVPP{\gamma'}{}$ as follows.

On input $\lat \subset \R^n$ a lattice of rank $n$, the preprocessing algorithm first computes an HKZ basis $\basis = (\vec{b}_1, \ldots, \vec{b}_n)$ of $\lat$ with Gram-Schmidt orthogonalization $(\gs{\vec{b}}_1, \ldots, \gs{\vec{b}}_n)$. 
Fix $r = h(n) +1 $ and $c = g(n)$. We define a series of indices $n = i_0 > i_1 > i_2 > \cdots > i_\ell = 0$ in the following recursive way: for each $k$ such that $i_k > 0$, define $0 \le i_{k+1} < i_{k}$ to be minimal such that 
\[ 
\length{\gs{\vec{b}}_{i_{k+1}+1}} \geq \max_{j \le i_k} \length{\gs{\vec{b}}_{j}}/c \; ,
\]
or equivalently, the largest such that
\begin{equation}
\label{eq:jumpsofslicing}
\max_{j \le i_{k+1}} \length{\gs{\vec{b}}_{j}} < \max_{j \le i_k} \length{\gs{\vec{b}}_{j}}/c \; .
\end{equation}
Notice that we have 
\begin{equation}
\label{eq:interleaving}
\max_{j \le i_{k}} \length{\gs{\vec{b}}_{j}} / c \le
\length{\gs{\vec{b}}_{i_{k+1}+1}} \le
\max_{j \le i_{k}} \length{\gs{\vec{b}}_{j}} \; .
\end{equation}
Let $\pi_k = \pi_{\{\vec{b}_1, \ldots, \vec{b}_{i_k} \}^\perp}$ and $\lat_k = \pi_k(\lat(\vec{b}_{i_k+1},\ldots,\vec{b}_{i_{\max(k-r,0)}}))$.
Then, the preprocessing algorithm returns as its advice $\basis$ and the advice strings $A_k = \alg{P}(\lat_{k})$ for all $k$. Notice that each vector $\vec{b}_j$ is included in the definition of $\lat_k$ for at most $r = h(n) + 1$ different values of $k$. As a result, $\sum \dim \lat_k \leq n\cdot (h(n) + 1)$ as claimed.

Let $\biglat_k = \pi_k(\lat)$. Before describing the query algorithm, we define a key recursive sub-procedure $\alg{S}(\vec{t}, k)$ that will be used to find solutions to $\CVP{\gamma}{\phi}(\pi_k(\vec{t}), \biglat_k)$. On input $\vec{t}$ and $k$, if $k \leq r$, then $\alg{S}$ simply outputs $\alg{Q}(A_k, \pi_k(\vec{t}))$. Otherwise, it calls itself recursively, setting $\vec{x} = \alg{S}(\vec{t}, k-r) \in \biglat_{k-r}$. Write $\vec{x} = \sum_{j=i_{k-r}+1}^n a_j\pi_{k-r}(\vec{b}_j)$, and let $\vec{y} = \sum_{j=i_{k-r}+1}^{n} a_{j} \pi_k(\vec{b}_j)\in \biglat_k$ be a ``lift'' of $\vec{x}$. Then $\alg{S}$ outputs $\vec{z} = \alg{Q}(A_k, \pi_k(\vec{t}) - \vec{y}) + \vec{y}$.
In other words, $\alg{S}$ uses $\alg{Q}$ to find a close point to $\pi_k(\vec{t})$ in $\lat_k + \vec{y} = \{ \vec{w} \in \biglat_k : \pi_{k-r}(\vec{w}) = \vec{x} \} \subseteq \biglat_k$ and outputs it.

On input $\vec{t} \in \R^n$, the query algorithm does the following for each $k$.
It first computes $\vec{x}_k = \alg{S}(\vec{t}, k)\in \biglat_k$. 
Let $\vec{y}_k \in \lat$ be a ``lift'' of $\vec{x}_k$. Let $\smalllat_{k} = \lat(\vec{b}_1,\ldots, \vec{b}_{i_k}) \subseteq \lat$ and 
\[
\vec{z}_{k} = \alg{Babai}(\pi_{\spn(\smalllat_{k})}(\vec{t}-\vec{y}_{k}), \smalllat_k)\in \lat.
\] 
The query algorithm then returns the vector nearest to the target $\vec{t}$ 
among the vectors $\vec{y}_{k} + \vec{z}_{k} \in \lat$.
In other words, for each $k$, we use $\alg{Babai}$ to compute a close point to $\vec{t}$ in $\smalllat_k + \vec{y}_k = \{ \vec{w} \in \lat : \pi_k(\vec{w}) = \vec{x}_k \} \subseteq \lat$, and output the closest.
It is clear that the algorithm runs in polynomial time.

First, assume that $\alg{S}(\vec{t}, k)$ returns a valid solution to $\CVP{\gamma}{\phi}(\pi_k(\vec{t}), \biglat_k)$. Then, the proof of correctness proceeds nearly identically to that of Theorem~\ref{thm:cvpptopromise}. In particular, let $k > 0$ be maximal such that $\dist(\vec{t}, \lat) < \alpha(n)\length{\gs{\vec{b}}_{i_k+1}}$. If no such $k$ exists, we take $k = 0$. 
As in the previous proof,
\begin{align*}
\length{\vec{y}_k + \vec{z}_k - \vec{t}}^2 
&=  \length{\pi_k(\vec{y}_k - \vec{t})}^2 + \length{\vec{z}_{k} - \pi_{\spn(\smalllat_k)}(\vec{t}-\vec{y}_k)}^2
\; .
\end{align*}
For the first term, 
since $\dist(\pi_k(\vec{t}), \biglat_k) \leq \dist(\vec{t}, \lat) < \alpha(n)\length{\gs{\vec{b}}_{i_k+1}} \leq \phi(\biglat_k)$, we have
\begin{align*}
\length{\vec{x}_k - \pi_k(\vec{t})}^2 
&\leq \gamma(n-i_k)^2 \dist(\pi_k(\vec{t}), \biglat_k)^2 \\
&\leq \gamma'(n-i_k)^2 \dist(\pi_k(\vec{t}), \biglat_k)^2
\leq c^2\frac{n-i_k}{4\alpha(n)^2} \dist(\vec{t}, \lat)^2
\; .
\end{align*}
For the second term, by Lemma~\ref{lem:babai}, Eq.~\eqref{eq:interleaving}, and our choice of $k$,
\[ \length{\vec{z}_{k} - \pi_{\spn(\smalllat_k)}(\vec{t}-\vec{y}_{k})}^2 
\leq \frac{i_k}{4} \max_{j\leq i_k} \length{\gs{\vec{b}}_{j}}^2  \\
\leq c^2 \frac{i_k}{4} \length{\gs{\vec{b}}_{i_{k+1}+1}}^2  \\
\leq c^2\frac{i_k}{4\alpha(n)^2} \dist(\vec{t}, \lat)^2  
\; .
\]
Combining the two inequalities, we get $\length{\vec{y}_k + \vec{z}_k - \vec{t}} \leq  \gamma'(n) \dist(\vec{t}, \lat)$.

It remains to show that the sub-procedure $\alg{S}(\vec{t}, k)$ returns a valid solution to $\CVP{\gamma}{\phi}(\pi_k(\vec{t}), \biglat_k)$. We prove this by induction. 
If $k  \leq r$, the claim follows immediately from the fact that $\lat_k = \biglat_k$. Otherwise, we claim that 
$\lat_{k} + \vec{y}$ contains the closest vector to $\pi_k(\vec{t})$ in $\biglat_k$. 
This claim immediately implies the correctness of $\alg{S}$ using the correctness of $\alg{Q}$ and the fact that $\gamma(\dim \lat_k) \leq \gamma(\dim \biglat_k)$ and $\phi(\lat_k) \geq \phi(\biglat_k)$. 
To prove the claim, first notice from Eqs.~\eqref{eq:jumpsofslicing} and~\eqref{eq:interleaving} that $\length{\gs{\vec{b}}_{i_k + 1}} \leq \length{\gs{\vec{b}}_{i_{k-r}+1}}/c^{r-1}$, and so 
\begin{align}\label{eq:distoftarget}
\dist(\pi_k(\vec{t}), \biglat_k) < 
\phi(\biglat_k) = 
\alpha(n-i_k)\length{\gs{\vec{b}}_{i_k+1}} \leq 
\frac{\alpha(n-i_k)}{c^{r-1}}\cdot \length{\gs{\vec{b}}_{i_{k-r}+1}} =
\frac{\alpha(n-i_k)}{c^{r-1}}\cdot \lambda_1(\biglat_{k-r})
\;. 
\end{align}
As a result, $\dist(\pi_{k-r}(\vec{t}), \biglat_{k-r}) \leq \dist(\pi_k(\vec{t}), \biglat_k) < \phi(\biglat_{k-r})$, 
and so by the induction hypothesis and Eq.~\eqref{eq:distoftarget},
\begin{align*}
\length{\vec{x}- \pi_{k-r}(\vec{t})} &\leq \gamma(n-i_{k-r})\dist(\pi_{k-r}(\vec{t}), \biglat_{k-r}) \\
&\leq \frac{c^{r-1}}{2\alpha(n-i_k)} \dist(\pi_{k-r}(\vec{t}), \biglat_{k-r})\\
&\leq \frac{c^{r-1}}{2\alpha(n-i_k)} \dist(\pi_{k}(\vec{t}), \biglat_{k})\\
&< \frac{\lambda_1(\biglat_{k-r})}{2}
\; .
\end{align*}
So, $\vec{x}$ is the unique closest vector in $\biglat_{k-r}$ to $\pi_{k-r}(\vec{t})$. 
Finally, by Eq.~\eqref{eq:distoftarget}, $\dist(\pi_k(\vec{t}), \biglat_k) < \lambda_1(\biglat_{k-r})/2$, yet all vectors $\vec{y}' \in \biglat_k \setminus (\lat_{k}+ \vec{y})$ must be at distance at least 
\[
\length{ \pi_{k-r}(\vec{y}') - \pi_{k-r}(\vec{t})} \geq \lambda_1(\biglat_{k-r}) - \dist(\pi_{k-r}(\vec{t}), \biglat_{k-r}) > \frac{\lambda_1(\biglat_{k-r})}{2}
\]
from $\pi_k(\vec{t})$ and hence cannot be closest to $\pi_k(\vec{t})$ in $\biglat_k$.
\end{proof}

\subsection{Proof of Proposition~\ref{prop:cvptocvpreduction}}

\begin{proof}[Proof of Proposition~\ref{prop:cvptocvpreduction}]
Let $\alg{A}$ be an algorithm solving $\CVP{g}{\phi}$. We say that a basis $\basis = (\vec{b}_1,\ldots, \vec{b}_n)$ of $\lat$ is a $g$-HKZ basis if $\length{\vec{b}_1} \leq g(n) \lambda_1(\lat)$ and $(\pi_{\{ \vec{b}_1\}^\perp}(\vec{b}_2), \ldots, \pi_{\{ \vec{b}_1\}^\perp}(\vec{b}_n))$ is a $g$-HKZ basis. Note that Theorem~\ref{thm:svptocvp} immediately implies that $\alg{A}$ can be used to compute a $g$-HKZ basis in polynomial time.

On input $\lat$ and target vector $\vec{t}$, first use $\alg{A}$ to compute a $g$-HKZ basis, $\basis = (\vec{b}_1,\ldots, \vec{b}_n)$ of $\lat$. Then, as in the proof of Theorem~\ref{thm:cvpptopromise}, for $i=0,\ldots, n$, let $\pi_i = \pi_{\{\vec{b}_1, \ldots, \vec{b}_i \}^\perp}$ and $\biglat_i = \pi_i(\lat)$. Compute $\vec{x}_i = \alg{A}(\pi_i(\vec{t}), \biglat_i)\in \biglat_i$ and lift it to a vector $\vec{y}_i \in \lat$. Similarly, let $\smalllat_{i} = \lat(\vec{b}_1,\ldots, \vec{b}_i) \subseteq \lat$ and 
\[
\vec{z}_{i} = \alg{Babai}(\pi_{\spn(\smalllat_{i})}(\vec{t}-\vec{y}_i), (\vec{b}_1,\ldots, \vec{b}_i))\in \lat.
\] 
Finally, return the vector nearest to the target $\vec{t}$ 
among the vectors $\vec{y}_{i} + \vec{z}_{i} \in \lat$. 

Let $i \in \{0,\ldots,n-1\}$ be minimal such that $\dist(\pi_i(\vec{t}), \biglat_i) < \length{\gs{\vec{b}}_{i+1}}/g(n-i)$. 
If no such $i$ exists, we take $i=n$. 
As in the proof of Theorem~\ref{thm:cvpptopromise}, 
\begin{align*}
\length{\vec{y}_i + \vec{z}_i - \vec{t}}^2 
&= \length{\vec{x}_i - \pi_i(\vec{t})}^2 + \length{\vec{z}_{i} - \pi_{\spn(\smalllat_i)}(\vec{t}-\vec{y}_i)}^2
\; .
\end{align*}
By our choice of $i$ and the definition of a $g$-HKZ basis, $\dist(\pi_i(\vec{t}), \biglat_i) < \lambda_1(\biglat_i) = \phi(\biglat_i)$, so $A$ is guaranteed to output $\vec{x}_i$ satisfying
\[
\length{\vec{x}_i - \pi_i(\vec{t})}^2 \leq g(n-i)^2 \dist(\pi_i(\vec{t}), \biglat_i)^2 \leq g(n-i)^2 \dist(\vec{t}, \lat)^2
\; ,
\]
where we define $g(0) = 0$. By Lemma~\ref{lem:babai},
\[ 
\length{\vec{z}_{i} - \pi_{\spn(\smalllat_i)}(\vec{t}-\vec{y}_i)}^2 \leq \frac{i}{4} \max_{j < i} \length{\gs{\vec{b}}_{j+1}}^2 \leq \frac{i}{4}  \max_{j<i} g(n-j)^2\dist(\pi_j(\vec{t}), \biglat_j)^2 \leq \frac{i}{4} g(n)^2\dist(\vec{t}, \lat)^2
\; .
\]
Combining the two inequalities gives
\begin{align*}
\length{\vec{y}_i + \vec{z}_i - \vec{t}}^2 &\leq g(n-i)^2 \dist(\vec{t}, \lat)^2 + \frac{i}{4} g(n)^2\dist(\vec{t}, \lat)^2 \\
&\leq \frac{n+3}{4}\cdot g(n)^2  \dist(\vec{t}, \lat)^2 \\
&= \gamma(n)^2 \dist(\vec{t}, \lat)^2
\; 
\end{align*}
as claimed.
\end{proof}

\section{Reduction to bounded distance using sparsification}
\label{sec:sparsification}

In this section we prove Theorem~\ref{thm:CVPreduction}, our second reduction to the bounded distance case. 

\begin{theorem}
\label{thm:CVPreduction}
For any $\tau = \tau(n) > 0$ and $\gamma = \gamma(n) \geq 1$, there is a randomized polynomial-time reduction from $\CVP{\gamma \cdot \sqrt{1+\tau^2}}{}$ to $\CVP{\gamma}{\phi}$ where $\phi(\lat) = \sqrt{1+\tau^{-2}}\cdot \lambda_1(\lat)$. 
\end{theorem}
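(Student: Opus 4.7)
The plan is to exploit the algebraic identity $\sqrt{1+\tau^2}/\tau=\sqrt{1+\tau^{-2}}$ via a randomized lattice sparsification in the style of Dadush and Kun. Given an instance $(\lat,\vec{t})$ of $\CVP{\gamma\sqrt{1+\tau^2}}{}$ with $d=\dist(\vec{t},\lat)$, a standard LLL plus Babai preprocessing will produce in polynomial time a list of $\poly(n)$ candidate values $r$ in geometric progression, at least one of which satisfies $d\leq r\leq(1+\eps)d$ for a small constant $\eps$. The reduction will run the following subroutine once per candidate and output the lattice vector closest to $\vec{t}$ among all returned candidates.

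For a fixed $r$, the subroutine aims to produce a random sublattice $\lat'\subseteq\lat$ such that, with probability at least $1/\poly(n)$, both (i) $\lambda_1(\lat')>\tau r$ and (ii) $\dist(\vec{t},\lat')\leq\sqrt{1+\tau^2}\,r$ hold. When both hold, the identity above gives $\dist(\vec{t},\lat')/\lambda_1(\lat')<\sqrt{1+\tau^{-2}}$, so the promise of $\CVP{\gamma}{\phi}$ is satisfied on $(\lat',\vec{t})$, and the oracle returns $\vec{y}'\in\lat'\subseteq\lat$ with
\[
\|\vec{y}'-\vec{t}\|\leq\gamma\,\dist(\vec{t},\lat')\leq\gamma\sqrt{1+\tau^2}\,r\leq\gamma\sqrt{1+\tau^2}(1+\eps)d,
\]
meeting the target approximation factor up to a harmless $(1+\eps)$ slack that can be absorbed by choice of $\eps$.

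The sparsification will fix a prime $p$ and sample
\[
\lat'=\bigl\{\vec{y}\in\lat:\langle\vec{B}^{-1}\vec{y},\vec{z}\rangle\equiv 0\pmod p\bigr\}
\]
for a uniformly random nonzero $\vec{z}\in\F_p^n$, where $\vec{B}$ is any basis of $\lat$. Each fixed nonzero $\vec{y}\in\lat$ then survives in $\lat'$ with probability exactly $1/p$. Taking $\vec{y}_0$, the true closest vector, as the natural witness for (ii) gives survival with probability $1/p$, while condition (i) is driven by a union bound over lattice vectors of norm at most $\tau r$. The overall success probability will be amplified by polynomial repetition of the sparsification across all candidate values of $r$.

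The main technical obstacle is the sparsification analysis itself: choosing $p$ so that (i) and (ii) coexist with inverse-polynomial probability. A naive union bound breaks down whenever the number of short lattice vectors is super-polynomial, and the refined counting arguments from the Dadush--Kun framework --- rather than off-the-shelf random sublattice selection --- are what will be needed to extract polynomial success probability. Once the sparsification lemma is in hand, the remainder of the proof is a direct composition of the sparsification with the oracle as sketched above.
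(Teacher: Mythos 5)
Your high-level plan is the right one and matches the paper's: sparsify $\lat$ à la Dadush--Kun so that the distance to $\vec{t}$ grows only modestly while $\lambda_1$ grows a lot, then invoke the distance-bounded oracle. But the proposal stops exactly at the point where the real work happens, and the place where you say ``the refined counting arguments from the Dadush--Kun framework \ldots\ are what will be needed'' is not a technicality you can defer: it is the theorem.

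The gap, concretely: you sample only a random sublattice $\lat' = \lat_p(\basis,\vec{z})$, and you use the survival of the single closest vector $\vec{y}_0$ as your witness for small $\dist(\vec{t},\lat')$. This event has probability $1/p$, and since $p$ must be at least on the order of the number of lattice points in a ball of radius about $\tau r$ to make $\lambda_1(\lat')>\tau r$ likely, $1/p$ can be exponentially small --- no amount of polynomial repetition fixes that. The paper's Lemma~\ref{lem:shortcosets} resolves this by sparsifying to a random \emph{coset} $\lat_{p,c}(\basis,\vec{z})$ and showing, via a Cauchy--Schwarz second-moment argument (rather than a union bound), that with constant probability over $\vec{z}$ at least $\Omega(p)$ of the $p$ cosets contain a short vector; a random $c$ then hits a good coset with constant probability. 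Your sublattice-only version is not just incomplete but structurally unable to deliver an inverse-polynomial success probability, because the zero coset is always ``occupied'' only by $\vec{0}$ and the short vectors near $\vec{t}$ have no reason to land in it.

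Even granting the sparsification lemma, a second issue would remain: with a random sublattice you would only be able to say that some $\vec{y}'\in\lat'$ is within $\tau r$ of the true closest vector $\vec{x}$, and the triangle inequality then gives $\|\vec{y}'-\vec{t}\|\leq (1+\tau)\,\dist(\vec{t},\lat)$, not $\sqrt{1+\tau^2}\,\dist(\vec{t},\lat)$. The paper obtains the sharper Pythagorean factor by additionally arguing --- via the reflection $c'\mapsto 2c^*-c'$ on cosets --- that with constant probability the short vector $\vec{y}_c$ in the chosen coset also satisfies $\inner{\vec{x}-\vec{y}_c,\vec{x}-\vec{t}}\geq 0$, so the cross term in $\|\vec{y}_c-\vec{t}\|^2$ is nonpositive. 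This symmetry argument has no analogue without cosets. Finally, a minor point: the $(1+\eps)$ slack from guessing $r$ up to a multiplicative constant is not truly ``harmless'' for the clean factor stated in the theorem. The paper avoids it entirely by never estimating $r$: the algorithm only needs to guess the prime $p$ from a polynomial-length list, and $r=\tau\dist(\vec{t},\lat)$ appears solely in the analysis.
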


Note that for $\tau \geq \sqrt{n -1}/2$, Proposition~\ref{prop:cvptocvpreduction} provides a strictly stronger reduction. The above theorem and Theorem~\ref{thm:cvphard} (the NP-hardness of $\CVP{n^{c/\log \log n}}{}$) immediately imply a hardness result for $\CVP{\gamma}{\phi}$.
\begin{corollary}
\label{cor:hardnessofcvp}
There exists a constant $c > 0$ such that $\CVP{\gamma}{\phi}$ is NP-hard for $\phi(\lat) = (1+n^{-c/\log \log n})\cdot \lambda_1(\lat)$ and $\gamma = n^{c/\log \log n}$.
\end{corollary}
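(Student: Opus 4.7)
The plan is to reduce $(\lat, \vec{t})$ to a promise instance $(\lat', \vec{t})$ via Dadush--Kun sparsification. Setting $d := \dist(\vec{t}, \lat)$, the goal is a sublattice $\lat' \subseteq \lat$ satisfying (a) some $\vec{u} \in \lat'$ with $\|\vec{u}-\vec{t}\| \leq \sqrt{1+\tau^2}\,d$ and (b) $\lambda_1(\lat') > \tau d$. Together these yield
\[
\frac{\dist(\vec{t},\lat')}{\lambda_1(\lat')} < \frac{\sqrt{1+\tau^2}}{\tau} = \sqrt{1+\tau^{-2}} \;,
\]
so the promise of $\CVP{\gamma}{\phi}$ holds on $(\lat', \vec{t})$ and the oracle returns $\vec{v}' \in \lat' \subseteq \lat$ with $\|\vec{v}'-\vec{t}\| \leq \gamma\sqrt{1+\tau^2}\,d$ as required. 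Since $d$ is unknown, I would first guess it from a geometric grid of ratio $\sqrt{1+\tau^2}$, needing only $\poly(n)$ values by Lemma~\ref{lem:bitlength}, and run the sparsification for each guess.

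The sparsification picks a uniformly random homomorphism $\phi\colon \lat \to \Z/p\Z$ for a prime $p$ to be chosen as a function of the guess $D$, and sets $\lat' = \ker\phi$. Two standard facts are crucial: each nonzero $\vec{v} \in \lat \setminus p\lat$ lies in $\lat'$ with probability exactly $1/p$, and the events $\{\vec{v}_1 \in \lat'\}$, $\{\vec{v}_2 \in \lat'\}$ are pairwise independent for $\vec{v}_1, \vec{v}_2$ linearly independent modulo $p\lat$. For the correct guess $D \in [d/\sqrt{1+\tau^2},\,d]$, I apply these facts to the candidate set $C = \{\vec{u} \in \lat : \|\vec{u}-\vec{t}\| \leq D\sqrt{1+\tau^2}\}$ (witnesses for~(a)) and the short-vector set $S = \{\vec{w} \in \lat : 0 < \|\vec{w}\| \leq D\tau\}$ (would-be witnesses against~(b)). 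A Chebyshev-style second-moment argument lower-bounds $\Pr[\lat'\cap C\neq\emptyset \text{ and }\lat'\cap S=\emptyset]$ in terms of $|C|$, $|S|$, and $p$; the key geometric input is that the ball defining $C$ is larger than that defining $S$ by a factor of $\sqrt{1+\tau^{-2}}$ in radius---exactly the factor appearing in the theorem---which lets $p$ be chosen so that this probability is noticeable. Iterating $\poly(n)$ times amplifies success to a constant, and returning the closest candidate across all guesses and trials completes the reduction.

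The main obstacle is the sparsification analysis itself: choosing $p$ so that both survival of a candidate and death of all short vectors occur jointly with noticeable probability in polynomial time, despite the fact that $|S|$ can be exponentially large in $n$. The quantitative leverage comes from the ratio $|C|/|S|$, which is exactly where the factor $\sqrt{1+\tau^2}$ in the approximation bound is ``paid for''; this is the essential content of the Dadush--Kun sparsification machinery~\cite{DadushK13}, which I would invoke here adapted from its origin-centered setting to the target-centered setting of approximate CVP. A subsidiary technicality is handling the degenerate case where relevant vectors happen to lie in $p\lat$, which can be dealt with by parallel sparsifications at a few different primes or a small random perturbation of $\vec{t}$.
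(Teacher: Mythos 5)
The high-level plan is right: the paper proves this corollary by combining Theorem~\ref{thm:CVPreduction} (a sparsification-based reduction from $\CVP{\gamma\sqrt{1+\tau^2}}{}$ to $\CVP{\gamma}{\phi}$ with $\phi(\lat)=\sqrt{1+\tau^{-2}}\lambda_1(\lat)$) with Theorem~\ref{thm:cvphard} (\cite{DinurKS98} hardness), so your choice to re-derive a Dadush--Kun-style reduction is the right direction. However, the sparsification as you describe it has a genuine gap. You take $\lat'=\ker\phi$ and appeal to a second-moment/Chebyshev bound on $\Pr[\lat'\cap C\neq\emptyset]$. That bound gives failure probability at most $p/|C|$. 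Meanwhile, killing all of $S$ forces $p \geq 2N$ (with $N=|\lat\cap r B_2^n|$), and the best generic lower bound one gets on $|C|$ is about $(N+1)/2$ (via the antipodal/Pythagorean map $\vec{w}\mapsto\vec{x}^*\pm\vec{w}$, where $\vec{x}^*$ is the nearest lattice point). So $p/|C|\gtrsim 4$ and the Chebyshev bound is vacuous: a fixed coset (the kernel) simply need not contain any point of $C$ with noticeable probability. The paper avoids this by working with a \emph{random coset} $\lat_{p,c}$, not the kernel; Lemma~\ref{lem:shortcosets} applies Cauchy--Schwarz to the distribution of $N$ short vectors across cosets to show that, with constant probability over $\vec{z}$, an $\Omega(1)$-fraction of the $p$ cosets contains a short vector near $\vec{x}^*$, so an independently uniform $c$ then succeeds with constant probability. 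The extra degree of freedom ($c$) is load-bearing, and is not something one recovers by repeating the kernel experiment $\poly(n)$ times, since each repetition still needs $|C|\gtrsim p$.

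Two smaller gaps: (1) To get $\sqrt{1+\tau^2}$ rather than $1+\tau$ you need the angle/symmetry argument: the paper shows that for every coset $c'$ whose near point to $\vec{x}^*$ makes the ``wrong'' (acute) angle, the reflected coset $2c^*-c'$ makes the ``right'' angle, and $c'\mapsto 2c^*-c'$ is a bijection on $\Z_p$. Your proposal defines $C$ directly as a ball around $\vec{t}$ of radius $\sqrt{1+\tau^2}\,d$ and asserts $|C|$ is large enough, but that is precisely what the reflection argument establishes; it doesn't come for free. (2) The corollary statement also requires invoking Theorem~\ref{thm:cvphard} explicitly and choosing $\tau \approx n^{c/(2\log\log n)}$ so that $\sqrt{1+\tau^{-2}} = 1+n^{-\Theta(c/\log\log n)}$ while the source approximation factor $\gamma\sqrt{1+\tau^2} = n^{\Theta(c/\log\log n)}$ remains within the hardness range; your write-up leaves both the hardness ingredient and this parameter calculation implicit.
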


We follow the sparsification idea of~\cite{DadushK13}. Basically, 
given a target $\vec{t}$, we try to
find a sublattice $\lat'$ of $\lat$, such that $\lat'$ has minimum distance proportional to $\dist(\vec{t},\lat')$ with
$\dist(\vec{t},\lat')$ not much larger than $\dist(\vec{t},\lat)$. Notice that the first condition is needed to ensure that a distance-bounded
\problem{CVP} solver will succeed on $\lat'$ and $\vec{t}$, and the second condition allows us to bound the loss in approximation when passing from
$\lat$ to $\lat'$. 
Implicit in the work of~\cite{DadushK13} is the fact that a random sublattice $\lat'$ of $\lat$ of index $p$ (for an appropriate $p$) will work. To obtain the approximation factor stated in the theorem, we actually work with a random coset of $\lat'$, and we also do a slightly more careful analysis in order to avoid the loss incurred by a triangle inequality. 

For a full rank lattice $\lat$ with basis $\basis$, a prime $p$, a vector $\vec{z} \in \Z_p^n$, and $c \in \Z_p$, we define \[
\lat_{p,c}(\basis, \vec{z}) = \{ \vec{y} \in \lat : \inner{\vec{z}, \basis^{-1}\vec{y}} = c \imod{p} \}
\; ,
\]
and $\lat_{p}(\basis, \vec{z}) = \lat_{p,0}(\basis, \vec{z})$. Note that $\lat_{p}(\basis, \vec{z})$ is a sublattice of $\lat$ and $\lat_{p,c}(\basis,
\vec{z})$ is a coset of $\lat_{p}(\basis, \vec{z})$. We wish to argue that, for any $\vec{t}$ and appropriate $p$, if $\vec{z}$ and $c$ are chosen uniformly at random, then with constant positive probability, $\lambda_1(\lat_p(\basis, \vec{z}))$ will be relatively large but $\dist(\vec{t}, \lat_{p, c}(\basis, \vec{z}))$ will be relatively close to $\dist(\vec{t}, \lat)$. The next lemma is a modification of \cite[Lemma 4.3]{DadushK13} more suited to our purposes and is the key to the reduction.

\begin{lemma}
\label{lem:shortcosets}
Let $r > 0$, $\lat \subset \R^n$ a full rank lattice with basis $\basis$, $N = | \lat \cap r B_2^n |$, and $p > N$ a prime. Let $\vec{z} $ be sampled uniformly from $\Z_p^n$, and define \[ C = \{ c \in \Z_p : |\lat_{p,c}(\basis, \vec{z}) \cap r B_2^n| >0 \}
\; .
\] 
Then,
\begin{itemize}
\item $\displaystyle \Pr_\vec{z}[\lambda_1(\lat_p(\basis, \vec{z})) \leq r] \leq \frac{N}{p}$, and
\item $\displaystyle \Pr_\vec{z}\Big[|C| \leq \eps\cdot \frac{N}{p+N-1} \cdot p \Big] \leq \eps $ for any $\eps \in (0,1)$. 
\end{itemize}
\end{lemma}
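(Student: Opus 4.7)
The plan is to base both bounds on a single ``non-vanishing'' claim: when $p > N = |\lat \cap rB_2^n|$, for any two \emph{distinct} points $\vec{y}_1, \vec{y}_2 \in \lat \cap rB_2^n$ we have $\basis^{-1}(\vec{y}_1 - \vec{y}_2) \not\equiv \vec{0} \imod{p}$. I would prove this by contradiction: if $\basis^{-1}(\vec{y}_1 - \vec{y}_2) \equiv \vec{0} \imod{p}$, then $\vec{w} := (\vec{y}_1 - \vec{y}_2)/p$ lies in $\lat$, and the $p+1$ points
\[
\vec{y}_2 + k\vec{w} = (1-k/p)\vec{y}_2 + (k/p)\vec{y}_1, \quad k = 0, 1, \dots, p,
\]
would be distinct lattice vectors contained in the convex set $rB_2^n$, contradicting $|\lat \cap rB_2^n| = N < p+1$. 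The useful consequence is that whenever $\vec{w}$ is the difference of two distinct points in $\lat \cap rB_2^n$, the reduction $\inner{\vec{z}, \basis^{-1}\vec{w}} \imod{p}$ is uniform on $\Z_p$ for $\vec{z}$ drawn uniformly from $\Z_p^n$, giving $\Pr_\vec{z}[\inner{\vec{z}, \basis^{-1}\vec{w}} \equiv 0 \imod{p}] = 1/p$.

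Given this, the first bound is immediate: applying the claim with $\vec{y}_2 = \vec{0}$ to each of the $N-1$ nonzero vectors in $\lat \cap rB_2^n$, a union bound yields $\Pr[\lambda_1(\lat_p(\basis, \vec{z})) \leq r] \leq (N-1)/p \leq N/p$.

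For the second bound, I plan a second-moment argument. Setting $n_c(\vec{z}) := |\lat_{p,c}(\basis, \vec{z}) \cap rB_2^n|$, we have $\sum_c n_c = N$ and $|C| = |\{c : n_c \geq 1\}|$, and Cauchy--Schwarz gives $|C| \geq N^2/\sum_c n_c^2$. It therefore suffices to control $\expect_\vec{z}[\sum_c n_c^2]$. Rewriting
\[
\sum_c n_c^2 = \#\bigl\{(\vec{y}_1, \vec{y}_2) \in (\lat \cap rB_2^n)^2 : \inner{\vec{z}, \basis^{-1}(\vec{y}_1 - \vec{y}_2)} \equiv 0 \imod{p}\bigr\}
\]
and splitting on $\vec{y}_1 = \vec{y}_2$ versus $\vec{y}_1 \neq \vec{y}_2$, the non-vanishing claim yields
\[
\expect_\vec{z}\Big[\sum_c n_c^2\Big] = N + \frac{N(N-1)}{p} = \frac{N(p+N-1)}{p}.
\]
Combining with Cauchy--Schwarz, the event $|C| \leq \eps Np/(p+N-1)$ forces $\sum_c n_c^2 \geq N(p+N-1)/(\eps p)$, and Markov's inequality bounds the probability of this event by exactly $\eps$.

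The only substantive step is the non-vanishing claim; once it is established, both bounds reduce to a routine union bound and a standard first-moment calculation. The convexity argument works precisely because $rB_2^n$ is a centrally symmetric convex body, which is why this approach is adapted to Euclidean CVP; a similar pigeonhole argument would generalize verbatim to any norm.
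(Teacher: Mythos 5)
Your proof is correct and follows essentially the same route as the paper: establish that differences of distinct points in $\lat \cap rB_2^n$ do not reduce to $\vec{0}$ modulo $p$, deduce uniformity of $\inner{\vec{z}, \basis^{-1}(\vec{y}_1 - \vec{y}_2)} \bmod p$, then apply a union bound for the first claim and a Cauchy--Schwarz plus first-moment-Markov argument (on what the paper calls $|A| = \sum_c |S_c|^2$) for the second. The only variation is cosmetic: you derive the non-vanishing claim from convexity of $rB_2^n$ (the $p+1$ convex combinations $(1-k/p)\vec{y}_2 + (k/p)\vec{y}_1$ all lie in the ball), whereas the paper scales about the origin ($k\vec{x}'$ for $|k| \le p/2$); both give the needed pigeonhole contradiction with $N < p$.
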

\begin{proof}
First, we wish to show that for any $\vec{x} \in (\lat \cap 2r B_2^n) \setminus \{\vec0 \}$, $\inner{\vec{z}, \basis^{-1}\vec{x}}$ is uniformly distributed mod $p$ over the choice of $\vec{z}.$ Let $\vec{x} = \sum y_i \vec{b}_i$ and $\vec{z} = (z_1, \ldots, z_n)$. Then, $\inner{\vec{z}, \basis^{-1}\vec{x}} = \sum z_i y_i$. So, it suffices to show that at least one $y_i$ is not $0$ mod $p$, or equivalently that $\vec{x} \notin p \lat$. 
Suppose $\vec{x} \in p \lat$. Then there is some $\vec{x}' \in \lat$ such that $p\vec{x}' = \vec{x}$, so clearly the vectors $\floor{-p/2} \vec{x}', \floor{-p/2 + 1}\vec{x}',\ldots, \vec0, \ldots, \floor{p/2}\vec{x}'$ are all in $(\lat \cap r B_2^n) $. This contradicts the fact that there are exactly $N < p$ vectors in $(\lat \cap r B_2^n) $. 
It follows that $\inner{\vec{z}, \basis^{-1}\vec{x}}$ is uniformly distributed mod $p$ over the choice of $\vec{z}$.

Now, to prove the first result, let $\vec{x} \in (\lat \cap r B_2^n) \setminus \{\vec0 \}$. Since $\inner{\vec{z}, \basis^{-1}\vec{x}}$ is uniformly distributed mod $p$, $\Pr_\vec{z}[\vec{x} \in \lat_{p}(\basis, \vec{z})] = 1/p$. We simply apply union bound and recall the definition of $N$ to get
$\Pr_\vec{z}[\lambda_1(\lat_p(\basis, \vec{z})) \leq r] \leq N/p
$
as claimed. 

To prove the second result, for $c \in C$ let $ S_c = \lat_{p,c}(\basis, \vec{z}) \cap r B_2^n
$,
and let
\[
A = \bigcup_{c \in C}  S_c^2 \;,
\]
the set of pairs of short vectors in the same coset. Then, recalling the definition of $N$ and applying Cauchy-Schwarz,
\[
N^2 = \Big(\sum_{c \in C} |S_c| \Big)^2 \leq \Big(\sum_{c \in C} 1 \Big)\Big(\sum_{c \in C} |S_c|^2 \Big) = |C|\cdot |A|
\; .
\]
Therefore $|C| \geq N^2/|A|$. So, it suffices to bound $\Pr[|A| \geq N\cdot(p + N -1)/(\eps p )]$.

Let $\vec{x}, \vec{x}' \in  (\lat \cap r B_2^n)$ be distinct. Since $\vec{x} - \vec{x}' \in (\lat \cap 2r B_2^n)  \setminus \{ \vec0 \}$, it follows that $\inner{\vec{z}, \basis^{-1}(\vec{x} - \vec{x}')}$ is uniformly distributed mod $p$ over the choice of $\vec{z}$. So, 
$
\Pr[\inner{\vec{z}, \basis^{-1}\vec{x}} = \inner{\vec{z}, \basis^{-1}\vec{x}'} \imod{p}] = 1/p
$.
Therefore, \[
\expect_\vec{z}[|A|] = N + N(N-1)/p = N\cdot \frac{p+N-1}{p}
\; .
\]
Applying Markov's inequality, 
\[ \Pr_\vec{z}\Big[|A| \geq N\cdot \frac{p + N -1}{\eps p } \Big] \leq \eps
\;, \] and the result follows.
\end{proof}

\begin{proof}[Proof of Theorem~\ref{thm:CVPreduction}]
Let $\OO$ be an algorithm that solves $\CVP{\gamma}{\phi}$. 
Our input is a lattice $\lat \subset \Q^n$ with basis $\basis$ and target vector $\vec{t} \in \R^n$.
We assume without loss of generality that $\lat$ is full rank.
Let $r = \tau \dist(\vec{t}, \lat)$, and $N = | \lat \cap r B_2^n | > 0$. 
Our reduction needs to have a prime number $p$ satisfying $2 N \leq p \leq 8 N$. 
Since we do not know $N$, we simply run the reduction with each of polynomially many values for $p$, one of
which is guaranteed to be in the right range, and then output the closest of all lattice vectors we find. 
In more detail, assume $\tau < \sqrt{n}$ since otherwise the reduction already follows from Proposition~\ref{prop:cvptocvpreduction}.
By Lemma~\ref{lem:bitlength} and a simple packing argument, $N$ is at most $2^{\poly(\ell)}$ for some fixed polynomial in the bit length $\ell$ of the description of $\lat$. So it suffices to try for each $i = 1,\ldots, \poly(\ell)$, a prime $p$ with $2^i < p \leq 2^{i+1}$. We now continue with the description of the reduction assuming we know a prime $p$ satisfying $2 N \leq p \leq 8 N$. 

With this, the reduction is straightforward. It samples $\vec{z} \in \Z_{p}^n$ and $c \in \Z_{p}$ uniformly at random. It then returns $\OO(\vec{t} - \vec{y},\lat_{p}(\basis, \vec{z}) ) + \vec{y}$ where $\vec{y}$ is an arbitrary point in $\lat_{p, c}(\basis, \vec{z})$. I.e., we find a close vector to $\vec{t}$ in the coset $\lat_{p, c}(\basis, \vec{z})$. 

By Lemma~\ref{lem:shortcosets}, we have that $\lambda_1(\lat_{p}(\basis, \vec{z})) > r$ and $|C| \geq {p}/50$ where 
\[
C = \{ c' \in \Z_p : |\lat_{p,c'}(\basis, \vec{z}) \cap r B_2^n| >0 \} \; ,
\]
with probability at least $1/4$ over the choice of $\vec{z}$.
Suppose both of these hold. 

Let $\vec{x} \in \lat$ be the closest lattice vector to $\vec{t}$, and for each coset $c'$, let $\vec{y}_{c'} \in \lat_{p, c'}(\basis, \vec{z})$ be a closest vector in $\lat_{p, c'}(\basis, \vec{z})$ to $\vec{x} $. If there are multiple choices for $\vec{y}_{c'}$, we take one that maximizes $\inner{\vec{x} - \vec{y}_{c'}, \vec{x} - \vec{t}}$. We wish to argue that, with positive constant probability over the choice of the random coset $c$, both
(1) $\length{\vec{x} - \vec{y}_{c}} \leq r$ and
(2) $\inner{\vec{x} - \vec{y}_{c}, \vec{x} - \vec{t}} \geq 0$
hold. 
Since $\lat_{p,c}(\basis, \vec{z}) - \vec{x}$ is a uniformly distributed random coset, our assumption on $|C|$ implies that at least $p/50$ cosets satisfy condition (1). 
Let $c^*$ be such that $\vec{x} \in \lat_{p, c^*}(\basis, \vec{z})$. Note that for all $c'$,  $2\vec{x} - \vec{y}_{c'}$ is a closest vector to $\vec{x}$ in $\lat_{p, 2c^* - c'}(\basis, \vec{z})$. It follows that $\length{\vec{x} - \vec{y}_{2c^*- c'}} = \length{\vec{x} - \vec{y}_{c'}}$, and if $\inner{\vec{x} - \vec{y}_{c'}, \vec{x} - \vec{t}} < 0$, then $\inner{\vec{x} - \vec{y}_{2c^* - c'}, \vec{x} - \vec{t}} \geq \inner{\vec{x} - (2\vec{x} - \vec{y}_{c'}), \vec{x} - \vec{t}}> 0 $. It follows that for each coset $c'$ that satisfies (1) but not (2), $2c^* - c'$ satisfies both (1) and (2). 
Since the map $c' \mapsto 2c^* - c'$ is a bijection on $\Z_p$, we obtain that with probability $1/100$ over the choice of the coset $c$, both (1) and (2) hold.
When this is the case, by expanding the squared norm as an inner product, we have
\begin{align*}
\length{\vec{y}_{c} - \vec{t}}^2 &=
\length{(\vec{x}-\vec{t}) - (\vec{x}-\vec{y}_{c})}^2  \\
&\le \length{\vec{x} - \vec{t}}^2 + \length{\vec{x} - \vec{y}_{c}}^2\\
&\leq (1+ \tau^2)\dist(\vec{t}, \lat)^2
\; .
\end{align*}
Finally, note that $\sqrt{1+ \tau^2}\dist(\vec{t}, \lat) = \sqrt{1+\tau^{-2}} \cdot \tau \dist(\vec{t}, \lat) < \phi(\lat_{p}(\basis, \vec{z}))$. So, by the definition of $\OO$, the distance of our output from $\vec{t}$ is at most
\[ 
\gamma \cdot \length{\vec{y}_{c} - \vec{t}} \leq \gamma \cdot \sqrt{1+\tau^2} \dist(\vec{t}, \lat)
\; .
\]
It follows that the reduction succeeds with probability at least $1/400$.
\end{proof}

\section{Local Maxima of \texorpdfstring{$f(\vec{t})$}{f(t)}}
\label{sec:localmaxima}

\begin{claim}
\label{claim:localmaxima}
For any sufficiently large $n$ there exists a lattice $\lat \subset \R^n$ such that the function $f$ has a local maximum that is not a global one. Furthermore, the local maximum is at distance
$\lambda_1(\lat)/\sqrt{2}$ from the lattice, and the value of $f$ at this point is exponentially close to $1$ (the value at global maxima).
\end{claim}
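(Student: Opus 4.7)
The plan is to exhibit the concrete example $\lat := D_n = \{\vec{x} \in \Z^n : \sum_i x_i \in 2\Z\} \subset \R^n$ for all $n$ above some absolute constant, with candidate local maximum $\vec{t}^* := \vec{e}_1$. First I would verify the geometric facts: $\lambda_1(\lat) = \sqrt{2}$ (witnessed by $\vec{e}_i \pm \vec{e}_j$), $\vec{t}^* \notin \lat$ (odd coordinate sum), and $\dist(\vec{t}^*, \lat) = 1 = \lambda_1/\sqrt{2}$ by direct enumeration (the points $\vec{0}$, $2\vec{e}_1$, and $\vec{e}_1 \pm \vec{e}_j$ for $j \geq 2$ all lie in $\lat$ at distance exactly $1$, and a routine integer argument rules out anything closer). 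Writing $\vec{t}^* = \frac{1}{2}\vec{y}$ with $\vec{y} = 2\vec{e}_1 \in \lat$ and combining lattice periodicity with $f(-\vec{t}) = f(\vec{t})$ yields the symmetry $f(\vec{t}^* + \vec{h}) = f(\vec{t}^* - \vec{h})$ for every $\vec{h} \in \R^n$, so $\vec{t}^*$ is automatically a critical point.

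The heart of the proof is showing that $Hf_\lat(\vec{t}^*)$ is strictly negative definite. Differentiating the Poisson--dual identity~\eqref{eq:poisson} twice gives $Hf_\lat(\vec{t}^*) = -4\pi^2\expect_{\vec{w}\sim D_{\lat^*}}[\vec{w}\vec{w}^T\cos(2\pi\langle\vec{w},\vec{e}_1\rangle)]$. Now $\lat^* = D_n^* = \Z^n \cup (\Z^n + (1/2,\ldots,1/2))$, and for $\vec{w}$ in the first coset $w_1 \in \Z$, so $\cos(2\pi w_1) = +1$ uniformly, while for $\vec{w}$ in the shifted coset $w_1 \in 1/2+\Z$, so $\cos(2\pi w_1) = -1$ uniformly. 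By permutation/sign symmetry of each coset, the partial sums
\[
A := \sum_{\vec{w} \in \Z^n} \vec{w}\vec{w}^T \rho(\vec{w}), \qquad B := \sum_{\vec{w} \in \Z^n + (1/2,\ldots,1/2)} \vec{w}\vec{w}^T \rho(\vec{w})
\]
are both scalar multiples of $I_n$, and they factor across coordinates as $A = \alpha \, r_0^{n-1} \, I_n$ and $B = \beta \, r_1^{n-1} \, I_n$, where $r_0 := \rho(\Z)$, $r_1 := \rho(\Z + 1/2)$, $\alpha := \sum_{k\in\Z} k^2 e^{-\pi k^2}$, and $\beta := \sum_{k\in\Z}(k+1/2)^2 e^{-\pi(k+1/2)^2}$. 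Since the one-dimensional Gaussian is peaked at $0$ rather than at $1/2$, we have $r_0 > r_1$ (numerically $r_1/r_0 \approx 0.84$), so $(r_1/r_0)^{n-1}$ decays exponentially in $n$; hence $A - B \succ 0$ for all $n \geq n_0$, and the Hessian $Hf_\lat(\vec{t}^*) = -4\pi^2 (A-B)/\rho(\lat^*)$ is a strictly negative multiple of $I_n$. This confirms the strict local maximum.

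For the value at $\vec{t}^*$, the same dual decomposition gives the closed form
\[
f_\lat(\vec{t}^*) = \frac{\rho(\Z^n) - \rho(\Z^n + (1/2,\ldots,1/2))}{\rho(\Z^n) + \rho(\Z^n + (1/2,\ldots,1/2))} = \frac{r_0^n - r_1^n}{r_0^n + r_1^n} = 1 - \frac{2(r_1/r_0)^n}{1 + (r_1/r_0)^n},
\]
which is strictly less than $1$ -- so $\vec{t}^*$ is not a global maximum, since lattice points achieve $f = 1$ -- and differs from $1$ by $O((r_1/r_0)^n)$, exponentially small in $n$ as claimed.

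The main obstacle turns out to be remarkably mild: everything reduces to the scalar inequality $r_1 < r_0$, amplified to the $n$-th power by the tensor structure of $D_n$ and $D_n^*$. The coset split $D_n^* = \Z^n \sqcup (\Z^n + (1/2,\ldots,1/2))$ is what lets the cosine take a constant sign on each piece, collapsing the Hessian and $f$ calculations into a clean one-dimensional Jacobi-theta comparison; no subtle Gaussian tail estimates are required.
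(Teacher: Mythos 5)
Your proposal is correct and follows essentially the same approach as the paper's proof: same lattice $D_n$, same target $\vec{e}_1$, same dual coset decomposition $\lat^* = \Z^n \cup (\Z^n + \vec{u})$, and the same reduction of both the Hessian sign and the value $f(\vec{t}^*)$ to the one-dimensional comparison $\rho(\Z+1/2) < \rho(\Z)$. The only differences are cosmetic (you argue criticality via the point-symmetry $f(\vec{t}^*+\vec{h})=f(\vec{t}^*-\vec{h})$, while the paper applies the same evenness-plus-periodicity reasoning directly to $\grad f$, and you spell out the factorizations $A=\alpha r_0^{n-1}I_n$, $B=\beta r_1^{n-1}I_n$ a bit more explicitly).
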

\begin{proof}
Let $\vec{e}_1, \ldots, \vec{e}_n$ be the standard basis of $\R^n$, and let $\lat = \{ \vec{z} \in \Z^n : \sum \inner{\vec{e}_i, \vec{z}} \equiv 0 \mod 2\}$. 
Note that the shortest non-zero vectors of $\lat$ are of the form $\vec{e}_i+\vec{e}_j$, $i \neq j$, and hence $\lambda_1(\lat) = \sqrt{2}$. 
Then, it is easy to see that $\lat^* = \Z^n \cup (\Z^n+ \vec{u} )$, where $\vec{u} = \sum_{i=1}^{n}\vec{e}_i/2$. 

Let $\vec{t}$ be any point in $\Z^n \setminus \lat$, say $\vec{t}=(1,0,\ldots,0)$. Note that $\dist(\lat,\vec{t}) = 1 = \lambda_1(\lat)/\sqrt{2}$. 
Since $f$ is a periodic function and $2\vec{t} \in \lat$, $\grad f(\vec{t}) = \grad f(-\vec{t})$.
On the other hand, $\grad f$ is an odd function, and therefore $\grad f(\vec{t}) = \vec0$.

We will now show that $f(\vec{t})$ approaches $f(\vec0) = 1$ as $n$ approaches $\infty$ by exploiting the multiplicative structure of $\rho$ on $\Z^n$ and $\Z^n + \vec{u}$. In particular, $\rho(\Z^n) = \rho(\Z)^n$ and $\rho(\Z^n + \vec{u}) = \rho(\Z + 1/2)^n$. So,
\begin{align*}
f(\vec{t}) &= \expect_{\vec{w} \sim D_{\lat^*}}[\cos(2\pi \inner{\vec{w}, \vec{t}})]\\
&= \frac{1}{\rho(\lat^*)}\cdot(\rho(\Z^n) - \rho(\Z^n + \vec{u}))\\
&= \frac{1}{\rho(\lat^*)}\cdot\big(\rho(\Z)^n - \rho(\Z + 1/2)^n\big)
\; .
\end{align*}
Similarly, we have that \[
f(\vec0) = \frac{1}{\rho(\lat^*)}\cdot\big(\rho(\Z)^n + \rho(\Z + 1/2)^n\big) = 1
\; .
\]
Since, $\rho(\Z + 1/2)<\rho(\Z)$ the difference between $f(\vec{0})$ and $f(\vec{t})$ is exponentially small in $n$.

It remains to show that $H f(\vec{t})$ is negative definite. Note that
\begin{align*}
Hf(\vec{t}) &= -4\pi^2 \expect_{\vec{w} \sim D_{\lat^*}}[\vec{w}\vec{w}^T \cos(2\pi \inner{\vec{w}, \vec{t}})]\\
&= -\frac{4\pi^2}{\rho(\lat^*)} \cdot \Big( \sum_{\vec{z} \in \Z^n}\vec{z} \vec{z}^T\rho(\vec{z}) - \sum_{\vec{z} \in \Z^n + \vec{u}}\vec{z} \vec{z}^T\rho(\vec{z}) \Big)
\; .
\end{align*}
Again exploiting the multiplicative structure of $\rho$ on $\Z^n$ and $\Z^n + \vec{u}$, we have
\begin{align*}
\sum_{\vec{z} \in \Z^n}\vec{z} \vec{z}^T\rho(\vec{z})
&= I_n \cdot \sum_{\vec{z} \in \Z^n} z_1^2 \rho(\vec{z})\\
&= I_n \cdot \rho(\Z)^{n-1}\sum_{z \in \Z} z^2 \rho(z)
\; .
\end{align*}
A similar calculation shows that 
\[
\sum_{\vec{z} \in \Z^n + \vec{u}}\vec{z} \vec{z}^T\rho(\vec{z}) = I_n \cdot \rho(\Z + 1/2)^{n-1} \sum_{z \in \Z} (z + 1/2)^2\rho(z + 1/2)
\; .
\]
The result then follows by again noting that $\rho(\Z + 1/2)<\rho(\Z)$, so for sufficiently large $n$, the $\rho(\Z)^{n-1}$ term dominates. (In fact, $n = 7$ suffices.)
\end{proof}

\bibliographystyle{alphaabbrvprelim}

\begin{thebibliography}{AKKV11}
\expandafter\ifx\csname urlstyle\endcsname\relax
  \providecommand{\doi}[1]{doi:\discretionary{}{}{}#1}\else
  \providecommand{\doi}{doi:\discretionary{}{}{}\begingroup
  \urlstyle{rm}\Url}\fi

\bibitem[AKKV11]{AlekhnovichKKV11}
M.~Alekhnovich, S.~Khot, G.~Kindler, and N.~K. Vishnoi.
\newblock Hardness of approximating the closest vector problem with
  pre-processing.
\newblock \emph{Computational Complexity}, 20(4):741--753, 2011.

\bibitem[AKS01]{AjtaiSieveSVP}
M.~Ajtai, R.~Kumar, and D.~Sivakumar.
\newblock A sieve algorithm for the shortest lattice vector problem.
\newblock In \emph{Proc. 33rd ACM Symposium on Theory of Computing}, pages
  601--610. 2001.

\bibitem[AR05]{AharonovR04}
D.~Aharonov and O.~Regev.
\newblock Lattice problems in {NP} intersect {coNP}.
\newblock \emph{Journal of the ACM}, 52(5):749--765, 2005.
\newblock Preliminary version in FOCS'04.

\bibitem[Bab86]{Babai86}
L.~Babai.
\newblock On {L}ov\'asz' lattice reduction and the nearest lattice point
  problem.
\newblock \emph{Combinatorica}, 6(1):1--13, 1986.

\bibitem[Ban93]{banaszczyk}
W.~Banaszczyk.
\newblock New bounds in some transference theorems in the geometry of numbers.
\newblock \emph{Mathematische Annalen}, 296(4):625--635, 1993.

\bibitem[CDLP13]{CDLP12}
K.-M. Chung, D.~Dadush, F.-H. Liu, and C.~Peikert.
\newblock On the lattice smoothing parameter problem.
\newblock In \emph{Proc. IEEE Conference on Computational Complexity}. 2013.

\bibitem[DK13]{DadushK13}
D.~Dadush and G.~Kun.
\newblock Lattice sparsification and the approximate closest vector problem.
\newblock In \emph{SODA}. 2013.

\bibitem[DKRS03]{DinurKS98}
I.~Dinur, G.~Kindler, R.~Raz, and S.~Safra.
\newblock Approximating {CVP} to within almost-polynomial factors is {NP}-hard.
\newblock \emph{Combinatorica}, 23(2):205--243, 2003.

\bibitem[FM04]{FeigeMicciancio}
U.~Feige and D.~Micciancio.
\newblock The inapproximability of lattice and coding problems with
  preprocessing.
\newblock \emph{Journal of Computer and System Sciences}, 69(1):45--67, 2004.
\newblock Preliminary version in CCC 2002.

\bibitem[GG00]{GoldreichGoldwasser}
O.~Goldreich and S.~Goldwasser.
\newblock On the limits of nonapproximability of lattice problems.
\newblock \emph{Journal of Computer and System Sciences}, 60(3):540--563, 2000.

\bibitem[Hoe63]{hoeffding}
W.~Hoeffding.
\newblock Probability inequalities for sums of bounded random variables.
\newblock \emph{Journal of the American Statistical Association}, 58:13--30,
  1963.

\bibitem[HR12]{HavivR12}
I.~Haviv and O.~Regev.
\newblock Tensor-based hardness of the shortest vector problem to within almost
  polynomial factors.
\newblock \emph{Theory of Computing}, 8(23):513--531, 2012.

\bibitem[Kan87]{Kannan87}
R.~Kannan.
\newblock Minkowski's convex body theorem and integer programming.
\newblock \emph{Mathematics of Operations Research}, 12(3):pp. 415--440, 1987.

\bibitem[Kho04]{Khot}
S.~Khot.
\newblock Hardness of approximating the shortest vector problem in lattices.
\newblock In \emph{Proc. 45th Annual IEEE Symp. on Foundations of Computer
  Science (FOCS)}, pages 126--135. IEEE, 2004.

\bibitem[Kho10]{KhotChapter}
S.~Khot.
\newblock Inapproximability results for computational problems on lattices.
\newblock In P.~Q. Nguyen and B.~Vallée, editors, \emph{The LLL Algorithm},
  Information Security and Cryptography, pages 453--473. Springer Berlin
  Heidelberg, 2010.

\bibitem[Kle00]{Klein00}
P.~Klein.
\newblock Finding the closest lattice vector when it's unusually close.
\newblock In \emph{Proc. 11th ACM-SIAM Symposium on Discrete Algorithms}, pages
  937--941. 2000.

\bibitem[KPV14]{KhotPV12}
S.~Khot, P.~Popat, and N.~K. Vishnoi.
\newblock $2^{\log^{1-\eps} n}$ hardness for closest vector problem with
  preprocessing.
\newblock \emph{SIAM Journal on Computing}, 43(3):1184--1205, 2014.

\bibitem[LLL82]{LLL}
A.~K. Lenstra, H.~W. Lenstra, and L.~Lov{\'a}sz.
\newblock Factoring polynomials with rational coefficients.
\newblock \emph{Mathematische Annalen}, 261:515--534, 1982.

\bibitem[LLM06]{LiuLM06}
Y.-K. Liu, V.~Lyubashevsky, and D.~Micciancio.
\newblock On bounded distance decoding for general lattices.
\newblock In \emph{International Workshop on Randomization and Computation -
  Proceedings of {RANDOM} 2006}, volume 4110 of \emph{Lecture Notes in Computer
  Science}, pages 450--461. Springer, Barcellona, Spain, August 2006.

\bibitem[LLS90]{hkzbabai}
J.~C. Lagarias, H.~W. Lenstra, Jr., and C.-P. Schnorr.
\newblock Korkin-{Z}olotarev bases and successive minima of a lattice and its
  reciprocal lattice.
\newblock \emph{Combinatorica}, 10(4):333--348, 1990.

\bibitem[MG02]{MicciancioBook}
D.~Micciancio and S.~Goldwasser.
\newblock \emph{Complexity of Lattice Problems: a cryptographic perspective},
  volume 671 of \emph{The Kluwer International Series in Engineering and
  Computer Science}.
\newblock Kluwer Academic Publishers, Boston, Massachusetts, March 2002.

\bibitem[Mic01a]{MicciancioCVPP}
D.~Micciancio.
\newblock The hardness of the closest vector problem with preprocessing.
\newblock \emph{IEEE Transactions on Information Theory}, 47(3):1212--1215,
  2001.

\bibitem[Mic01b]{Micciancio01svp}
D.~Micciancio.
\newblock The shortest vector problem is {NP}-hard to approximate to within
  some constant.
\newblock \emph{SIAM Journal on Computing}, 30(6):2008--2035, March 2001.
\newblock Preliminary version in FOCS 1998.

\bibitem[MP12]{subgaussian}
D.~Micciancio and C.~Peikert.
\newblock Trapdoors for lattices: Simpler, tighter, faster, smaller.
\newblock In D.~Pointcheval and T.~Johansson, editors, \emph{Advances in
  Cryptology---EUROCRYPT 2012}, volume 7237 of \emph{Lecture Notes in
  Computer Science}, pages 700--718. Springer Berlin Heidelberg, 2012.

\bibitem[MR07]{MR04}
D.~Micciancio and O.~Regev.
\newblock Worst-case to average-case reductions based on {G}aussian measures.
\newblock \emph{SIAM Journal on Computing}, 37(1):267--302 (electronic), 2007.

\bibitem[MV13]{MV13}
D.~Micciancio and P.~Voulgaris.
\newblock A deterministic single exponential time algorithm for most lattice
  problems based on voronoi cell computations.
\newblock \emph{SIAM Journal on Computing}, 42(3):1364--1391, 2013.
\newblock Preliminary version in STOC'10.

\bibitem[Reg04]{Regev03B}
O.~Regev.
\newblock Improved inapproximability of lattice and coding problems with
  preprocessing.
\newblock \emph{IEEE Transactions on Information Theory}, 50(9):2031--2037,
  2004.
\newblock Preliminary version in CCC'03.

\bibitem[Reg09]{oded05}
O.~Regev.
\newblock On lattices, learning with errors, random linear codes, and
  cryptography.
\newblock \emph{Journal of the ACM}, 56(6):Art. 34, 40, 2009.

\bibitem[Reg10]{RegevLLL07}
O.~Regev.
\newblock On the complexity of lattice problems with polynomial approximation
  factors.
\newblock In P.~Q. Nguyen and B.~Vall\'{e}e, editors, \emph{The LLL Algorithm},
  Information Security and Cryptography, pages 475--496. Springer Berlin
  Heidelberg, 2010.

\bibitem[Sch87]{SchnorrBKZ}
C.-P. Schnorr.
\newblock A hierarchy of polynomial time lattice basis reduction algorithms.
\newblock \emph{Theoretical Computer Science}, 53(2-3):201--224, 1987.

\bibitem[{V}er12]{Vershynin_2012}
R.~{V}ershynin.
\newblock {I}ntroduction to the non-asymptotic analysis of random matrices.
\newblock In Y.~{E}ldar and G.~{K}utyniok, editors, \emph{{C}ompressed
  {S}ensing: {T}heory and {A}pplications}, pages 210--268. {C}ambridge {U}niv
  {P}ress, 2012.

\end{thebibliography}

\end{document}